\documentclass[runningheads]{llncs}

\newif\iflncs
\newif\ifanon
\newif\ifappendix
\lncsfalse
\anonfalse
\appendixtrue

\usepackage[T1]{fontenc}
\usepackage{amsmath,amsfonts}
\usepackage{mathtools}
\usepackage{enumitem,soul,framed,url}
\usepackage{bbm,bm,xspace,setspace}
\usepackage{xcolor,colortbl,multicol,graphicx,ifthen}
\usepackage[
    bookmarks=false,
    colorlinks=true,
    citecolor=orange,
    linkcolor=blue!55!black,
    urlcolor=blue!55!black,
    backref=page
]{hyperref}
\usepackage{braket,comment,booktabs,tabularx,array}
\usepackage{cleveref}
\usepackage{orcidlink}
\iflncs
    \usepackage{amssymb}
    \usepackage{mathrsfs}
    \usepackage[expansion=false]{microtype}
\else
    \usepackage[a4paper,margin=1in]{geometry}
    \usepackage{newpxtext,newpxmath}
    \usepackage{cite}
\fi

\newcommand{\mca}{\mathcal{A}}

\newcommand{\mci}{\mathcal{I}}
\newcommand{\mco}{\mathcal{O}}

\newcommand{\mcs}{\mathcal{S}}

\newcommand{\mcx}{\mathcal{X}}
\newcommand{\mcy}{\mathcal{Y}}

\newcommand{\msfp}{\mathsf{P}}
\newcommand{\msfq}{\mathsf{Q}}
\newcommand{\msfu}{\mathsf{U}}

\newcommand{\bits}[1][]{\ifthenelse{\equal{#1}{}}{\{0,1\}}{\{0,1\}^{#1}}}
\newcommand{\blocks}[1][]{\bits[n]}
\newcommand{\integers}[1][]{\ifthenelse{\equal{#1}{}}{\mathbb{Z}}{\mathbb{Z}/{#1}\mathbb{Z}}}
\newcommand{\field}[1][2]{\mathbb{F}_{#1}}
\newcommand{\vecsp}[2][2]{\field[#1]^{#2}}
\newcommand{\binsp}[1][n]{\vecsp{#1}}

\newcommand{\gl}[2][2]{\mathrm{GL}(#2,#1)}
\newcommand{\glbinsp}[1][n]{\gl{#1}}
\newcommand{\reals}[1][]{\ifthenelse{\equal{#1}{}}{\mathbb{R}}{\mathbb{R}_{#1}}}
\newcommand{\nonnegreals}{\reals[\geq 0]}
\newcommand{\complexes}[1][]{\ifthenelse{\equal{#1}{}}{\mathbb{C}}{\mathbb{C}_{#1}}}
\newcommand{\indicate}[1][\mci]{\mathsf{1}_{#1}}
\newcommand{\findicate}[1][\mci]{\hat{\mathsf{1}}_{#1}}
\newcommand{\trivchar}[1][]{\ifthenelse{\equal{#1}{}}{\hat{\mathbf{1}}}{\hat{\mathbf{1}}^{(#1)}}}
\newcommand{\dual}[1][G]{\widehat{#1}}
\newcommand{\supp}{\operatorname{supp}}
\newcommand{\unimes}[1][]{\ifthenelse{\equal{#1}{}}{\msfu}{\msfu_{#1}}}
\newcommand{\unidens}[1][\mci]{\mu_{#1}}
\newcommand{\funidens}[1][\mci]{\widehat{\mu_{#1}}}
\newcommand{\rp}{\pi}
\newcommand{\rf}{\mathsf{g}}
\newcommand{\sop}[1][k]{\rp^{#1}_{+}}

\newcommand{\compdist}[3]{\ensuremath{\left\| {#2} - {#3} \right\|_{#1}}}
\newcommand{\pr}[2][]{\ifthenelse{\equal{#1}{}}{\Pr\left({#2}\right)}{\Pr_{#1}\left({#2}\right)}}
\newcommand{\floor}[1]{\lfloor{#1}\rfloor}
\newcommand{\ceil}[1]{\lceil{#1}\rceil}

\ifanon
\newcommand{\AJ}[1]{}
\newcommand{\RB}[1]{}
\newcommand{\XG}[1]{}
\newcommand{\MergeNote}[1]{}
\else
\newcommand{\AJ}[1]{{\color{green!50!black}{\textbf{[Ashwin: #1]}}}}
\newcommand{\RB}[1]{{\color{magenta}{\textbf{[Ritam: #1]}}}}
\newcommand{\XG}[1]{{\color{cyan!50!black}{\textbf{[Xiaoning: #1]}}}}
\newcommand{\MergeNote}[1]{{\color{blue!55!black}{\textbf{[Merge: #1]}}}}
\fi
\makeatletter
\let\NAT@parse\undefined
\makeatother
\renewcommand*{\backref}[1]{}
\renewcommand*{\backrefalt}[4]{%
    \ifcase #1 \textsc{(not cited.)}%
    \or \textsc{\textit{(cited on p.~#2.)}}%
    \else \textsc{\textit{(cited on pp.~#2.)}}%
    \fi
}

\ifanon
\author{}
\institute{}
\else
\author{%
    Ritam Bhaumik\inst{1}\orcidlink{0000-0002-2883-4870}%
    \and%
    Chun Guo\inst{2,3,4}\orcidlink{0000-0002-8520-6301}%
    \and%
    Xiaoning Guo\inst{2,3,4}\orcidlink{0009-0004-6091-9630}%
    \and%
    Ashwin Jha\inst{5}\orcidlink{0000-0001-5957-2837}%
}
\institute{
    CRC, TII, Abu Dhabi, UAE\\
    \email{\textcolor{magenta!90!black}{bhaumik.ritam@gmail.com}}
    \and
    School of Cyber Science and Technology;\\
    \and
    State Key Laboratory of Cryptography and Digital Economy Security;\\
    \and
    Key Laboratory of Cryptologic Technology and Information Security of Ministry of Education,\\
    Shandong University, Qingdao, China\\
    \email{\textcolor{magenta!90!black}{chun.guo.sc@gmail.com,xxiaoningguo@gmail.com}}
    \and
    University of Wuppertal, Wuppertal, Germany\\
    \email{\textcolor{magenta!90!black}{ashwin.jha@outlook.de}}
}
\authorrunning{R. Bhaumik, C. Guo, X. Guo, A. Jha}
\fi

\title{Indistinguishability of Sum of Permutations\\
{\normalsize{A Fourier Analytic Route to Classical and Quantum Security}}}
\titlerunning{Indistinguishability of Sum of Permutations}

\begin{document}
\maketitle

\begin{abstract}
We study classical and quantum indistinguishability of sums of independent random permutations and related transformations from permutations to functions. Let $G$ be a finite abelian group of order $N$, and let $\sop[k](x)=\rp_1(x)+\cdots+\rp_k(x)$ for $k\geq2$ independent uniform random permutations of $G$. We give a unified Fourier analytic treatment in which the construction is represented by its probability density and a distinguisher by its acceptance function, with the classical and quantum query models imposing different restrictions on the Fourier support of the latter.

Classically, we obtain the bound $O_k(q/N^{k-1/2})$ for every $q<N$, and refine it below the birthday threshold to $O_k(q^2/N^k)$. In the quantum model, a simulation argument gives $O_k(N^{-(k-3/2)})$ for $q\leq(N-1)/2$, while Fourier interpolation gives concrete finite bounds up to $q\leq4N/15$ and the query-dependent bounds
\[
    O\!\left(\min\left\{N^{-1/2},\frac{q^3}{N^2}+\frac1N\right\}\right),
    \qquad
    O_k\!\left(\min\left\{\frac{q^3}{N^k},N^{-(k-3/2)}\right\}\right),
\]
for $k=2$ and $k \geq 3$, respectively, throughout $1\leq q\leq(N-1)/2$. For $q = 1$, the first bound sharpens to $O(N^{-2})$. Over $G=\mathbb F_2^n$, a one-query Fourier attack matches the order of our one-query bound, while an $N/2$-query parity attack with advantage $1/2$ shows that our bounds reach the constant-advantage query threshold.

We further study two variants of sum of permutations over binary vector spaces. First, we allow arbitrary surjective linear postprocessing, which includes truncation, and obtain classical and quantum bounds that retain the output-size dependence. Second, we analyse Dinur's variable-output single-permutation construction, $\mathsf{LXoP}$, for every fixed output width, and derive its classical and quantum security bounds; for one- and two-block outputs, we give concrete quantum security bounds.

\keywords{sum of permutations, LXoP, Fourier analysis, quantum security, PRP-to-PRF conversion}
\end{abstract}

\section{Introduction}
Block ciphers, or pseudorandom permutation (PRP) candidates, are among the most thoroughly studied primitives in symmetric-key cryptography, while many applications~\cite{RogawayS2006:SIV:EUROCRYPT,Chen2022:KDF:NIST-SP,Nandi2024:HMAC-NMAC:INDOCRYPT} require a pseudorandom function (PRF). Apart from a few examples like \textsf{SURF} \cite{Bernstein1997:SURF:ONLINE}, \textsf{SipHash} \cite{AumassonB2012:SIPHASH:INDOCRYPT}, and some \textsf{AES}-based constructions~\cite{MenninkN2017:AES-PRF:TOSC,BanikILMS2021:ORTHROS:TOSC,FlorezGutierrezGLST2024:ZIP-AES:ASIACRYPT}, relatively few dedicated PRF candidates have been proposed or analysed. This makes permutation-to-function transformation a natural route to pseudorandom function designs. It is well-known~\cite{Rao1966:WOR:RISI,Freedman1977:WOR:JASA,BellareKR1994:CBC-MAC:CRYPTO,ChangN2008:PRP-PRF:EPRINT} that a random permutation itself can be distinguished from a random function by looking for collisions, with constant advantage after about $\surd N$ classical queries, where $N$ denotes the ambient group size. A natural question is therefore how to combine a small number of random permutation calls so that the resulting function remains indistinguishable from random well beyond the birthday bound.

In this line of research, three simple constructions are the sum of permutations~\cite{BellareKR1998:PRP-to-PRF:EUROCRYPT}, encrypted Davies-Meyer~\cite{CogliatiS2016:EDM:CRYPTO} and its dual~\cite{MenninkN2017:EDMD:CRYPTO}. In particular, the sum of permutations construction has received a lot of attention from the community. Let $G$ be a finite abelian group of order $N$, written additively, and let $\rp_1,\ldots,\rp_k$ be independent uniform random permutations of $G$. For $k\geq2$, define the sum of $k$ permutations by
\[
    \sop[k](x)\coloneq\rp_1(x)+\cdots+\rp_k(x).
\]
When $G=\binsp$, this is the usual XOR-of-permutations construction. The two-permutation case goes back to Bellare et al.~\cite{BellareKR1998:PRP-to-PRF:EUROCRYPT}, and the $k$-permutation generalisation~\cite{Lucks2000:SOP:EUROCRYPT} is due to Lucks. The construction is simple to evaluate: the permutation calls are independent and can be made in parallel, followed by a group addition. Its security analysis, however, has proved considerably more involved, and has occupied the community for more than two decades.

\paragraph{Classical security of sum of permutations.}
A long line of work~\cite{BellareI1999:TOOL:EPRINT,Lucks2000:SOP:EUROCRYPT,Patarin2008:SOP:ICITS,CogliatiLP2014:XORP:FSE,MenninkP2015:XORP:ACNS,Eberhard2017:SOP:arXiv,DaiHT2017:Chi-Squared:CRYPTO,DuttaNS2022:Mirror-Theory:IEEE-IT,CogliatiP2020:MT:EPRINT,Dinur2024:SOP:EUROCRYPT} has investigated how much security this simple transformation provides. Adding independent permutation outputs allows collisions, so the absence of collisions no longer distinguishes the construction from a random function. The outputs nevertheless remain dependent, because each constituent permutation samples without replacement. An analysis must bound the effect of these dependencies on the joint distribution of all queried outputs, including when the number of queries is close to $N$. Over the years, different proof techniques have been used to prove this result when $G=\binsp$, with various degrees of success. The first verifiable proof~\cite{Lucks2000:SOP:EUROCRYPT} was given by Lucks who employed the game-playing technique to prove PRF advantage of $O(q^3/N^{2})$, where $q$ is the number of queries. Bellare and Impagliazzo, and later, Dai et al. devised new statistical techniques~\cite{BellareI1999:TOOL:EPRINT,DaiHT2017:Chi-Squared:CRYPTO} to show bounds of $O_n(q/N)$ and $O(\sqrt{q^{3}/N^{3}})$, respectively. Patarin's mirror theory~\cite{Patarin2008:SOP:ICITS,Patarin2010:MT:EPRINT,Patarin2016:MT:EPRINT} has been the main tool to study the problem from a combinatorial viewpoint, and a verified proof~\cite{DuttaNS2022:Mirror-Theory:IEEE-IT} of the mirror theory result yields a bound of $O(q^2/N^2)$ for $q \leq N/17$ (see also~\cite{CogliatiP2020:MT:EPRINT} for the single-permutation variant). On the other hand, for $\surd{N} \leq q < N$, the best-known attack~\cite{Patarin2013:Attack-SOP:ACNS} due to Patarin only achieved an advantage of $\Omega(\sqrt{q^2/N^3})$, leaving a tightness gap. This gap was recently filled~\cite{Dinur2024:SOP:EUROCRYPT} by Dinur who established a bound of $O(\sqrt{q^2/N^3})$ for $q < N/2$ and $N \geq 1000$, which is tight for $q \geq \surd N$. Shortly after, it came to light that Eberhard had already proved~\cite{Eberhard2017:SOP:arXiv} the same bound for any finite abelian group and for every $q < N$, in a work~\cite{Eberhard2017:SOP:arXiv} that was largely unknown in the cryptography community. Both Eberhard and Dinur applied Fourier analytic techniques to study the problem. Dinur also studied the sum of $k$ permutations construction, and obtained a bound of $q/N^{k-1/2}$ for $q < N/2$ queries.

\paragraph{Variants of sum of permutations.} Several variants of the sum of permutations have been proposed to obtain shorter outputs or to generate several output blocks from a single permutation. For shorter outputs, summation can be combined with the \textit{truncation} technique~\cite{HallWKS1998:PRP-to-PRF:CRYPTO}. Let $G=\binsp$, with $|G|=N$, and let $\tau\colon G\to H$ be a surjective linear map, where $|H|=M\leq N$. The corresponding construction is
\[
    x\longmapsto\tau\bigl(\pi_1(x)+\pi_2(x)\bigr),
\]
where $\pi_1$ and $\pi_2$ are independent random permutations of $G$. When $\tau$ is a coordinate projection, Choi et al.~\cite{ChoiKLL2022:TSOP:ASIACRYPT} proved a classical distinguishing bound of $O(q\sqrt{qM}/N^2)$ for $q\leq N/4$.

For outputs in the larger range $G^w$, Iwata's $\mathsf{XORP}[w]$ construction~\cite{Iwata2006:CENC:FSE} uses a single random permutation $\pi\colon G\to G$ and is given by
\[
    x\longmapsto
    \bigl(\pi(0\parallel x)+\pi(1\parallel x),\ldots,
          \pi(0\parallel x)+\pi(w\parallel x)\bigr).
\]
Here the prefixes are encoded in $s=\lceil\log_2(w+1)\rceil$ bits and $x\in\binsp[n-s]$. The construction produces $w$ output blocks using $w+1$ permutation calls. For fixed $w$, its classical distinguishing advantage is $O_w(q/N)$~\cite{Patarin2016:MT:EPRINT,IwataMV2016:CENC:EPRINT,BhattacharyaN2018:VOL-XORP:TOSC}. Dinur~\cite{Dinur2025:More-SOP:EUROCRYPT} introduced the $\mathsf{LXoP}$ family, which also produces $w$ output blocks from $w+1$ calls, but combines consecutive permutation outputs using a linear automorphism $\sigma\colon G\to G$:
\[
    x\longmapsto
    \bigl(\pi(0\parallel x)+\sigma(\pi(1\parallel x)),\ldots,
          \pi((w-1)\parallel x)+\sigma(\pi(w\parallel x))\bigr).
\]
For $w\in\{1,2\}$, he proved a classical distinguishing advantage bound of $O(q/N^{3/2})$ for $q \leq N/16$ and $q \leq N/32$, respectively, and $N \geq 1024$.

\paragraph{Quantum security of sum of permutations.}
The quantum setting raises a different difficulty. For these input-symmetric constructions, a classical distinguisher can be reduced to observing the outputs on a fixed set of queried inputs. This remains true even when the queries are adaptive, as the labels of fresh inputs do not affect the distribution of the next answer. A quantum distinguisher may instead query a superposition of inputs and combine information from different positions by interference. Its behaviour is therefore not determined by the output distribution on any one fixed set of $q$ inputs.

Quantum security of XOR-of-PRP constructions has been studied~\cite{MenninkS2017:XORP-Q:PQCRYPTO} by Mennink and Szepieniec, including key-recovery attacks and a security analysis. There, the construction oracle is queried classically, while the adversary may use offline quantum computation. This is the so-called Q1 security model. In this paper, we study information-theoretic security in the Q2 model, where the adversary can make quantum superposition queries to the construction. The adversary is computationally unbounded, and we count only queries to the construction. Thus the question concerns what can be learned from the oracle itself, independently of the time complexity of processing its answers.

The sum of permutations has a global constraint that makes this question particularly interesting. Every complete truth table satisfies
\[
    \sum_{x\in G}\sop[k](x)=k\sum_{y\in G}y.
\]
For a uniform random function, the sum on the left is uniform on $G$, so the same equality holds with probability exactly $1/N$. The complete-table distribution of $\sop[k]$ therefore has statistical distance at least $1-1/N$ from uniform. At the same time, the classical bounds show that every proper restriction is close to uniform. How much of this dependence can a quantum algorithm detect with few queries? At what query count does the checksum become visible?

Our bounds give a sharp answer to the second question in the binary setting. For every fixed $k\geq2$, the advantage tends to zero for all $q\leq N/2-1$, whereas an $N/2$-query parity algorithm has advantage $1/2$. For $q < N/2$, we exploit the well-known fact~\cite{BealsBCMW1998:POLY-METHOD:FOCS,Zhandry2012:2q-wise:CRYPTO} that the acceptance probability of a $q$-query quantum adversary depends only on the joint distributions of oracle outputs on all sets of at most $2q$ inputs. This allows us to focus on dependencies among small collections of outputs, even though each quantum query may involve a superposition over the entire domain.

\subsection{Our Results}
We develop a common analytic framework for classical and quantum distinguishing games, and apply it to sums of independent permutations, linear-output variants, and a variable-output construction based on a single permutation. In what follows, all bounds are information-theoretic, and the number $k$ of independent permutations, or the output width $w$, is fixed in the corresponding asymptotic statements.

\paragraph{Sum of independent permutations.}
For every finite abelian group $G$ of order $N$, we prove the classical bounds
\[
    \compdist{(q)}{\sop[k]}{\rf}
    \leq
    \begin{cases}
        0, & 0\leq q\leq1,\\
        O_k(q^2/N^k), & 2\leq q\leq\sqrt N,\\
        O_k(q/N^{k-1/2}), & \sqrt N<q<N,
    \end{cases}
\]
where $\rf\colon G\to G$ is a uniform random function. This gives a single proof over arbitrary finite abelian groups, and refines the bound when $q < \surd N$. For $k = 2$, the bound above that threshold is Eberhard's bound; our proof extends it to every fixed $k \geq 2$.

For quantum queries, we obtain
\[
\begin{aligned}
    \compdist{(q)}{\ket{\sop[2]}}{\ket\rf}
    &\leq O\!\left(\min\left\{
        N^{-1/2},\frac{q^3}{N^2}+\frac1N
    \right\}\right),\\
    \compdist{(q)}{\ket{\sop[k]}}{\ket\rf}
    &\leq O_k\!\left(\min\left\{
        \frac{q^3}{N^k},N^{-(k-3/2)}
    \right\}\right)\qquad(k\geq3),
\end{aligned}
\]
throughout $1\leq q\leq(N-1)/2$. For $k=2$ and $q=1$, the bound improves to $O(N^{-2})$. The bounds involving $q^3$ describe the behaviour in the range $q < \surd N$, while the bounds independent of $q$ remain small throughout the stated range. We also give concrete bounds for $N \geq 1024$ and $q \leq 4N/15$. 

Over $G=\binsp$, a one-query algorithm has advantage $1/(2(N-1)^k)$, matching the order of our upper bound at one query. For $n\geq2$, the $N/2$-query parity attack, together with the uniform bound of $O_k(N^{-(k-3/2)})$ throughout $q \leq N/2-1$, establishes the exact number of queries required to achieve a constant advantage.

\paragraph{Linear-output-postprocessing.}
For $G=\binsp$, let $\tau\colon G\to H=\binsp[m]$ be surjective and linear, and put $M=2^m$. We prove a classical bound $O_k(q\sqrt M/N^k)$ for $\tau\sop[k]$ throughout $q < N$. In particular, for the truncated construction considered~\cite{ChoiKLL2022:TSOP:ASIACRYPT} by Choi et al., our bound improves their single-user bound $O(q\sqrt{qM}/N^2)$ by a factor $\sqrt q$. We also obtain quantum bounds that retain the same dependence on $M$, both for small query counts and throughout $q\leq(N-1)/2$.

\paragraph{Variable output from a single permutation.}
We analyse Dinur's $\mathsf{LXoP}$ family for every fixed width $w$, under the admissibility condition stated in Section~\ref{sec:lxop}. The construction produces $w$ output blocks using $w+1$ permutation calls. We obtain classical advantage $O_w(q/N^{3/2})$ and quantum advantage $O_w(N^{-1/2})$ for query counts up to a sufficiently small constant fraction of the construction's input-domain size. We also give concrete quantum bounds for $w=1,2$.

\subsection{Technical Overview and Organisation}
\paragraph{Densities and query restrictions.}
In Section~\ref{sec:framework}, we represent the construction by its probability density $\varphi$ and a distinguisher by its acceptance probability function $a$. Relative to the reference product measure, the distinguishing advantage is then given by the inner product $|\langle a,\varphi-1\rangle|$. To analyse this inner product, we first choose an orthonormal basis for functions on the output set, with respect to the reference output distribution, containing the constant function $\indicate[]$. We use this same basis for each truth-table entry $f(x)$. Taking products of these basis functions, with one factor for each input, gives an orthonormal basis for functions of the complete truth-table. We expand $a$ and $\varphi$ in this product basis and group the terms according to the set $S$ of inputs at which the factor is nonconstant. These groups give the orthogonal \textit{pure-support} components. A component indexed by $S$ depends only on the outputs at inputs in $S$, and averaging over any one of these outputs, with the others fixed, gives zero. Orthogonality then expresses the distinguishing advantage as the absolute value of a sum of inner products between $a$- and $\varphi$-components with matching supports.

This decomposition is defined for an arbitrary product measure, and the grouped components do not depend on the particular basis chosen. Under the uniform measure on a finite abelian group, we may choose the group characters as our basis. The components are then precisely the terms of the Fourier expansion grouped by their support. For the constructions studied in this paper, input symmetry lets us restrict a classical distinguisher to a fixed set $Q$ of $q$ inputs. Only supports contained in $Q$ can contribute to its advantage. For a quantum distinguisher, Lemma~\ref{lem:fourier-zhandry-method}, shows that only components supported on at most $2q$ inputs can contribute, although different components may involve different inputs. Since $0\leq a\leq1$, Cauchy-Schwarz bounds the remaining inner product by one half of the $L_2$ norm of the corresponding density projection. This yields the Fourier interpolation bound of Theorem~\ref{thm:quantum-fourier}.

\paragraph{Injection moments and the first nonzero level.}
Section~\ref{sec:sop} applies this framework to sums of independent permutations. On a fixed set of distinct inputs, the outputs of one permutation form a uniformly random injection. Summing independent permutation outputs convolves their densities, so the Fourier coefficients of the sum are powers of those of one injection. Eberhard's fourth-moment bound~\cite{Eberhard2017:SOP:arXiv}, together with a bound on the largest nonconstant coefficient, gives the global classical bound for every $k\geq2$. We also derive a higher-moment bound for the Fourier coefficients of the injection density, which controls the components involving three or more inputs. In the classical analysis, we apply it to the joint output distribution on $q$ fixed inputs. In the quantum analysis, we apply it to the corresponding components of the distribution of the complete truth table, retaining only those involving at most $2q$ inputs. The detailed coefficient bounds and moment calculations are given in Appendices~\ref{sec:explicit-max-moment} and~\ref{sec:extended-injection-proof}.

We call the collection of components supported on exactly $t$ inputs \textit{level $t$}. Since the one-point marginals are uniform, level one vanishes. To obtain a more precise bound at small query counts, we treat level two separately before applying Cauchy-Schwarz. This component is a scalar multiple of the difference between the number of output collisions and its expected value under a uniform random function. The classical proof bounds its contribution in $L_1$ by summing over the queried pairs. For quantum queries, Appendix~\ref{sec:sop-query-dependent} bounds the correlation of this collision count with the acceptance function. We obtain an $O(q^3)$ bound using Zhandry's small-range method~\cite{Zhandry2012:SRD:FOCS}: the centred collision count is the first-order deviation from a uniform function when outputs are sampled through a large but finite intermediate range. The common moment estimate bounds the remaining levels.

\paragraph{Simulation from the checksum.}
To reach every quantum query count below $N/2$, in Section~\ref{sec:simulation}, we use Eberhard's bound along with a checksum-based simulation argument. Any $N-1$ outputs of $\sop[k]$ determine the last output. Start instead with independent uniform values on $N-1$ inputs and complete the table using the same checksum. The resulting function is uniform among all functions with that checksum and is $(N-1)$-wise independent. It is therefore indistinguishable from a uniform random function whenever $2q\leq N-1$. It remains to compare the sum of permutations with this completed reference function. Both complete tables are determined by their restrictions to the chosen $N-1$ inputs, so their statistical distance is bounded by the classical estimate on those inputs. This gives the quantum bound $O_k(N^{-(k-3/2)})$ throughout the required range.

\paragraph{Linear-output-postprocessing and shared permutation calls.}
Section~\ref{sec:linear-output} considers applying a surjective linear map $\tau$ to each output. A Fourier character of the projected output $\tau(y)$ can also be viewed as a character of the original output $y$: its index $\eta$ is replaced by $\tau^\top\eta$. Thus only indices in the subspace $\operatorname{Im}(\tau^\top)$, which has size $M$, enter the analysis. To bound their contribution, we use the fact that applying the same invertible linear map to every output of a random injection preserves its distribution. This symmetry lets us average over subspaces of size $M$ and compare the Fourier moments restricted to such a subspace with the unrestricted moments. This is where the output size $M$ enters the bounds. For the refined quantum estimate, we also distinguish tuples of character indices according to the dimension of their span. The probability that all indices lie in a random subspace of size $M$ depends on this dimension. Keeping this dependence gives a sharper bound for the higher levels.

For the $\mathsf{LXoP}$ family studied in Section~\ref{sec:lxop}, each construction output consists of $w$ blocks obtained from $w+1$ permutation calls. Expanding a character of the construction output gives a product of characters of the underlying permutation outputs, with one index for each call. These permutation outputs form a random injection, so we can again use its Fourier coefficients. The shared calls force the indices associated with each construction input to satisfy a linear relation. We exploit this relation when applying a Dinur-style recursive identity~\cite{Dinur2025:More-SOP:EUROCRYPT} that replaces an injection coefficient by a sum of coefficients involving fewer permutation outputs. To obtain a bound on the sum of squared coefficients, we must control how many original index tuples can give the same tuple after a reduction. By choosing the reductions in a suitable order, the linear relation and the admissibility condition let us recover the removed index. This controls the number of times each reduced coefficient is counted and gives the required second-moment bound. Summing over the supports permitted in each query model then gives the classical and quantum bounds.
\section{Fourier Analytic Framework}
\label{sec:framework}
This section first fixes the notation used throughout the paper and then develops a unified analytic framework for classical and quantum distinguishing games. The framework is inspired by the Fourier analytic methods employed by Eberhard~\cite{Eberhard2017:SOP:arXiv} and Dinur~\cite{Dinur2024:SOP:EUROCRYPT,Dinur2025:More-SOP:EUROCRYPT}.

\paragraph{Notation.}
For a positive integer $n$, write $[n] \coloneq \{1,\ldots,n\}$ and $[0] \coloneq \varnothing$. For $0\leq k\leq n$, let $(n)_k\coloneq n(n-1)\cdots(n-k+1)$, with $(n)_0=1$. Empty sums and products are understood as zero and one, respectively. The constants in $O_k(\cdot)$ and $O_w(\cdot)$ may depend only on the indicated fixed parameter.

Let $\msfq$ be a probability measure on a non-empty finite set $\Omega$. We write $x\sim\msfq$ to mean that $x$ is drawn according to $\msfq$, and write $\supp(\msfq)$ for its support. We denote the uniform measure on $\Omega$ by $\unimes[\Omega]$, abbreviated to $\unimes$ when $\Omega$ is clear. A \textit{density} relative to $\msfq$ is a function $\varphi \colon \supp(\msfq) \to \nonnegreals$ satisfying $\mathbf E_{x\sim\msfq}[\varphi(x)]=1$. We sometimes identify $\varphi$ with the probability distribution that assigns mass $\varphi(z)\msfq(z)$ to each $z\in\supp(\msfq)$, when the reference measure is clear. The reference measure $\msfq$ itself has density $\indicate[]$.

Let $L_2(\msfq)$ be the space of complex-valued functions on $\supp(\msfq)$. For $h,g\in L_2(\msfq)$ and $1\leq p<\infty$, write
\[
    \langle h,g\rangle_{\msfq}
    \coloneq
    \mathbf E_{x\sim\msfq}
    [h(x)\overline{g(x)}],
    \qquad
    \|h\|_{p,\msfq}
    \coloneq
    \bigl(\mathbf E_{x\sim\msfq}[|h(x)|^p]\bigr)^{1/p}.
\]
For densities $\varphi$ and $\psi$ relative to $\msfq$, their statistical distance is given by
\[
    \mathrm{TVD}_\msfq(\varphi,\psi) \coloneq \frac12\|\varphi-\psi\|_{1,\msfq},
\]
whence by Cauchy-Schwarz,
\begin{equation}\label{eq:density-norm-tower}
    \mathrm{TVD}_\msfq(\varphi,\psi)
    \leq
    \frac12\|\varphi-\psi\|_{2,\msfq}.
\end{equation}

\subsection{Product Spaces and Support Decomposition}
\label{subsec:abstract-product-space}
We recall the standard support decomposition on a finite product space (see~\cite[Sections~8.1 and~8.3]{ODonnell2014:AOBF:Book}), which requires no algebraic structure on the output set.

Let $\mcx$ and $\mcy$ be finite sets, with $\mcy\neq\varnothing$. Fix a probability measure $\msfp$ on $\mcy$ and equip $\mcy^{\mcx}$ with the product measure $\msfp^{\mcx}$. Unless stated otherwise, densities, expectations, inner products, and norms on this space are taken with respect to $\msfp^{\mcx}$, whose subscript is omitted.

Let
\(
    L_2^0(\msfp)
    \coloneq
    \left\{
        g\in L_2(\msfp):
        \mathbf E_{y \sim \msfp}[g(y)]=0
    \right\}.
\)
Choose an orthonormal basis $B_0$ of $L_2^0(\msfp)$ and put $B=\{\indicate[]\}\cup B_0$. For $\lambda=(\lambda_x)_{x\in\mcx}\in B^{\mcx}$, define
\[
    \Phi_\lambda(f)
    \coloneq
    \prod_{x\in\mcx}\lambda_x(f(x)),
    \qquad
    \supp(\lambda)
    \coloneq
    \{x\in\mcx:\lambda_x\neq\indicate[]\}.
\]
We call $\supp(\lambda)$ the \emph{pure support} of $\Phi_\lambda$. The functions $\Phi_\lambda$ form an orthonormal basis of $L_2(\msfp^{\mcx})$.

For $S\subseteq\mcx$, let $\Pi_S$ be the orthogonal projection onto the subspace
\[
    \mathcal H_S
    \coloneq
    \operatorname{Span}
    \{\Phi_\lambda:\supp(\lambda)=S\}.
\]
Then
\begin{equation}\label{eq:abstract-support-decomposition}
    L_2(\msfp^{\mcx})
    =\bigoplus_{S\subseteq\mcx}\mathcal H_S,
    \qquad
    h=\sum_{S\subseteq\mcx}\Pi_Sh,
    \qquad
    \|h\|_2^2
    =\sum_{S\subseteq\mcx}\|\Pi_Sh\|_2^2.
\end{equation}
The spaces $\mathcal H_S$ depend only on the orthogonal decomposition $L_2(\msfp) = \operatorname{Span}\{1\}\oplus L_2^0(\msfp)$, and not on the particular basis chosen for $L_2^0(\msfp)$. The space $\mathcal H_{\varnothing}$ consists of the constant functions, so $\Pi_{\varnothing}h=\mathbf E[h]\cdot\indicate[]$. In particular, $\Pi_{\varnothing}\varphi=\indicate[]$ for every density $\varphi$. If the distribution represented by $\varphi$ has marginal $\msfp$ at $x$, then
\begin{equation}\label{eq:vanishing-first-level}
    \Pi_{\{x\}}\varphi
    =\mathbf E[\varphi\mid f(x)]-\indicate[]=0.
\end{equation}
For integers $t,d\geq0$, write
\[
    h_{=t}\coloneq\sum_{|S|=t}\Pi_Sh,
    \qquad
    \Pi_{\leq d}h
    \coloneq
    \sum_{\substack{S\subseteq\mcx\\|S|\leq d}}\Pi_Sh.
\]
We shall repeatedly use the following coordinate dependence property of the product basis. A function $h\in L_2(\msfp^{\mcx})$ depends only
on the coordinates in $Q\subseteq\mcx$ if and only if
\begin{equation}\label{eq:coordinate-dependence-support}
    \Pi_Sh=0
    \qquad\text{for every }S\not\subseteq Q.
\end{equation}
If these projections vanish, the expansion of $h$ contains only basis functions supported in $Q$, which depend only on $Q$. Conversely, if $h$ depends only on $Q$, every coefficient involving a mean zero basis function at a coordinate outside $Q$ vanishes after averaging over that coordinate.

\subsection{Distinguishing Advantage and Query Support Cutoffs}
\label{subsec:support-distinguishing}
For any function $f \in \mcy^\mcx$, we denote the classical oracle corresponding to $f$ by the function itself, while the quantum oracle is denoted by $\ket{f}$.

Let $\mco_0$ and $\mco_1$ be oracles sampled according to densities $\varphi$ and $\psi$ relative to $\msfp^{\mcx}$, respectively, and accessed in the same model. For an oracle algorithm $\mca$ and a fixed function $f$, let $a(f)\in[0,1]$ be the probability that $\mca$ outputs $1$ given oracle access to $f$. Define
\[
    \compdist{\mca}{\mco_0}{\mco_1}
    \coloneq
    \left|
        \Pr[\mca^{\mco_0}=1]
        -
        \Pr[\mca^{\mco_1}=1]
    \right|.
\]
We write $\compdist{(q)}{\mco_0}{\mco_1}$ for the supremum over all algorithms making at most $q$ queries in the indicated access model. We work in the information-theoretic setting: algorithms may be computationally unbounded, and only oracle queries are counted. Then,
\begin{equation} \label{eq:oracle-to-density} 
    \compdist{\mca}{\mco_0}{\mco_1}
    =
    \left|\mathbf E[a\varphi]-\mathbf E[a\psi]\right|
    =
    \left|
        \langle a,\varphi-\psi\rangle
    \right|
    \leq
    \mathrm{TVD}_{\msfp^{\mcx}}(\varphi,\psi),
\end{equation}
where the final inequality follows by writing $\langle a,\varphi-\psi\rangle = \mathbf E[(a-\tfrac12)(\varphi-\psi)]$, since $\mathbf E[\varphi-\psi]=0$ and $|a-\tfrac12|\leq\tfrac12$, so that the inner product is at most $\frac{1}{2}\|\varphi-\psi\|_1$. For the remainder of this paper, we assume $\psi = \indicate[]$.

We next apply the support decomposition~\eqref{eq:abstract-support-decomposition} to the inner product in~\eqref{eq:oracle-to-density}. The constant component of $\varphi-1$ is zero, and its projection onto every non-empty support $S$ is $\Pi_S\varphi$. Orthogonality therefore gives
\begin{equation}\label{eq:supportwise-advantage}
    \compdist{\mca}{\mco_0}{\mco_1}
    =\left|\sum_{\varnothing\neq S\subseteq\mcx}
        \langle\Pi_Sa,\Pi_S\varphi\rangle\right|.
\end{equation}
For $j\geq1$, put
\[
    \Delta_j(a,\varphi)
    \coloneq
    \sum_{\substack{S\subseteq\mcx\\|S|=j}}
        \langle\Pi_Sa,\Pi_S\varphi\rangle.
\]
The first two levels will often be treated directly, while the remaining levels are bounded in $L_2$.

\begin{lemma}\label{lem:layered-fourier-bound}
    Let $\mco_0 \sim \varphi$ and $\mco_1 \sim \indicate[]$ be function oracles in the same access model. Suppose $\Pi_Sa = 0$ whenever $|S| > d$. Then, for every integer $0 \leq r \leq d$,
    \[
        \compdist{\mca}{\mco_0}{\mco_1}
        \leq
        \sum_{j=1}^{r} |\Delta_j(a,\varphi)|
        +\frac12
        \left(
            \sum_{\substack{S\subseteq\mcx\\r+1\leq|S|\leq d}}
                \|\Pi_S\varphi\|_2^2
        \right)^{1/2}.
    \]
    If $a$ depends only on the coordinates in $Q\subseteq\mcx$, the sum in the remainder may be restricted to $S\subseteq Q$.
\end{lemma}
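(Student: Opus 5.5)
Starting from the supportwise identity~\eqref{eq:supportwise-advantage}, the vanishing of $\Pi_Sa$ for $|S|>d$ kills every term with $|S|>d$. Grouping the remaining supports by size and applying the triangle inequality gives
\[
    \compdist{\mca}{\mco_0}{\mco_1}
    \leq
    \sum_{j=1}^{r}|\Delta_j(a,\varphi)|
    +\Bigl|\sum_{\substack{S\subseteq\mcx\\ r+1\leq|S|\leq d}}\langle\Pi_Sa,\Pi_S\varphi\rangle\Bigr|.
\]
So everything reduces to bounding the remainder term by $\tfrac12$ times the stated $L_2$ norm.

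For the remainder, put $h\coloneq\sum_{r+1\leq|S|\leq d}\Pi_S\varphi$. By orthogonality of the spaces $\mathcal H_S$ in~\eqref{eq:abstract-support-decomposition}, the remainder sum equals $\langle a,h\rangle$. The key point is not to use $\|a\|_2\leq1$ directly, since that would lose the factor $\tfrac12$. Instead, since every $S$ appearing in $h$ is nonempty, $\Pi_{\varnothing}h=\mathbf E[h]=0$. Hence $\langle a,h\rangle=\langle a-\tfrac12\indicate[],h\rangle$. Because $0\leq a\leq1$, we have $\|a-\tfrac12\|_2\leq\|a-\tfrac12\|_\infty\leq\tfrac12$. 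Cauchy--Schwarz then gives $|\langle a,h\rangle|\leq\tfrac12\|h\|_2$. Finally, orthogonality yields $\|h\|_2^2=\sum_{r+1\leq|S|\leq d}\|\Pi_S\varphi\|_2^2$, which is the claimed bound. The case $r=d$ is trivial, since then the remainder is empty. The case $r=0$ is the same argument with no $\Delta_j$ terms.

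For the refinement, suppose $a$ depends only on the coordinates in $Q$. Then~\eqref{eq:coordinate-dependence-support} gives $\Pi_Sa=0$ for every $S\not\subseteq Q$. We may therefore drop those $S$ from the remainder sum before defining $h$, and rerun the same argument with $h$ summed over $S\subseteq Q$ with $r+1\leq|S|\leq d$.

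The only step needing care is the shift from $a$ to $a-\tfrac12\indicate[]$. This requires $h$ to have mean zero, which is why the remainder must exclude $S=\varnothing$; that holds here because $r\geq0$ forces $|S|\geq1$. Everything else is routine use of orthogonality.
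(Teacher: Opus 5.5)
Your proof is correct and is essentially the paper's argument: split off the first $r$ levels via~\eqref{eq:supportwise-advantage} and the triangle inequality, then bound the remaining levels by Cauchy--Schwarz with a factor $\tfrac12$ on the $a$-side. The only cosmetic difference is how that factor arises. The paper gets it from $\sum_{S\neq\varnothing}\|\Pi_Sa\|_2^2=\mathbf E[a^2]-\mathbf E[a]^2\leq\tfrac14$. You instead centre $a$ at $\tfrac12$ against the mean-zero $h$, the same trick the paper uses in~\eqref{eq:oracle-to-density}, and both routes give the same constant.
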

\begin{proof}
    By~\eqref{eq:supportwise-advantage}, only non-empty supports contribute to the advantage. Since $\Pi_Sa=0$ for $|S|>d$, we have
    \[
        \langle a,\varphi-1\rangle
        =
        \sum_{j=1}^{r}\Delta_j(a,\varphi)
        +
        \sum_{\substack{S\subseteq\mcx\\r<|S|\leq d}}
            \langle\Pi_Sa,\Pi_S\varphi\rangle.
    \]
    Put $m=\mathbf E[a]$. Since $0\leq a\leq1$,
    \[
        \sum_{\varnothing\neq S\subseteq\mcx}\|\Pi_Sa\|_2^2
        =\|a-m\|_2^2
        =\mathbf E[a^2]-m^2
        \leq m-m^2
        \leq\frac14.
    \]
    By Cauchy-Schwarz,
    \[
        \left|\sum_{\substack{S\subseteq\mcx\\r<|S|\leq d}}
            \langle\Pi_Sa,\Pi_S\varphi\rangle\right|
        \leq
        \frac12
        \left(
            \sum_{\substack{S\subseteq\mcx\\r<|S|\leq d}}
                \|\Pi_S\varphi\|_2^2
        \right)^{1/2}.
    \]
    The triangle inequality gives the claim. If $a$ depends only on $Q$, then~\eqref{eq:coordinate-dependence-support} gives $\Pi_Sa=0$ whenever $S\not\subseteq Q$, so these supports may also be omitted.\qed
\end{proof}

\begin{remark}\label{rem:classical-localisation}
    The constructions considered below are invariant in distribution under input permutations, as is the product reference measure. Thus, for $q\leq|\mcx|$, a classical $q$-query distinguisher may be restricted to any fixed set $Q$ of size $q$: answer its $j$-th fresh query using the $j$-th point of $Q$, and replay answers to repeated queries. Input symmetry preserves the transcript distribution under both oracle distributions. By~\eqref{eq:coordinate-dependence-support}, the resulting acceptance function satisfies $\Pi_Sa_Q=0$ whenever $S\not\subseteq Q$.
\end{remark}

For a $q$-query quantum distinguisher, every nonzero pure-support component of the acceptance probability involves at most $2q$ input coordinates. We use the standard quantum function-oracle model, with a fixed injective encoding of $\mcy$ into a finite abelian group and translation by the encoded output; all applications below instantiate it by the usual addition oracle on a finite abelian output group.

\begin{lemma}\label{lem:fourier-zhandry-method}
Let $\mca$ be a quantum algorithm making at most $q$ queries to a function $f\colon\mcx\to\mcy$, and let $a(f)$ be its acceptance probability. Then
\[
    \Pi_Sa=0
    \qquad\text{whenever }|S|>2q.
\]
\end{lemma}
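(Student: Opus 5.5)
We follow the polynomial method of Beals et al.\ and Zhandry. The idea is to show that $a$ lies in the span of products of ``local'' functions, each involving at most $2q$ coordinates, and that every such product is annihilated by $\Pi_S$ whenever $|S|>2q$.

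\textbf{Step 1: oracle matrix entries.} Encode $\mcy$ injectively into a finite abelian group $K$. The oracle $\ket f$ acts on basis states $\ket{x,z,w}$ (input, response register, workspace) by $\ket{x,z,w}\mapsto\ket{x,z+f(x),w}$. Its matrix entries are therefore
\[
\langle x',z',w'|O_f|x,z,w\rangle=\delta_{x=x'}\,\delta_{w=w'}\,\mathbbm 1[z'-z=f(x)].
\]
Each entry is either $0$ or a function $g(f(x))$ of a single coordinate $f(x)$.

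\textbf{Step 2: the final state.} The algorithm alternates fixed unitaries $U_0,\ldots,U_q$ with $q$ oracle calls. Expanding the final state $U_qO_fU_{q-1}\cdots O_fU_0\ket{0}$ in the computational basis, each amplitude is a finite sum of constants times products $\prod_{i=1}^{q}g_i(f(x_i))$. Here $x_1,\ldots,x_q$ are inputs along the path, not necessarily distinct.

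\textbf{Step 3: the acceptance probability.} The acceptance probability is
\[
a(f)=\sum_{\text{accepting basis states}}|\text{amplitude}|^2.
\]
Multiplying an amplitude by its conjugate shows that $a$ is a finite linear combination of functions $h(f)=\prod_{j=1}^{2q}g_j(f(x_j))$. Grouping equal inputs, each such $h$ equals $\prod_{x\in T}u_x(f(x))$ for some set $T$ with $|T|\le 2q$ and functions $u_x\in L_2(\msfp)$.

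\textbf{Step 4: the pure-support projections.} Fix $h=\prod_{x\in T}u_x(f(x))$. For each $x\in T$, decompose $u_x=c_x\indicate[]+u_x^0$ with $u_x^0\in L_2^0(\msfp)$, and expand $u_x^0$ in $B_0$. Expanding the product writes $h$ as a combination of basis functions $\Phi_\lambda$ with $\supp(\lambda)\subseteq T$. Equivalently, $h$ depends only on the coordinates in $T$, so by \eqref{eq:coordinate-dependence-support} we get $\Pi_Sh=0$ for every $S\not\subseteq T$. Any $S$ with $|S|>2q$ satisfies $S\not\subseteq T$, so $\Pi_Sh=0$. Since $\Pi_S$ is linear, $\Pi_Sa=0$ whenever $|S|>2q$.

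The only step needing care is Step 2: checking that intermediate unitaries and the workspace do not introduce dependence on further coordinates. This holds because the $U_i$ are independent of $f$, so all $f$-dependence enters through the $q$ oracle matrix entries, each of which depends on one coordinate. We expect a straightforward induction on the number of queries to make this bookkeeping clean: after $i$ queries, every amplitude is a combination of products of at most $i$ single-coordinate factors.
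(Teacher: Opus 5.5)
Your proof is correct and follows essentially the same route as the paper. Both arguments observe that each oracle matrix entry depends on a single coordinate $f(x)$, so the acceptance probability is a linear combination of functions of at most $2q$ coordinates, and \eqref{eq:coordinate-dependence-support} then forces $\Pi_Sa=0$ for $|S|>2q$. The only difference is that you expand pure-state amplitudes and square at the end, whereas the paper inducts on density-matrix entries, picking up two coordinates per query. The paper's formulation directly covers mixed initial states, classical randomness and intermediate measurements; your version handles these through the standard purification and deferred-measurement reduction, which you should state explicitly.
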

\begin{proof}
    Write the function oracle as
    \[
        U_f\ket{x,r,z}
        =
        \ket{x}\otimes V_{f(x)}\ket r\otimes\ket z,
    \]
    where $V_y\ket r=\ket{r+\operatorname{enc}(y)}$ translates the answer register by the fixed encoding of $y\in\mcy$, and $z$ denotes the internal registers of the algorithm. For a basis label $u=(x,r,z)$, every matrix entry $\langle u|U_f|u'\rangle$ therefore depends on $f$ only through the coordinate $f(x)$.
    
    Let $\rho_t(f)$ denote the density matrix of the algorithm after the $t$-th query and the subsequent oracle-independent operation, with $\rho_0$ the initial state. We prove inductively that every matrix entry of $\rho_t(f)$, viewed as a function of $f$, is a linear combination of functions each depending on at most $2t$ input coordinates.
    
    The claim is immediate for $t=0$, since $\rho_0$ is independent of $f$. Suppose it holds after $t-1$ queries, and consider the state immediately after the next oracle call,
    \[
        \rho'_t(f)=U_f\rho_{t-1}(f)U_f^\dagger.
    \]
    For basis labels $u=(x,r,z)$ and $v=(x',r',z')$, we have
    \[
    \begin{aligned}
        \langle u|\rho'_t(f)|v\rangle
        =\sum_{u',v'}
            &\langle u|U_f|u'\rangle\,
            \overline{\langle v|U_f|v'\rangle}\\
            &{}\cdot\langle u'|\rho_{t-1}(f)|v'\rangle.
    \end{aligned}
    \]
    The first two factors depend only on $f(x)$ and $f(x')$. By the induction hypothesis, the last factor is a linear combination of functions each depending on a set $Q\subseteq\mcx$ with $|Q|\leq2(t-1)$. Multiplying any such function by the first two factors produces a function depending only on
    \[
        Q\cup\{x,x'\},
    \]
    which has size at most $2t$. Hence every entry of $\rho'_t(f)$ has the required form.
    
    If $\Phi_t$ denotes the oracle-independent operation following the query, then every entry of $\rho_t(f)=\Phi_t(\rho'_t(f))$ is a fixed linear combination of entries of $\rho'_t(f)$, with coefficients independent of $f$. We emphasise that the density-matrix formulation also allows mixed initial states, classical randomness, and intermediate measurements, with their outcomes retained in the internal registers. The same coordinate-dependence bound therefore holds for $\rho_t(f)$. This completes the induction.
    
    Finally, for a fixed acceptance measurement operator $M$,
    \[
        a(f)=\operatorname{Tr}(M\rho_q(f))
    \]
    is a fixed linear combination of entries of $\rho_q(f)$. Thus $a$ is a linear combination of functions each depending on at most $2q$ input coordinates. For any such function $h$, depending only on $Q$, and any $S$ with $|S|>2q$, we have $S\not\subseteq Q$. By~\eqref{eq:coordinate-dependence-support}, $\Pi_Sh=0$. Linearity of $\Pi_S$ now gives
    \[
        \Pi_Sa=0
        \qquad\text{whenever }|S|>2q.\tag*{\qed}
    \]
\end{proof}
Lemmas~\ref{lem:layered-fourier-bound} and~\ref{lem:fourier-zhandry-method}, with $r=0$ and $d=2q$, yield the following Fourier interpolation result, which we use repeatedly.
\begin{theorem}[Fourier interpolation]\label{thm:quantum-fourier}
    Let $\mco_0 \sim \varphi$ and $\mco_1 \sim \indicate[]$ be quantum function oracles. For every $q\geq0$,
    \[
        \compdist{(q)}{\mco_0}{\mco_1}
        \leq
        \frac12\|\Pi_{\leq2q}(\varphi-1)\|_2
        =
        \frac12
        \left(
            \sum_{\substack{S\subseteq\mcx\\1\leq|S|\leq2q}}
                \|\Pi_S\varphi\|_2^2
        \right)^{1/2}.
    \]
\end{theorem}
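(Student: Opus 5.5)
The plan is to specialise Lemma~\ref{lem:layered-fourier-bound} to $r=0$ and $d=2q$, using Lemma~\ref{lem:fourier-zhandry-method} to check its hypothesis. Fix any quantum algorithm $\mca$ making at most $q$ queries, and let $a$ be its acceptance function, viewed as an element of $L_2(\msfp^{\mcx})$. By Lemma~\ref{lem:fourier-zhandry-method}, $\Pi_Sa=0$ whenever $|S|>2q$. This is exactly the hypothesis of Lemma~\ref{lem:layered-fourier-bound} with $d=2q$.

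Taking $r=0$, the level sum $\sum_{j=1}^{0}|\Delta_j(a,\varphi)|$ is empty and hence zero. Lemma~\ref{lem:layered-fourier-bound} then gives
\[
    \compdist{\mca}{\mco_0}{\mco_1}
    \leq
    \frac12\Bigl(\sum_{\substack{S\subseteq\mcx\\1\leq|S|\leq2q}}\|\Pi_S\varphi\|_2^2\Bigr)^{1/2}.
\]
Since the right-hand side does not depend on $\mca$, taking the supremum over all $q$-query algorithms yields the bound on $\compdist{(q)}{\mco_0}{\mco_1}$. The case $q=0$ needs no separate treatment: $a$ is then constant, the sum over supports is empty, and both sides are zero.

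It remains to identify the right-hand side with $\frac12\|\Pi_{\leq2q}(\varphi-1)\|_2$. We have $\Pi_\varnothing(\varphi-1)=\mathbf E[\varphi]-1=0$, and $\Pi_S(\varphi-1)=\Pi_S\varphi$ for every $S\neq\varnothing$, because the constant function lies in $\mathcal H_\varnothing$. Hence $\Pi_{\leq2q}(\varphi-1)=\sum_{1\leq|S|\leq2q}\Pi_S\varphi$, and the Pythagorean identity in~\eqref{eq:abstract-support-decomposition} gives the stated equality of norms.

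There is no real obstacle here, since the substantive work is contained in Lemmas~\ref{lem:layered-fourier-bound} and~\ref{lem:fourier-zhandry-method}. The only points needing care are the empty-sum convention at $r=0$ and the observation that the bound holds uniformly over algorithms, which justifies passing to the supremum.
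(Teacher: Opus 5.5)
Your proposal is correct and follows the paper's own route exactly: the paper obtains the theorem by specialising Lemma~\ref{lem:layered-fourier-bound} to $r=0$ and $d=2q$, with Lemma~\ref{lem:fourier-zhandry-method} supplying the hypothesis $\Pi_Sa=0$ for $|S|>2q$. The norm identity you verify via $\Pi_\varnothing(\varphi-1)=0$ and the Pythagorean identity is the same one the paper uses implicitly in stating the theorem.
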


\subsection{Character Bases over Finite Abelian Groups}\label{subsec:fourier-analysis}
We now instantiate the support decomposition of Subsection~\ref{subsec:abstract-product-space}. Let $G$ be a finite abelian group of order $N\geq2$, written additively, and set $\mcy=G$ and $\msfp=\unimes[G]$. Unless stated otherwise, all densities, expectations, inner products, and norms below are relative to the corresponding uniform measures.

Let $\dual$ denote the character group of $G$. A character $\psi\colon G\to\complexes$ satisfies $\psi(x+y)=\psi(x)\psi(y)$ and $|\psi(x)|=1$, with $\psi^{-1}(x)=\overline{\psi(x)}$. The principal character $\trivchar$ is the constant function $\indicate[]$. The characters form a canonical orthonormal basis of $L_2(\unimes[G])$, abbreviated to $L_2(G)$; the nonprincipal characters form a basis of its mean zero subspace. Thus, in Section~\ref{subsec:abstract-product-space}, we may take $B_0=\dual\setminus\{\trivchar\}$ and $B=\dual$. We refer to~\cite{Terras1999:Fourier:Book,Babai2002:Fourier:Online} for the standard background in Fourier analysis.

For $\chi=(\chi_x)_{x\in\mcx}\in\dual^{\mcx}$, the corresponding product basis function from Section~\ref{subsec:abstract-product-space} is
\[
    \chi(f)
    \coloneq
    \Phi_\chi(f)
    =
    \prod_{x\in\mcx}\chi_x(f(x)),
    \quad
    \supp(\chi)
    =
    \{x\in\mcx:\chi_x\neq\trivchar\},
    \quad
    |\chi|
    \coloneq
    |\supp(\chi)|.
\]
For $h\colon G^{\mcx}\to\complexes$, define its Fourier coefficient at $\chi$ by
\[
    \widehat h(\chi)
    \coloneq
    \langle h,\chi\rangle
    =
    \mathbf E_{f\sim\unimes[G^{\mcx}]}
    [h(f)\overline{\chi(f)}].
\]
Fourier inversion and Parseval's identity give
\begin{equation}\label{eq:fourier-inversion}
    h
    =
    \sum_{\chi\in\dual^{\mcx}}\widehat h(\chi)\chi,
    \qquad
    \|h\|_2^2
    =
    \sum_{\chi\in\dual^{\mcx}}|\widehat h(\chi)|^2.
\end{equation}
The support projections from Section~\ref{subsec:abstract-product-space} become
\[
    \Pi_Sh
    =
    \sum_{\supp(\chi)=S}\widehat h(\chi)\chi,
    \qquad
    h_{=t}
    =
    \sum_{|\chi|=t}\widehat h(\chi)\chi.
\]
Thus $h_{=t}$ is the level $t$ Fourier component. We call $\{(\psi,\ldots,\psi):\psi\in\widehat G\}$ the diagonal subgroup of $\widehat G^{\mcx}$ and write $\trivchar[\mcx]$ for the principal character on $G^{\mcx}$. For $\ell>0$,
\[
    \sum_{|\chi|=t}|\widehat h(\chi)|^\ell
\]
is the $\ell$-th Fourier moment of $h$ at level $t$. For functions $h_1,h_2$ on $G^{\mcx}$, define their convolution by
\[
    (h_1*h_2)(f)
    \coloneq
    \mathbf E_{g\sim\unimes[G^{\mcx}]}
    [h_1(g)h_2(f-g)].
\]
Then
\[
    \widehat{h_1*h_2}(\chi)
    =
    \widehat h_1(\chi)\widehat h_2(\chi),
    \qquad
    \widehat{h^{*k}}(\chi)
    =
    \widehat h(\chi)^k.
\]

\paragraph{Walsh masks.}
\label{subsec:binary-walsh}
When $G=\mathbb F_2^n$, we identify $G$ with its character group through
\[
    \chi_w(x)=(-1)^{\langle w,x\rangle},
    \qquad
    \chi_u\chi_v=\chi_{u+v}.
\]
Here $\langle\cdot,\cdot\rangle$ is the binary dot product. We call $w$ a \emph{mask} and use the same term for tuples of masks. The principal character corresponds to $0$. For a linear map $L$ between binary vector spaces,
\[
    \chi_\eta(Lx)=\chi_{L^\top\eta}(x),
\]
so pulling a character back through $L$ corresponds to applying $L^\top$ to its mask.

\subsection{Injection Densities and Fourier Coefficients}\label{subsec:probability-density}
Let $\Omega$ be a finite ambient set and let $\varnothing\neq A\subseteq\Omega$. Write $\indicate[A]$ for its indicator and
\[
    \unidens[A]
    =\frac{|\Omega|}{|A|}\indicate[A]
\]
for the density of $\unimes[A]$ relative to the ambient measure $\unimes[\Omega]$.

\subsubsection{Uniform Density on Injections.}
For $0\leq d\leq N$, let $\mci_d$ denote the set of injections $[d]\to G$. Its density relative to $\unimes[G^d]$ is given by
\[
    \unidens[\mci_d]=\frac{N^d}{(N)_d}\indicate[\mci_d].
\]
We also refer to this as the injection density over $G^d$. Fixing an ordering of $G$ identifies $\mci_N$ with the set $\mcs$ of permutations of $G$, and $G^N$ with the complete function space $G^G$. In particular,
\[
    \unidens[\mcs]=\frac{N^N}{N!}\indicate[\mcs].
\]
Restricting a uniform element of $\mci_d$ to any $t$-subset $S \subseteq [d]$ gives a uniform element of $\mci_t$. Indeed, each injective assignment of $t$ distinct outputs to $S$ has $(N-t)_{d-t}$ extensions to an injection on $[d]$, and $(N)_d = (N)_t(N-t)_{d-t}$. Accordingly, if $\chi\in\widehat{G}^d$ has support $S=\{i_1,\ldots,i_t\}$, with $i_1<\cdots<i_t$, then
\begin{equation}\label{eq:injection-restriction}
    \funidens[\mci_d](\chi)
    =\funidens[\mci_t](\chi_{i_1},\ldots,\chi_{i_t}).
\end{equation}
Stated differently, inserting principal characters does not change the corresponding Fourier coefficient of the injection density. We also use~\eqref{eq:injection-restriction} when $d=N$, with $\unidens[\mci_N]$ identified with $\unidens[\mcs]$.

Two further symmetries will be useful. Negating every output preserves the injection distribution, so its Fourier coefficients are real. Translating every output by a fixed $z\in G$ also preserves this distribution. Consequently,
\[
    \funidens[\mci_d](\chi)=0
    \quad\text{unless}\quad
    \prod_{i=1}^d\chi_i=\trivchar.
\]
Indeed, translation multiplies the coefficient by $\overline{\prod_i\chi_i(z)}$, which must be one for every $z$ if the coefficient is nonzero.

Since permuting the input coordinates preserves the uniform injection distribution, the fixed-support moments of the injection density depend only on the size of the support. For $1\leq t\leq N$ and $\ell>0$, we write
\[
    \Theta_{t,\ell}
    \coloneqq
    \sum_{\alpha\in(\widehat G\setminus\{\trivchar\})^t}
    \left|
        \funidens[\mci_t](\alpha)
    \right|^\ell,
    \qquad
    \Lambda_t
    \coloneqq
    \max_{\alpha\in(\widehat G\setminus\{\trivchar\})^t}
    \left|
        \funidens[\mci_t](\alpha)
    \right|,
\]
and abbreviate $\Theta_t\coloneqq\Theta_{t,2}$. Thus, $\Theta_{t,\ell}$ is the $\ell$-th Fourier moment of $\unidens[\mci_d]$ on one fixed support of size $t$, while $\Lambda_t$ is the corresponding maximum Fourier coefficient. Accordingly, for every $1\leq t\leq d\leq N$,
\[
    \sum_{\substack{\chi\in\widehat G^d\\|\chi|=t}}
    \left|
        \funidens[\mci_d](\chi)
    \right|^\ell
    =
    \binom{d}{t}\Theta_{t,\ell},
    \qquad\text{and}\qquad
    \sum_{\substack{\chi\in\widehat G^d\\
                    \chi\neq\trivchar[d]}}
    \left|
        \funidens[\mci_d](\chi)
    \right|^\ell
    =
    \sum_{t=1}^d \binom{d}{t}\Theta_{t,\ell},
\]
where the first quantity is the $\ell$-th Fourier moment
of $\unidens[\mci_d]$ at level $t$. The quantities $\Theta_t$ and $\binom{d}{t}\Theta_{t}$ are referred to as the Fourier weights for $\unidens[\mci_t]$ and $\unidens[\mci_d]$ at level $t$ in~\cite{Dinur2024:SOP:EUROCRYPT}, respectively. For $k\geq2$, bounding all but two factors by the maximum gives
\[
    \Theta_{t,2k}
        \leq
        \Lambda_t^{2k-2}\Theta_t.
\]
We shall also use Eberhard's coefficient-merging identity~\cite[Section~4]{Eberhard2017:SOP:arXiv}. For $2\leq t\leq N$ and nonprincipal characters $\chi_1,\ldots,\chi_t\in\widehat G$,
\begin{equation}\label{eq:injection-coefficient-merging}
    \funidens[\mci_t](\chi_1,\ldots,\chi_t)
    =
    -\frac{1}{N-t+1}
    \sum_{i=1}^{t-1}
    \funidens[\mci_{t-1}]
    (\chi_1,\ldots,\chi_i\chi_t,\ldots,\chi_{t-1}).
\end{equation}
Condition on the first $t-1$ distinct outputs $y_1,\ldots,y_{t-1}$. Since $\chi_t$ is nonprincipal,
\[
    \mathbf E\left[\overline{\chi_t(y_t)} \mid y_1,\ldots,y_{t-1}\right]
    =
    -\frac{1}{N-t+1}\sum_{i=1}^{t-1}\overline{\chi_t(y_i)}.
\]
Multiplying by $\prod_{j=1}^{t-1}\overline{\chi_j(y_j)}$ and taking expectation over the uniform injection on the first $t-1$ coordinates gives the identity.
\section{Sum of Independent Permutations}\label{sec:sop}
Given $k\geq2$ independent uniform random permutations $\rp_1,\ldots,\rp_k$ of $G$, define
\[
    \sop[k](x)\coloneq\rp_1(x)+\cdots+\rp_k(x),
    \qquad x\in G.
\]
For $G=\mathbb F_2^n$, this is the usual $k$-XOR-of-permutations construction. We use $k$ for the number of permutations and reserve $t$ for a Fourier level. The density on any fixed set of $d<N$ distinct inputs is
\[
    \varphi_{d,k}=\unidens[\mci_d]^{*k},
\]
so its Fourier coefficients are $\funidens[\mci_d](\chi)^k$. We first record the two global estimates that will be used throughout the section, starting with Eberhard's result for $k=2$.
\begin{proposition}[Theorem~1.5 in~\cite{Eberhard2017:SOP:arXiv}] \label{prop:eberhard-fourth-moment}
     Let $G$ be a finite abelian group of order $N$. For every $1\leq d<N$,
    \[
        \sum_{t=1}^d\binom dt\Theta_{t,4}
        =\sum_{\chi\in\dual^d\setminus\{\trivchar[d]\}}
            |\funidens[\mci_d](\chi)|^4
        =\|\unidens[\mci_d]^{*2}-1\|_2^2
        \leq O\left(\frac{d^2}{N^3}\right).
    \]
\end{proposition}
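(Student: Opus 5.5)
The two equalities are bookkeeping, and the inequality is the real content. For the first equality, I would group the characters $\chi\in\dual^d\setminus\{\trivchar[d]\}$ by their support $S$. By the restriction identity~\eqref{eq:injection-restriction}, each group with $|S|=t$ contributes exactly $\Theta_{t,4}$, and there are $\binom dt$ such supports. For the second equality, the convolution rule gives $\widehat{\unidens[\mci_d]^{*2}}(\chi)=\funidens[\mci_d](\chi)^2$, and the principal coefficient equals $1$. Parseval~\eqref{eq:fourier-inversion} then yields $\|\unidens[\mci_d]^{*2}-1\|_2^2=\sum_{\chi\neq\trivchar[d]}|\funidens[\mci_d](\chi)|^4$.

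For the bound I would work level by level, starting from the explicit low levels. At level two, translation invariance forces $\chi_2=\overline{\chi_1}$. The merging identity~\eqref{eq:injection-coefficient-merging} then gives coefficient $-1/(N-1)$, so $\Theta_{2,4}=(N-1)^{-3}$ and $\binom d2\Theta_{2,4}=O(d^2/N^3)$, which is already the claimed order. For higher levels I would iterate~\eqref{eq:injection-coefficient-merging}, applying H\"older to the sum of $t-1$ terms:
\[
|\funidens[\mci_t](\alpha)|^4\le\frac{(t-1)^3}{(N-t+1)^4}\sum_{i=1}^{t-1}|\funidens[\mci_{t-1}](\alpha^{(i)})|^4 .
\]
Summing over $\alpha$, each merged tuple arises from at most $N$ choices of $\alpha_t$. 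Terms whose merged entry becomes principal drop to a lower level, and I would track them through~\eqref{eq:injection-restriction}. This gives a recursion of the shape $\Theta_{t,4}\lesssim \frac{t^4N}{(N-t)^4}(\Theta_{t-1,4}+\Theta_{t-2,4})$. Multiplying by $\binom dt$, the ratio between consecutive levels is roughly $dt^3/N^3$. Hence the series is dominated by level two whenever $t\le cN^{2/3}$ or so, and the levels up to that point contribute $O(d^2/N^3)$.

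The main obstacle is the large levels $t$ comparable to $N$, especially when $d$ is close to $N$. There the crude H\"older recursion loses the cancellation among the $t-1$ merged terms and blows up. The alternative trivial bound $\Theta_{t,4}\le\Lambda_t^2\Theta_t$ does not help either, because $\sum_t\binom dt\Theta_t=N^d/(N)_d-1$ is exponentially large. For these levels I would instead pass to the probabilistic form $\|\unidens[\mci_d]^{*2}\|_2^2=N^d\Pr[X+Y=X'+Y']$ with $X,Y,X',Y'$ independent uniform injections. I would then estimate this collision probability by conditioning coordinate by coordinate, essentially a martingale or exchangeability computation. This should show that the excess over $N^{-d}$ accumulates like $\sum_{j<d} j/N^3$, which is $O(d^2/N^3)$. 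Carrying out that conditional estimate uniformly up to $d=N-1$ is the step I expect to require Eberhard's full argument.
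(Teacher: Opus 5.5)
Your two equalities are exactly the paper's bookkeeping: group by support via~\eqref{eq:injection-restriction}, use $\widehat{\unidens[\mci_d]^{*2}}=\funidens[\mci_d]^2$ with principal coefficient $1$, then apply Parseval. For the inequality, the paper gives no argument of its own. It imports Eberhard's Theorem~1.5, and that citation is the whole proof.

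Your attempt to reprove the inequality does not close, so as written it has a gap. Two points fail concretely.

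First, your consecutive-level ratio $dt^3/N^3$ ignores the $\Theta_{t-2,4}$ term in your own recursion. For each secondary index, the tuples with $\alpha_i\alpha_t=\trivchar$ contribute $(N-1)\Theta_{t-2,4}$. After multiplying by $\binom dt$, the two-step ratio is therefore of order $d^2t^2/N^3$. For $d$ near $N$ the recursion certifies decay only up to $t\approx\sqrt N$, not $N^{2/3}$. The H\"older losses also compound, so this route cannot reach levels of order $N$.

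Second, the identity $\|\unidens[\mci_d]^{*2}\|_2^2=N^d\Pr[X+Y=X'+Y']$ is correct, but it measures all levels at once. It cannot be used ``for these levels'' only: proving $N^d\Pr[X+Y=X'+Y']\leq1+O(d^2/N^3)$ is the entire theorem. Nor are the coordinatewise increments uniformly small. Put $A=\{X_1,\ldots,X_j\}$, $B=\{Y_1,\ldots,Y_j\}$, $r(z)=\#\{(a,b)\in A\times B:a+b=z\}$, and define $r'$ likewise from $X',Y'$. Given the history, the probability of a match at coordinate $j+1$ is exactly
\[
    \frac1N+\frac{\sum_{z\in G}\bigl(r(z)-j^2/N\bigr)\bigl(r'(z)-j^2/N\bigr)}{(N-j)^4}.
\]
Once $N-j$ is bounded, this correction is not $O(j/N^4)$ pointwise: it can be of order one, or cancel the $1/N$ entirely. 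Your heuristic $\sum_{j<d}j/N^3$ can hold only on average over matching histories, and establishing that average is the real content of Eberhard's argument.

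You also place the difficulty too broadly, because you dismissed $\Theta_{t,4}\leq\Lambda_t^2\Theta_t$ too quickly. For $4\leq t\leq 2\lfloor N/4\rfloor$, combine it with the sparse maximum of Proposition~\ref{prop:eberhard-sparse-maximum}. This gives
\[
    \binom dt\Theta_{t,4}\leq\frac{(d)_t}{(N)_t}\Theta_t\leq(d/N)^4\Theta_t,
\]
and~\eqref{eq:exact-moment-low-tail} bounds $\sum_t\Theta_t$ over this range by $O(N^{-2})$. Adding the exact levels two and three, everything up to about $N/2$ contributes $O(d^2/N^3)$ with no recursion at all.

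What remains genuinely hard are the levels $N/2<t\leq d$, which exist only when $d>N/2$. Even the paper's Lemma~\ref{lem:extended-injection-moments} reaches only $8N/15$. That is why the paper cites Eberhard for the full range $d<N$. Since your proposal also ends by deferring this step to Eberhard, the sound proof is simply your two equalities plus that citation.
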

\begin{proof}
    The inequality follows directly from~\cite[Theorem~1.5]{Eberhard2017:SOP:arXiv}. The two equalities follow from Parseval, convolution-multiplication duality, and a grouping of the characters by their Fourier level.\qed
\end{proof}

We next record a slight generalisation of the smoothing argument used by Dinur that isolates the only additional step needed for $k>2$.
\begin{proposition} \label{prop:sop-fourier-smoothing}
    For every $1\leq d<N$, every $k\geq2$, and every set $\Gamma\subseteq\widehat G^d\setminus\{\trivchar[d]\}$,
    \begin{equation}\label{eq:sop-fourier-smoothing}
        \sum_{\chi\in\Gamma}|\funidens[\mci_d](\chi)|^{2k}
        \leq
        \lambda_N^{2k-4}
        \sum_{\chi\in\Gamma}|\funidens[\mci_d](\chi)|^4,
    \end{equation}
    where $\lambda_N\coloneq\binom N2^{-1/2}$. In particular,
    \begin{equation}\label{eq:sop-global-moment}
        \sum_{t=1}^d\binom dt\Theta_{t,2k}
        \leq O_k\left(\frac{d^2}{N^{2k-1}}\right).
    \end{equation}
\end{proposition}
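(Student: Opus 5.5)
The first inequality follows from a uniform bound on the largest nonprincipal injection coefficient: for every $1\leq d<N$ and every $\chi\in\widehat G^d\setminus\{\trivchar[d]\}$,
\[
    |\funidens[\mci_d](\chi)|\leq\lambda_N=\binom N2^{-1/2}.
\]
Given this bound, we write $|\funidens[\mci_d](\chi)|^{2k}=|\funidens[\mci_d](\chi)|^{2k-4}|\funidens[\mci_d](\chi)|^4$. Bounding the first factor by $\lambda_N^{2k-4}$ termwise and summing over $\Gamma$ gives~\eqref{eq:sop-fourier-smoothing}. Since $k\geq2$, the exponent $2k-4\geq0$, so no case split is needed.

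To prove the coefficient bound, we use Parseval on the full permutation density. By~\eqref{eq:injection-restriction}, the coefficient $\funidens[\mci_d](\chi)$ equals the coefficient of $\unidens[\mcs]$ at the character obtained by padding $\chi$ with principal characters to all of $G$. We then use the symmetry of the uniform permutation under relabelling inputs. The coefficient at a character of $\widehat G^G$ is unchanged when its coordinates are permuted, so it appears at least as many times as there are distinct rearrangements of its multiset of characters. Parseval gives $\|\unidens[\mcs]\|_2^2=N^N/N!$, which is far too large to use directly. We therefore argue with the injection density on $d<N$ coordinates instead. There we aim to show that every nonzero nonprincipal coefficient has at least $\binom N2$ distinct images under a group action that preserves its absolute value, and that
\[
    \sum_{\chi\neq\trivchar[d]}|\funidens[\mci_d](\chi)|^2\leq 1 .
\]
For the latter, Parseval gives $\|\unidens[\mci_d]\|_2^2-1=N^d/(N)_d-1$, which is not at most $1$ in general. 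So the route I would try first is different. I would prove the bound directly by induction on $t=|\chi|$, using Eberhard's merging identity~\eqref{eq:injection-coefficient-merging} together with the translation constraint $\prod_i\chi_i=\trivchar$.

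The induction runs as follows.
\begin{itemize}
    \item Level $1$ is zero.
    \item At level $2$, the coefficient is $-1/(N-1)$ when $\chi_2=\overline{\chi_1}$ and zero otherwise. Since $1/(N-1)\leq\binom N2^{-1/2}$ for $N\geq2$, the bound holds here.
    \item At higher levels, the merging identity expresses a level-$t$ coefficient as an average-like sum of at most $t-1$ coefficients at level $t-1$ or lower, scaled by $1/(N-t+1)$. This controls the coefficient by $\frac{t-1}{N-t+1}\Lambda_{t-1}$ at best.
\end{itemize}
That estimate is useful only for $t\lesssim N/2$. For large $t$ I would switch to the complementary view: by~\eqref{eq:injection-restriction} with $d=N$, a coefficient with large support can be compared with one whose support is the complement. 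For this I would use the identity that the permutation's outputs on the complement are determined, so that characters can be shifted by a diagonal character $(\psi,\ldots,\psi)$. The sum of all outputs of a permutation is constant, so multiplying $\chi$ by a diagonal character only changes the coefficient by a unimodular factor. This lets us reduce the support to size at most $N/2$.

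The case $t\approx N/2$ is the \textbf{main obstacle}, because neither the recursion nor the complement trick gives $\binom N2^{-1/2}$ cleanly there. If this fails, I would instead cite the explicit maximum-coefficient bound the paper defers to Appendix~\ref{sec:explicit-max-moment}, namely $\Lambda_t\leq\lambda_N$ for all $t$, which the overview says is the ``bound on the largest nonconstant coefficient.''

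The second claim~\eqref{eq:sop-global-moment} then follows immediately. Take $\Gamma=\widehat G^d\setminus\{\trivchar[d]\}$. Grouping by level as in Proposition~\ref{prop:eberhard-fourth-moment} gives
\[
    \sum_{t=1}^d\binom dt\Theta_{t,2k}\leq\lambda_N^{2k-4}\,O\!\left(\frac{d^2}{N^3}\right).
\]
Since $\lambda_N^{2k-4}=O_k(N^{-(2k-4)})$, the right-hand side is $O_k(d^2/N^{2k-1})$.
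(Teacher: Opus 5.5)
\textbf{There is a genuine gap: the uniform coefficient bound for large supports.} Your reduction of~\eqref{eq:sop-fourier-smoothing} to the uniform bound $|\funidens[\mci_d](\chi)|\le\lambda_N$ matches the paper. So does your deduction of~\eqref{eq:sop-global-moment} from it via Proposition~\ref{prop:eberhard-fourth-moment}. The gap is the uniform bound itself, which you need for every support size $t\le d\le N-1$.

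Your induction via~\eqref{eq:injection-coefficient-merging} actually succeeds for all $t\le N/2+1$. There $(t-1)/(N-t+1)\le1$, so $\Lambda_t\le\max\{\Lambda_{t-1},\Lambda_{t-2}\}\le\lambda_N$. So $t\approx N/2$ is not where the argument breaks. It breaks for large $t$, and the diagonal shift does not rescue it.

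Multiplying the padded character $\chi'\in\widehat G^G$ by $(\psi,\ldots,\psi)$ does change the coefficient only by a unimodular factor. However, the new support has size $N$ minus the multiplicity of $\overline\psi$ among the coordinates of $\chi'$. It therefore drops to at most $N/2$ only when a single character occupies at least half of the coordinates. For example, take $G=\mathbb F_2^n$ with $n\ge2$ and $d=N-1$, and let $\chi$ list the $N-1$ nonzero masks. Their sum is zero, so the translation constraint $\prod_i\chi_i=\trivchar$ holds. The padded $\chi'$ hits every character exactly once, so every diagonal shift leaves support $N-1$. The recursion factor at that level is $(N-2)/2$.

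Your fallback citation does not close the gap either. Proposition~\ref{prop:eberhard-sparse-maximum} in Appendix~\ref{sec:explicit-max-moment} gives $\Lambda_t\le\binom Nt^{-1/2}$ only for $t\le\lfloor N/2\rfloor$. The extension in Appendix~\ref{sec:extended-injection-proof} is $\Lambda_t\le\sqrt{C}/\binom Nt$ with $C=\binom N{\lfloor N/2\rfloor}$, and it becomes vacuous as $t$ approaches $N$: at $t=N-1$ it reads $\sqrt C/N$, which is far larger than $1$.

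The missing ingredient is the one the paper uses. Eberhard's bound, from the proof of his Theorem~1.4, gives $|\funidens[\mcs](\chi')|\le\lambda_N$ for every \emph{non-diagonal} $\chi'\in\widehat G^G$, whatever its support size. Padding any $\chi\neq\trivchar[d]$ with $d<N$ yields a character with both a principal and a nonprincipal coordinate, hence a non-diagonal one. Then~\eqref{eq:injection-restriction} transfers the bound to $\funidens[\mci_d](\chi)$. You must either cite that result or give an argument for characters in which no single value occupies half the coordinates; neither your recursion nor the shift handles those.
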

\begin{proof}
    Fix $\chi\neq\trivchar[d]$, and pad it with principal characters to a character $\chi'\in\widehat G^N$. By~\eqref{eq:injection-restriction},
    \[
        \funidens[\mci_d](\chi)=\funidens[\mcs](\chi').
    \]
    Since $d<N$ and $\chi$ is nonprincipal, the padded tuple has both a principal and a nonprincipal coordinate, and is not diagonal. Eberhard's non-diagonal coefficient bound~\cite[Proof of Theorem~1.4, p.~18]{Eberhard2017:SOP:arXiv}, after multiplication by $N^N/N!$ to pass to densities, gives $|\funidens[\mci_d](\chi)|\leq\lambda_N$. Hence
    \[
        |\funidens[\mci_d](\chi)|^{2k}
        \leq
        \lambda_N^{2k-4}|\funidens[\mci_d](\chi)|^4.
    \]
    Summing over $\Gamma$ proves~\eqref{eq:sop-fourier-smoothing}. Taking all nonprincipal characters and applying Proposition~\ref{prop:eberhard-fourth-moment} gives~\eqref{eq:sop-global-moment}.\qed
\end{proof}

\begin{remark} \label{rem:complete-permutation-density-applicable}
    The same bound also applies to the complete permutation density $\unidens[\mcs]$ for every character $\chi\in\dual^G$ with $1\leq |\chi|<N$. Indeed, if $S=\supp(\chi)$ and $|S|=t<N$, then the restriction of a uniform permutation to $S$ is a uniform injection, and hence $\widehat{\unidens[\mcs]}(\chi) = \widehat{\unidens[\mci_t]}(\chi|_S)$.
\end{remark}

The next estimate controls the levels above two in both query models. The parameter $m$ counts the input positions over which these levels are summed: we use $m = q$ for the classical case and $m = N$ for the quantum case.

\begin{lemma}\label{lem:extended-injection-moments}
    Let $G$ be a finite abelian group of order $N\geq1024$. For every integer $3\leq T\leq8N/15$,
    \begin{equation}\label{eq:extended-injection-fourth-moment}
        \sum_{t=3}^{T}\binom Nt\Theta_{t,4}<\frac{13}{4N^2},
    \end{equation}
    and
    \begin{equation}\label{eq:extended-injection-maximum}
        \max_{3\leq t\leq T}\Lambda_t\leq\binom N4^{-1/2}.
    \end{equation}
    Consequently, for integers $k\geq2$ and $T\leq m\leq N$,
    \begin{equation}\label{eq:common-injection-tail}
        \sum_{t=3}^{T}\binom mt\Theta_{t,2k}
        \leq\frac{\binom m3}{\binom N3}
            \frac{13}{4N^2}\binom N4^{-(k-2)}.
    \end{equation}
\end{lemma}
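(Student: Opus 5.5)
The consequence~\eqref{eq:common-injection-tail} follows from the first two estimates, so I would establish~\eqref{eq:extended-injection-fourth-moment} and~\eqref{eq:extended-injection-maximum} first and then deduce it. For $3\leq t\leq T$ and $k\geq2$, bounding all but four factors by the maximum gives $\Theta_{t,2k}\leq\Lambda_t^{2k-4}\Theta_{t,4}\leq\binom N4^{-(k-2)}\Theta_{t,4}$. Next, $\binom mt/\binom Nt=(m)_t/(N)_t$. For $m\leq N$ this is nonincreasing in $t$, since each additional factor is $(m-j)/(N-j)\leq1$. Hence $\binom mt\leq\frac{\binom m3}{\binom N3}\binom Nt$ for $t\geq3$. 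Summing over $t$ and inserting~\eqref{eq:extended-injection-fourth-moment} gives~\eqref{eq:common-injection-tail}.

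For the maximum bound~\eqref{eq:extended-injection-maximum}, I would iterate Eberhard's merging identity~\eqref{eq:injection-coefficient-merging}. Two facts start the recursion. First, $\funidens[\mci_1](\alpha)=0$ for nonprincipal $\alpha$. Second, $\funidens[\mci_2](\alpha,\beta)=-\frac1{N-1}\mathbf 1[\alpha\beta=\trivchar]$. Each application of the identity removes one coordinate and costs a factor $1/(N-t+1)$, and a merge $\chi_i\chi_t=\trivchar$ drops the level by two. Unfolding the recursion therefore expresses $\funidens[\mci_t]$ as a signed sum over ``merge histories'' ending in pairings.

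From this expansion I would read off two small cases:
\[
    |\funidens[\mci_3]|\leq\frac{2}{(N-1)(N-2)},
    \qquad
    |\funidens[\mci_4]|\lesssim\frac{3}{(N-1)(N-3)}.
\]
Both are below $\binom N4^{-1/2}\approx\sqrt{24}/N^2$. For larger $t$ the crude recursion bound grows like $(t-1)/(N-t+1)$ per step, so it cannot be used directly. Instead I would bound $\Lambda_t^2\leq\Theta_t$ and control the second moment $\Theta_t$ for $t\geq5$, where the larger number of factors $1/N$ should dominate the combinatorial growth.

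The fourth-moment bound~\eqref{eq:extended-injection-fourth-moment} is where I expect the real difficulty: constants must be explicit and the range must reach $t\leq8N/15$. My plan is to derive a recursion for $M_t\coloneqq\binom Nt\Theta_{t,4}$ (and similarly for the second moments) directly from~\eqref{eq:injection-coefficient-merging}.
\begin{itemize}
    \item Apply H\"older/Cauchy--Schwarz to the sum of $t-1$ terms on the right-hand side of the identity.
    \item Count preimages: a level-$(t-1)$ tuple $(\ldots,\chi_i\chi_t,\ldots)$ arises from at most $N$ choices of $\chi_t$ for each of the $t-1$ positions $i$. When the merged character is principal, the image lies at level $t-2$ instead.
    \item This should give a recursion of the form
    \[
        \Theta_{t,4}\leq\frac{c_1 t^{a}N}{(N-t+1)^4}\,\Theta_{t-1,4}+\frac{c_2 t^{b}N}{(N-t+1)^4}\,\Theta_{t-2,4}.
    \]
    Multiplying by $\binom Nt$ turns it into a recursion for $M_t$ whose ratios $M_t/M_{t-1}$ are small when $t\leq8N/15$ and $N\geq1024$.
\end{itemize}
The level-$t$ weight should thus decay geometrically from the exact values of $M_3$ and $M_4$, each of order $1/N^2$, and summing the geometric series yields the constant $13/4$.

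The main obstacle is making this preimage counting tight enough: a naive Cauchy--Schwarz loses a factor $t^3$. To avoid this, I would first try to exploit the orthogonality or cancellation among the $t-1$ terms, in the style of Dinur's level-weight recursion, rather than applying the triangle inequality. The threshold $8N/15$ would then be exactly where the recursion ratio remains below one.
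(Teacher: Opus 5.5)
\paragraph{Overall.} Your deduction of \eqref{eq:common-injection-tail} from the first two estimates is correct and is the paper's final step. Both of the first two estimates, however, have genuine gaps.

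\paragraph{The maximum bound.} The route $\Lambda_t^2\le\Theta_t$ for \eqref{eq:extended-injection-maximum} fails at small levels. By \eqref{eq:exact-injection-low-moments}, $\Theta_5\approx40/N^3$, which exceeds $\binom N4^{-1}\approx24/N^4$ by a factor of order $N$. The same failure persists through $t=8$; for example $\Theta_6\approx15/N^3$ and $\Theta_8\approx105/N^4$. Near and beyond $N/2$ you would also need a bound on $\Theta_t$, and you do not supply one.

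The merging recursion you set aside does not grow below $N/2$, because its factor $(t-1)/(N-t+1)$ is less than one there. Iterating \eqref{eq:injection-maximum-recurrence} from $\Lambda_1=0$ and $\Lambda_2=1/(N-1)$ already gives $\Lambda_t\le3/((N-1)(N-3))<\binom N4^{-1/2}$ for $4\le t\le N/2$. Its even-level iterates $\prod_{j=1}^{s}(2j-1)/(N-2j+1)$ are at most $\binom N{2s}^{-1/2}$. This has the same shape as Eberhard's sparse bound $\Lambda_t\le\binom Nt^{-1/2}$ (Proposition~\ref{prop:eberhard-sparse-maximum}), on which the paper relies. The recursion only grows past $N/2$. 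There the paper runs it from the sparse bound at $\lfloor N/2\rfloor$ to get $\Lambda_t\le\binom N{\lfloor N/2\rfloor}^{1/2}/\binom Nt$, and an entropy estimate shows this is exponentially small up to $t=8N/15$.

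\paragraph{The fourth-moment bound.} The H\"older-plus-preimage recursion for \eqref{eq:extended-injection-fourth-moment} cannot reach linear levels. Combine the $(t-1)^3$ H\"older loss, the $t-1$ merge positions, and up to $N$ preimages per position. For $t$ small compared with $N$, this yields roughly $M_t\le c\,(t^3/N^2)M_{t-1}+c\,(t^2/N)M_{t-2}$. Its ratios exceed one once $t$ is of order $\sqrt N$. You do not specify the cancellation you hope to exploit, so nothing controls the levels between $\sqrt N$ and $8N/15$, and the constant $13/4$ is never derived. Also, $M_3=8N/(3(N-1)^2(N-2)^2)$ is of order $N^{-3}$, not $N^{-2}$.

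The missing idea is to decouple the fourth moment, exactly as you already do for \eqref{eq:common-injection-tail}: $\binom Nt\Theta_{t,4}\le\binom Nt\Lambda_t^2\Theta_t\le\Theta_t$ for $4\le t\le N/2$. Then only second moments need control, and these are known exactly. Binomial inversion of Parseval gives $\Theta_t=\sum_{j}(-1)^{t-j}\binom tjN^j/(N)_j=\mathbf E[Z^t]$, with $Z=N/V-1$ and $V$ of density $v^Ne^{-v}/N!$. H\"older interpolation between the even moments $\Theta_6$ and $\Theta_L$, where $L=2\lfloor N/4\rfloor$, then gives $\Theta_t\le\Theta_6e^{-(t-6)/10}$ for $6\le t\le L$.

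The series $\sum_t\binom Nt\Theta_tz^t$ agrees with $(1-z)^N\exp(Nz/(1-z))$ up to degree $N$, and the latter has non-negative coefficients. This makes $\Theta_L$ exponentially small. Together with the extended maximum and an entropy estimate, it also makes the levels $N/2<t\le8N/15$ exponentially small. The constant $13/4$ is then $N^2(\Theta_4+\Theta_5+11\Theta_6)$ plus negligible terms, about $3.246$ at $N=1024$.
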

The proof is given in Appendix~\ref{sec:extended-injection-proof}.

\subsection{Refined Classical Security}\label{subsec:sop-classical}
The $O_k(q/N^{k-1/2})$ bound for every $q < N$ follows from the preceding smoothing argument. To improve the bound below the birthday threshold, we first compute the two-point distribution and then apply the moment bound to the levels above two.

\begin{lemma}\label{lem:sop-two-point-law}
    Let $x\neq x'$ and let $F=\sop[k]$. The density $\varphi_{2,k}$ of $(F(x),F(x'))$ on $G^2$ satisfies
    \[
        \varphi_{2,k}(y,y')-1
        =\eta_k\bigl(N\indicate[\{y\}](y')-1\bigr),
        \qquad
        \eta_k=\left(-\frac1{N-1}\right)^k.
    \]
    If $\tau\colon G\to H$ is a surjective homomorphism and $|H|=M$, the two-point density $\nu_{2,k}$ of $\tau F$ satisfies
    \begin{equation}\label{eq:sop-two-point-pushforward}
        \nu_{2,k}(y,y')-1=\eta_k\bigl(M\indicate[\{y\}](y')-1\bigr).
    \end{equation}
\end{lemma}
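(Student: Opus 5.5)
The plan is to work on the Fourier side and use convolution. For a single uniform permutation, the pair $(\rp(x),\rp(x'))$ is a uniform element of $\mci_2$. Its density relative to $\unimes[G^2]$ is
\[
    \unidens[\mci_2](y,y')=\frac{N}{N-1}\bigl(1-\indicate[\{y\}](y')\bigr),
\]
so that
\[
    \unidens[\mci_2]-1=-\frac{1}{N-1}\bigl(N\indicate[\{y\}](y')-1\bigr).
\]
Write $D(y,y')\coloneq N\indicate[\{y\}](y')-1$ for the centred diagonal indicator.

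The first step is to compute the Fourier coefficients of $D$. For $\chi=(\chi_1,\chi_2)$,
\[
    \widehat{N\indicate[\{y\}](y')}(\chi)=\mathbf E_y\bigl[\overline{\chi_1\chi_2(y)}\bigr],
\]
which is $1$ if $\chi_2=\chi_1^{-1}$ and $0$ otherwise. Hence $\widehat D(\chi)=1$ on the antidiagonal $\{(\psi,\psi^{-1})\}$ minus the principal character, and $\widehat D(\chi)=0$ elsewhere. It follows that $\funidens[\mci_2](\chi)=-1/(N-1)$ on nonprincipal antidiagonal characters, equals $1$ at the principal character, and vanishes otherwise. This agrees with the translation symmetry and with~\eqref{eq:injection-restriction}.

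The second step uses that $\varphi_{2,k}=\unidens[\mci_2]^{*k}$, since the $k$ permutations are independent and their outputs add. The Fourier coefficients are therefore $(-1/(N-1))^k=\eta_k$ on the nonprincipal antidiagonal characters, $1$ at the principal character, and $0$ elsewhere. By Fourier inversion~\eqref{eq:fourier-inversion}, this gives exactly $\varphi_{2,k}-1=\eta_k D$.

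The third step handles the pushforward under the surjective homomorphism $\tau$. For characters $\xi_1,\xi_2\in\widehat H$, the pullbacks $\xi_i\circ\tau$ are characters of $G$, and $\tau$ is measure-preserving from $\unimes[G]$ to $\unimes[H]$. So the Fourier coefficient of $\nu_{2,k}$ at $(\xi_1,\xi_2)$ equals $\widehat{\varphi_{2,k}}(\xi_1\circ\tau,\xi_2\circ\tau)$. The map $\xi\mapsto\xi\circ\tau$ is injective and takes principal to principal and inverses to inverses. The pulled-back pair is nonprincipal antidiagonal exactly when $(\xi_1,\xi_2)$ is nonprincipal antidiagonal in $\widehat H^2$. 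The coefficients of $\nu_{2,k}$ are therefore $\eta_k$ on nonprincipal antidiagonal characters of $\widehat H^2$ and $0$ elsewhere apart from the principal one. Repeating the inversion from the second step over $H$ gives $\nu_{2,k}-1=\eta_k\bigl(M\indicate[\{y\}](y')-1\bigr)$.

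Everything here is routine; the only point needing care is the third step. One must check that the pullback correctly identifies the characters of the pushforward density and that it respects the antidiagonal structure, which relies on injectivity of the dual map when $\tau$ is surjective.
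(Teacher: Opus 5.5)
Your proof is correct, but it takes a different route from the paper.

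The paper stays in physical space throughout. It sets $Z_i=\pi_i(x')-\pi_i(x)$ and notes that $\pi_i(x)$ is uniform and independent of $Z_i$, which is uniform on $G\setminus\{0\}$. Hence $F(x)$ is uniform and independent of $S_k=Z_1+\cdots+Z_k$. It then solves the one-dimensional recursion $p_{j+1}(z)=(1-p_j(z))/(N-1)$ for the law of $S_k$. The projected statement follows by pushing this forward through $\tau$, whose fibres all have $N/M$ points.

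You instead work on $G^2$. You compute the Fourier coefficients of the two-point injection density (the same calculation the paper does separately in its appendix on injection coefficients), raise them to the $k$-th power by the convolution theorem, and invert. For $\tau$ you pull characters back along $\xi\mapsto\xi\circ\tau$, which is the device behind~\eqref{eq:generalised-sop-fourier}.

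The two arguments are dual. The antidiagonal characters $(\psi,\psi^{-1})$ are exactly the characters of the difference $y'-y$. The number $-1/(N-1)$ is the Fourier coefficient of the uniform law on $G\setminus\{0\}$ at a nonprincipal character. So your multiplication diagonalises the paper's recursion.

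What each route buys:
\begin{itemize}
\item The paper's route is more elementary and needs no character theory.
\item Yours fits the framework of Section~\ref{sec:framework}. It exhibits the level-two component directly as $\eta_k$ times the sum of nonprincipal antidiagonal characters. By Parseval this gives $\|\Pi_{\{x,x'\}}\varphi\|_2^2=(N-1)\eta_k^2$ at once, and it also gives the later identification of that component with $\eta_k K_G$.
\item Yours also treats the pushforward more explicitly than the paper's one-line remark.
\end{itemize}

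One small point: the measure-preservation of $\tau$ is not actually needed for the identity $\widehat{\nu_{2,k}}(\xi_1,\xi_2)=\widehat{\varphi_{2,k}}(\xi_1\circ\tau,\xi_2\circ\tau)$. A density's Fourier coefficient is just the expectation of the conjugate character under the corresponding law. What you genuinely use is the injectivity of $\xi\mapsto\xi\circ\tau$ for surjective $\tau$.
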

\begin{proof}
    For each permutation, put $Z_i=\rp_i(x')-\rp_i(x)$. The variable $\rp_i(x)$ is uniform on $G$, $Z_i$ is uniform on $G\setminus\{0\}$, and the two are independent. The pairs are independent across $i$. Hence $F(x)$ is uniform and independent of $S_k=Z_1+\cdots+Z_k$, while $F(x')=F(x)+S_k$. If $p_j(z)=\Pr[S_j=z]$, then convolution with the uniform distribution on $G\setminus\{0\}$ gives
    \[
        p_{j+1}(z)=\frac{1-p_j(z)}{N-1}.
    \]
    Subtracting $1/N$ and iterating from $p_1(0)=0$, $p_1(z)=1/(N-1)$ for $z\neq0$, yields
    \[
        p_k(z)=\frac1N+\eta_k\left(\indicate[\{0\}](z)-\frac1N\right).
    \]
    The claimed density identity follows from
    $\varphi_{2,k}(y,y')=N^2\Pr[F(x)=y,F(x')=y']=Np_k(y'-y)$. Pushing forward through $\tau$ sends uniform measure on $G$ to uniform measure on $H$, yielding~\eqref{eq:sop-two-point-pushforward}.\qed
\end{proof}

For a fixed pair $\{x,x'\}$, the corresponding support-two density projection therefore has
\[
    \|\Pi_{\{x,x'\}}\varphi\|_2^2
    =(N-1)|\eta_k|^2,
\]
and its total-variation contribution is
\[
    |\eta_k|\left(1-\frac1N\right)
    =\frac1{N(N-1)^{k-1}}.
\]
Consequently, after restricting to a fixed set $Q$ of $q$ inputs (by Remark~\ref{rem:classical-localisation}),
\begin{equation}\label{eq:sop-classical-level-two}
    |\Delta_2|
    \leq
    \min\left\{
        \frac{\binom q2}{N(N-1)^{k-1}},
        \frac{|\eta_k|}{2}\sqrt{\binom q2(N-1)}
    \right\}.
\end{equation}
The first estimate uses the pairwise total variation and the triangle inequality: since $\Pi_{\{x,x'\}}\varphi$ has mean zero, $|\langle\Pi_{\{x,x'\}}a,\Pi_{\{x,x'\}}\varphi\rangle| = |\langle a-\tfrac{1}{2},\Pi_{\{x,x'\}}\varphi\rangle| \leq \tfrac{1}{2}\|\Pi_{\{x,x'\}}\varphi\|_1$, as in~\eqref{eq:oracle-to-density}. The second is Cauchy-Schwarz on level two.

\begin{theorem}\label{thm:sop-k-classical}
    Fix $k\geq2$, and let $\rf\colon G\to G$ be uniform. For every $q<N$,
    \[
        \compdist{(q)}{\sop[k]}{\rf}
        \leq
        \begin{cases}
            0, & 0\leq q\leq1,\\[1mm]
            O_k(q^2/N^k), & 2\leq q\leq\sqrt N,\\[1mm]
            O_k(q/N^{k-1/2}), & \sqrt N<q<N.
        \end{cases}
    \]
    In particular, $\compdist{(q)}{\sop[k]}{\rf}\leq O_k(q/N^{k-1/2})$ throughout $q < N$.
\end{theorem}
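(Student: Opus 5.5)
We reduce to a fixed set of inputs and read off the bound level by level. By Remark~\ref{rem:classical-localisation}, a classical $q$-query distinguisher may be assumed to depend only on a fixed set $Q$ of $q$ inputs. Since $q<N$, the outputs of $\sop[k]$ on $Q$ have density $\varphi_{q,k}=\unidens[\mci_q]^{*k}$, with Fourier coefficients $\funidens[\mci_q](\chi)^k$. The uniform function $\rf$ has density $\indicate[]$. By~\eqref{eq:oracle-to-density}, the advantage is at most $\frac12\|\varphi_{q,k}-1\|_2$.

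\textbf{Trivial cases.} For $q\leq1$ the single output is uniform, so the advantage is $0$. Equivalently, level one vanishes by~\eqref{eq:vanishing-first-level}.

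\textbf{Global bound.} By Parseval,
\[
    \|\varphi_{q,k}-1\|_2^2=\sum_{t=1}^q\binom qt\Theta_{t,2k}.
\]
By~\eqref{eq:sop-global-moment}, this sum is $O_k(q^2/N^{2k-1})$. Taking square roots gives the bound $O_k(q/N^{k-1/2})$ for every $q<N$. This already settles the range $\sqrt N<q<N$ and the final ``in particular'' claim.

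\textbf{Refined bound for $2\leq q\leq\sqrt N$.} We apply Lemma~\ref{lem:layered-fourier-bound} with $d=q$ and $r=2$, with the remainder restricted to $S\subseteq Q$.

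Level one vanishes, so $\Delta_1=0$. For level two, the first estimate in~\eqref{eq:sop-classical-level-two} gives
\[
    |\Delta_2|\leq\binom q2\big/\bigl(N(N-1)^{k-1}\bigr)=O_k(q^2/N^k).
\]

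The remainder is $\frac12\bigl(\sum_{t=3}^q\binom qt\Theta_{t,2k}\bigr)^{1/2}$, and we bound it in two regimes.
\begin{itemize}
    \item For $N\geq1024$, $q\leq\sqrt N\leq8N/15$ is in range for~\eqref{eq:common-injection-tail} with $T=m=q$ (if $q<3$ the remainder is empty). This bounds the sum by
    \[
        \frac{\binom q3}{\binom N3}\cdot\frac{13}{4N^2}\binom N4^{-(k-2)}=O_k\!\left(\frac{q^3}{N^3}\cdot\frac1{N^2}\cdot\frac1{N^{4k-8}}\right).
    \]
    Hence the remainder is $O_k(q^{3/2}N^{-(2k-3/2)})$. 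Since $k\geq2$ gives $2k-3/2\geq k+1/2$, this is at most $O_k(q^{3/2}/N^{k+1/2})\leq O_k(q^2/N^k)$.
    \item For $N<1024$, the constant depends only on $k$ once $N$ is bounded, so the trivial bound $1\leq O_k(q^2/N^k)$ covers this case.
\end{itemize}
Combining the level-two term and the remainder gives $O_k(q^2/N^k)$.

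\textbf{Main point.} The only delicate step is checking that the level-$\geq3$ tail from Lemma~\ref{lem:extended-injection-moments} is dominated by the level-two term $q^2/N^k$. The exponent count above confirms this, with room to spare: even the crude bound $\Lambda_t\leq\lambda_N$ would give $\Theta_{t,2k}\leq\lambda_N^{2k-4}\Theta_{t,4}$, which suffices once the $\binom q3/\binom N3$ factor is in place. Everything else is bookkeeping of the constants in $O_k$.
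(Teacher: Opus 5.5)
Your proposal is correct and follows the paper's proof essentially step for step. The global bound comes from Parseval and the smoothed fourth-moment estimate~\eqref{eq:sop-global-moment}, and the refined bound from Lemma~\ref{lem:layered-fourier-bound} with $r=2$, the pairwise total-variation estimate in~\eqref{eq:sop-classical-level-two} at level two, and~\eqref{eq:common-injection-tail} with $m=T=q$ for levels $t\geq3$; this gives the same remainder $O_k(q^{3/2}/N^{2k-3/2})$, with $N<1024$ absorbed into the constant (the only cosmetic slip is that passing from the advantage to $\frac12\|\varphi_{q,k}-1\|_2$ uses~\eqref{eq:density-norm-tower} together with~\eqref{eq:oracle-to-density}).
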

\begin{proof}
    The case $q=0$ is immediate. For the global estimate with $q\geq1$, Remark~\ref{rem:classical-localisation} reduces the analysis to a fixed set of $q$ inputs. The density there is $\varphi_{q,k}=\unidens[\mci_q]^{*k}$, and Proposition~\ref{prop:eberhard-fourth-moment}, followed by Proposition~\ref{prop:sop-fourier-smoothing}, gives
    \[
        \|\varphi_{q,k}-1\|_2
        \leq O_k\left(\frac q{N^{k-1/2}}\right).
    \]
    The statistical-distance bound follows from~\eqref{eq:density-norm-tower}.
    
    For $q\leq1$, the advantage is zero because every one-point marginal is uniform. Assume now $2\leq q\leq\sqrt N$. By Remark~\ref{rem:classical-localisation}, fix a set $Q$ of $q$ queried inputs. Equation~\eqref{eq:vanishing-first-level} removes level one, while~\eqref{eq:sop-classical-level-two} gives $O_k(q^2/N^k)$ at level two. For $N\geq1024$ and $q\geq3$, apply~\eqref{eq:common-injection-tail} with $m=T=q$. Lemma~\ref{lem:layered-fourier-bound} then bounds the remainder by
    \[
        O_k\left(\frac{q^{3/2}}{N^{2k-3/2}}\right)
        =O_k\left(\frac{q^2}{N^k}\right).
    \]
    For $q=2$, the remainder is empty. The finitely many smaller group orders are absorbed into the implicit constant.\qed
\end{proof}

\subsection{Quantum Security}\label{subsec:sop-quantum}
We first state the bounds for the full range $2q<N$. They combine a direct analysis of the low-support Fourier components with the simulation argument of Section~\ref{sec:simulation}.

\begin{theorem}\label{thm:sop-quantum-combined-local}
    Let $G$ be a finite abelian group of order $N$, and let $\rf\colon G\to G$ be uniform. For $1\leq q\leq(N-1)/2$,
    \[
        \compdist{(q)}{\ket{\sop[2]}}{\ket\rf}
        \leq
        O\!\left(
            \min\left\{
                N^{-1/2},
                \frac{q^3}{N^2}+\frac1N
            \right\}
        \right).
    \]
    For every fixed $k\geq3$,
    \[
        \compdist{(q)}{\ket{\sop[k]}}{\ket\rf}
        \leq
        O_k\!\left(
            \min\left\{
                \frac{q^3}{N^k},
                \frac{1}{\sqrt{N^{2k-3}}}
            \right\}
        \right).
    \]
    For $q=1$, the $k=2$ bound sharpens to $O(N^{-2})$.
\end{theorem}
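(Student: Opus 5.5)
The proof combines two independent estimates and takes the minimum. The $q$-independent part, $O_k(N^{-(k-3/2)})$ (which for $k=2$ is $N^{-1/2}$), comes from the checksum simulation of Section~\ref{sec:simulation}, which we take as given. Fix a set $P$ of $N-1$ inputs. Let $\rf'$ be the function that is uniform on $P$ and completed at the remaining input by the checksum $\sum_x \rf'(x)=k\sum_y y$. Then $\rf'$ is $(N-1)$-wise independent. Since a $q$-query acceptance function has pure support of size at most $2q\leq N-1$ (Lemma~\ref{lem:fourier-zhandry-method}), all components of $\rf'$ that $a$ can see vanish, and $\rf'$ is perfectly indistinguishable from $\rf$. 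Both $\sop[k]$ and $\rf'$ are determined by their restrictions to $P$, and so their complete-table statistical distance equals the distance of these restrictions. This distance is at most $O_k((N-1)/N^{k-1/2})=O_k(N^{-(k-3/2)})$ by the global estimate of Theorem~\ref{thm:sop-k-classical}. The triangle inequality then gives the claim.

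For the $q^3$ bound, we apply Lemma~\ref{lem:layered-fourier-bound} to the complete-table density $\varphi=\unidens[\mcs]^{*k}$ with $d=2q$ and $r=2$. Level one vanishes by~\eqref{eq:vanishing-first-level}. For the levels $3\leq t\leq 2q$, Remark~\ref{rem:complete-permutation-density-applicable} identifies the coefficients with injection coefficients, so $\sum_{3\leq|S|\leq2q}\|\Pi_S\varphi\|_2^2=\sum_{t=3}^{2q}\binom Nt\Theta_{t,2k}$. When $2q\leq 8N/15$ and $N\geq1024$, \eqref{eq:common-injection-tail} with $m=N$ bounds this by $O(N^{-2}\binom N4^{-(k-2)})$, so the remainder is $O_k(N^{-1-2(k-2)})=O_k(N^{-(2k-3)})$. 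This is at most $O(1/N)$ for $k=2$ and dominated by $q^3/N^k$ for $k\geq3$, since $2k-3\geq k$. For $q$ beyond $4N/15$ we have $q^3/N^k\gtrsim N^{3-k}$, which exceeds the simulation bound, so the minimum is unaffected, and the finitely many small $N$ are absorbed into the constants.

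The main obstacle is level two, $\Delta_2(a,\varphi)$. By Lemma~\ref{lem:sop-two-point-law}, $\varphi_{=2}=\eta_k\sum_{\{x,x'\}}(N\indicate[\{f(x)\}](f(x'))-1)=\eta_k N\,(C(f)-\mathbf E_{\rf}C)$, where $C$ counts output collisions. Cauchy--Schwarz over all $\binom N2$ pairs gives only $|\eta_k|N^{3/2}$, which is useless for $k=2$. Instead, we bound $|\mathbf E[a\,C]-\mathbf E_{\rf}[a\,C]|$ through a small-range argument in the style of Zhandry. We sample $\rf$ as $h\circ s$, with $s\colon G\to[R]$ uniform and $h\colon[R]\to G$ uniform, and let $R\to\infty$. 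The acceptance probability of a $q$-query algorithm is a polynomial in $1/R$, and the distribution of $\rf$ conditioned on the collision pattern of $s$ can be expanded around the injective case. The first-order term in $1/R$ is exactly the centred collision count weighted by pairs. Zhandry's bound on the derivative of a low-degree polynomial of degree at most $2q$ then shows that the correlation of $a$ with the centred collision count is $O(q^3/N)$ after normalisation. Multiplying by $|\eta_k|N\approx N^{1-k}$ gives $O(q^3/N^k)$, which for $k=2$ is the $q^3/N^2$ term. This is the step I expect to need the most care: one must match the normalisation of the centred collision count precisely to the $1/R$ derivative.

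Finally, for $q=1$ and $k=2$, the pure supports of $a$ are contained in single pairs $\{x,x'\}$ (level at most two). So the advantage is exactly $|\Delta_2|$, and there is no remainder term. A one-query acceptance function has the form $a(f)=\sum_{x,x'}\langle\cdot\rangle$ of products of one function of $f(x)$ and one of $f(x')$, with total weight bounded via the unitarity of the query. Writing $a_{=2}$ explicitly and pairing it with $\eta_2(N\indicate[\{y\}](y')-1)$ gives a bound of $|\eta_2|\cdot O(1)=O(N^{-2})$, because the diagonal collision functional applied to a one-query state has norm $O(1)$. This uses the same computation as the one-query attack, run in reverse.
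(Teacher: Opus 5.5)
Your proposal is correct and follows the paper's proof essentially step for step. The $q$-independent term comes from the checksum completion (Theorem~\ref{thm:completion-lift} with $d=N-1$), level two from $\varphi_{=2}=\eta_k K_G$ together with Zhandry's small-range polynomial bound (Lemma~\ref{lem:quantum-centred-collisions}), and levels $3$ through $2q$ from \eqref{eq:common-injection-tail} with $m=N$. The differences are only cosmetic: you apply the local bound up to $q\le 4N/15$ instead of $q\le\sqrt N$, and your separate one-query argument is unnecessary, since your level-two estimate at $q=1$ already gives $O(N^{-2})$ once the remainder vanishes, which is exactly how the paper concludes.
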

The bounds independent of $q$ follow from Corollary~\ref{cor:sop-quantum-completion}. Appendix~\ref{sec:sop-query-dependent} treats level two using Zhandry's small-range distributions and proves the query-dependent bounds, including the one-query refinement.

The direct Fourier calculation also gives explicit constants. Lemma~\ref{lem:extended-injection-moments} applies up to support size $8N/15$, which allows $q\leq4N/15$ quantum queries. This extends the range $q\leq N/4$ obtained from moment estimates restricted to support size at most $N/2$.

\begin{corollary}\label{cor:sop-quantum-concrete}
    Let $G$ be a finite abelian group of order $N\geq1024$, fix $k\geq2$, and let $\rf\colon G\to G$ be uniform. For $0\leq q\leq4N/15$, we have
    \[
        \compdist{(q)}{\ket{\sop[k]}}{\ket\rf}
        \leq
        \sqrt{
            \frac{N}{8(N-1)^{2k-2}}
            +\frac{13}{16N^2}\binom N4^{-(k-2)}
        }
        <0.355\,\frac{\sqrt N}{(N-1)^{k-1}}.
    \]
\end{corollary}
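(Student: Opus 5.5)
We apply Theorem~\ref{thm:quantum-fourier} directly to the complete-table density $\varphi=\unidens[\mcs]^{*k}$ on $G^G$. For $q\leq 4N/15$ we have $2q\leq 8N/15<N$, so every support $S$ with $1\leq|S|\leq2q$ is a proper subset of $G$. For such $S$, Remark~\ref{rem:complete-permutation-density-applicable} identifies the coefficients: $\widehat\varphi(\chi)=\funidens[\mci_t](\chi|_S)^k$ with $t=|S|$. Grouping by level and using Parseval~\eqref{eq:fourier-inversion},
\[
    \|\Pi_{\leq2q}(\varphi-1)\|_2^2=\sum_{t=1}^{2q}\binom Nt\Theta_{t,2k}.
\]
Level one vanishes by~\eqref{eq:vanishing-first-level}, since $\Theta_{1,2k}=0$: a one-point marginal is uniform. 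The case $q=0$ is trivial, so we split the remaining sum into level two and levels $3\leq t\leq 2q$.

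\textbf{Level two.} Lemma~\ref{lem:sop-two-point-law} gives, for each pair, $\|\Pi_{\{x,x'\}}\varphi\|_2^2=(N-1)\eta_k^2=(N-1)^{1-2k}$. Summing over the $\binom N2$ pairs yields
\[
    \binom N2\Theta_{2,2k}=\frac{N(N-1)}2\cdot\frac{1}{(N-1)^{2k-1}}=\frac{N}{2(N-1)^{2k-2}}.
\]

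\textbf{Levels $t\geq 3$.} When $q\geq 2$, set $T=2q\leq 8N/15$ and $m=N$ in~\eqref{eq:common-injection-tail}. The binomial ratio is then $1$, which gives
\[
    \sum_{t=3}^{2q}\binom Nt\Theta_{t,2k}\leq\frac{13}{4N^2}\binom N4^{-(k-2)}.
\]
If $q=1$, this range is empty and the bound holds trivially.

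\textbf{Conclusion.} Multiplying the sum of the two contributions by $\tfrac14$ (the square of the factor $\tfrac12$ in Theorem~\ref{thm:quantum-fourier}) and taking square roots gives exactly the displayed first inequality. The coefficients $\tfrac N8(N-1)^{-(2k-2)}$ and $\tfrac{13}{16N^2}$ arise precisely this way.

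The final numerical step is the only point needing care. We must show that the second term under the root is at most a tiny multiple of the first. Their ratio is
\[
    \frac{13}{16N^2}\binom N4^{-(k-2)}\cdot\frac{8(N-1)^{2k-2}}{N}
    =\frac{13(N-1)^{2k-2}}{2N^3}\binom N4^{-(k-2)}.
\]
For $k=2$ this is $13(N-1)^2/(2N^3)\leq 13/(2N)\leq 0.0064$ when $N\geq1024$. For larger $k$, each increment of $k$ multiplies the ratio by $(N-1)^2/\binom N4=24(N-1)/\bigl(N(N-2)(N-3)\bigr)\ll1$, so the ratio only decreases. Hence the whole expression is at most
\[
    \sqrt{\tfrac18\cdot1.0064}\cdot\frac{\sqrt N}{(N-1)^{k-1}}\approx0.3547\,\frac{\sqrt N}{(N-1)^{k-1}}<0.355\,\frac{\sqrt N}{(N-1)^{k-1}}.
\]
The substantive content is entirely in Lemma~\ref{lem:extended-injection-moments}, which we take as given, so we expect no real obstacle beyond checking this constant margin.
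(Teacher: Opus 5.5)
Your proposal is correct and follows the paper's proof: you regroup the complete-table coefficients by level via Remark~\ref{rem:complete-permutation-density-applicable}, take the exact level-two mass $N/(2(N-1)^{2k-2})$ from Lemma~\ref{lem:sop-two-point-law}, bound levels $3$ through $2q$ by \eqref{eq:common-injection-tail} with $m=N$ and $T=2q$, and then apply Theorem~\ref{thm:quantum-fourier}. Your numerical step bounds the ratio of the two terms by $13/(2N)$ and notes that each increment of $k$ multiplies it by $(N-1)^2/\binom N4<1$. This is the same estimate as the paper's $\frac18+\frac{13}{16\cdot1024}<0.355^2$.
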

\begin{proof}
    The case $q=0$ is immediate, so assume $q\geq1$. Regrouping the complete-table Fourier coefficients by support, and using Remark~\ref{rem:complete-permutation-density-applicable} and $\Theta_1 = 0$, gives
    \[
        \sum_{1\leq|\chi|\leq2q}|\funidens[\mcs](\chi)|^{2k}
        =\sum_{t=2}^{2q}\binom Nt\Theta_{t,2k}.
    \]
    The exact support-two contribution is $N/(2(N-1)^{2k-2})$, by Lemma~\ref{lem:sop-two-point-law}. Equation~\eqref{eq:common-injection-tail}, with $m=N$ and $T=2q$, bounds the remaining contribution; this remainder is zero when $q=1$.
    Consequently,
    \[
        \sum_{1\leq|\chi|\leq2q}|\funidens[\mcs](\chi)|^{2k}
        \leq
        \frac{N}{2(N-1)^{2k-2}}
        +\frac{13}{4N^2}\binom N4^{-(k-2)}.
    \]
    Theorem~\ref{thm:quantum-fourier} gives the first inequality. For the second, $(N-1)^2\leq\binom N4$ when $N\geq1024$, so the square of the bound, after division by $N/(N-1)^{2k-2}$, is at most
    \[
        \frac18+\frac{13(N-1)^2}{16N^3}
            \left(\frac{(N-1)^2}{\binom N4}\right)^{k-2}
        \leq\frac18+\frac{13}{16\cdot1024}<0.355^2.\tag*{\qed}
    \]
\end{proof}

\subsection{Quantum Attacks over $\mathbb F_2^n$}
\label{subsec:sop-attacks}
We now take $G=\mathbb F_2^n$ and $N=2^n$. For distinct $x,x'\in G$ and nonzero $\lambda\in G$, there exists a one-query quantum algorithm that returns $b_\lambda(x,x')=\langle\lambda,f(x)\oplus f(x')\rangle$ with certainty. This is the usual two-point phase-query algorithm, written with an arbitrary output mask; see also Bonnetain et al.~\cite{BonnetainLNS2021:Linearization:ASIACRYPT}. The algorithm and its proof are given in Appendix~\ref{sec:sop-two-point-attack}.

\begin{proposition}\label{prop:sop-one-query-attack}
    For every $k\geq2$, there exists a one-query quantum algorithm that distinguishes $\ket{\sop[k]}$ from $\ket\rf$ with advantage $1/(2(N-1)^k)$.
\end{proposition}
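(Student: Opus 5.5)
Fix distinct inputs $x,x'\in G$ and a nonzero mask $\lambda\in G$. Use the one-query two-point algorithm described just above (proved in Appendix~\ref{sec:sop-two-point-attack}); it returns the bit $b_\lambda(x,x')=\langle\lambda,f(x)\oplus f(x')\rangle$ with certainty. The distinguisher makes that single query and outputs $1$ exactly when $b_\lambda=0$. Its acceptance function is therefore
\[
    a(f)=\indicate[]\bigl[\langle\lambda,f(x)+f(x')\rangle=0\bigr],
\]
which depends only on the coordinates $x$ and $x'$. The advantage is then the difference, between the two oracle distributions, of the probability that $\langle\lambda,F(x)+F(x')\rangle=0$. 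So the whole argument reduces to a two-point classical computation.

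For the uniform function $\rf$, the value $\rf(x)+\rf(x')$ is uniform on $G$. Since $\lambda\neq0$, the bit $\langle\lambda,\cdot\rangle$ is unbiased, so the acceptance probability is exactly $1/2$.

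For $F=\sop[k]$, I would invoke Lemma~\ref{lem:sop-two-point-law}. It says $F(x)$ is uniform and independent of $S_k=F(x')-F(x)$, and that
\[
    \Pr[S_k=z]=\frac1N+\eta_k\Bigl(\indicate[\{0\}](z)-\frac1N\Bigr),\qquad \eta_k=\Bigl(-\frac1{N-1}\Bigr)^k .
\]
Over $\mathbb F_2^n$ we have $F(x)+F(x')=S_k$. The set $\lambda^\perp=\{z:\langle\lambda,z\rangle=0\}$ has size $N/2$ and contains $0$. Summing the law of $S_k$ over it gives
\[
    \Pr[\langle\lambda,S_k\rangle=0]=\frac12+\eta_k\Bigl(1-\frac{N/2}{N}\Bigr)=\frac12+\frac{\eta_k}{2}.
\]
As a cross-check in Fourier language, this is $\tfrac12\bigl(1+\widehat{\varphi_{2,k}}(\chi_\lambda,\chi_\lambda)\bigr)$ with $\widehat{\varphi_{2,k}}(\chi_\lambda,\chi_\lambda)=\funidens[\mci_2](\chi_\lambda,\chi_\lambda)^k=(-1/(N-1))^k$. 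The single-injection coefficient follows from the merging identity~\eqref{eq:injection-coefficient-merging} with $t=2$.

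The advantage is therefore $|\eta_k|/2=1/(2(N-1)^k)$, as claimed; for odd $k$ one may flip the output bit, though the absolute value already handles the sign. The only genuinely nontrivial step is the correctness of the one-query phase algorithm, which is deferred to the appendix. The remaining work is the short summation above, so I expect no real obstacle beyond getting the sign of $\eta_k$ and the count $|\lambda^\perp|=N/2$ right.
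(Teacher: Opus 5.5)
Your proposal is correct and follows essentially the same route as the paper: run the two-point phase algorithm, accept exactly when $b_\lambda(x,x')=0$, note that this bit is uniform under $\rf$, and use Lemma~\ref{lem:sop-two-point-law} to get bias $\eta_k/2$ under $\sop[k]$. The paper writes the last step as $\mathbf E[(-1)^{b_\lambda(x,x')}]=\eta_k$ rather than summing the law of $S_k$ over $\lambda^\perp$, but this is the same computation.
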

\begin{proof}
    Fix distinct $x,x'\in G$ and nonzero $\lambda\in G$. The one-query algorithm computes $b_\lambda(x,x')$. Use the acceptance function $a(f)=\frac12(1+(-1)^{b_\lambda(x,x')})$, which accepts exactly when $b_\lambda(x,x')=0$. This bit is uniform under $\rf$. By Lemma~\ref{lem:sop-two-point-law}, $\mathbf E[(-1)^{b_\lambda(x,x')}]=(-1/(N-1))^k$ under $\sop[k]$. Hence the distinguishing advantage is $1/(2(N-1)^k)$.\qed
\end{proof}

\begin{proposition}\label{prop:sop-half-domain-attack}
    For $n\geq2$ and $k\geq2$, there exists an $N/2$-query quantum algorithm that distinguishes $\ket{\sop[k]}$ from $\ket\rf$ with advantage $1/2$.
\end{proposition}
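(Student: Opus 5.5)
The plan is to use the global checksum from the introduction, read off one linear bit of it with the two-point phase queries of Appendix~\ref{sec:sop-two-point-attack}. Take $G=\mathbb F_2^n$ with $n\geq2$. First I would record that $\sum_{y\in G}y=0$. Each coordinate of $\mathbb F_2^n$ equals $1$ in exactly $2^{n-1}$ elements, which is even when $n\geq2$. Hence for every outcome of $\rp_1,\ldots,\rp_k$,
\[
    \sum_{x\in G}\sop[k](x)=\sum_{i=1}^k\sum_{x\in G}\rp_i(x)=k\sum_{y\in G}y=0,
\]
since each $\rp_i$ is a bijection. For the uniform random function $\rf$, the sum $\sum_x\rf(x)$ is instead uniform on $G$.

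Next I would fix a nonzero mask $\lambda\in G$ and partition $G$ into $N/2$ disjoint pairs $\{x_j,x'_j\}$, for $j=1,\ldots,N/2$. For each pair, the one-query algorithm of Appendix~\ref{sec:sop-two-point-attack} returns $b_\lambda(x_j,x'_j)=\langle\lambda,f(x_j)\oplus f(x'_j)\rangle$ with certainty. Running it once per pair uses exactly $N/2$ queries, and the algorithm then outputs the XOR of the results. By linearity of the inner product, this XOR equals
\[
    \bigoplus_{j}b_\lambda(x_j,x'_j)=\Bigl\langle\lambda,\sum_{x\in G}f(x)\Bigr\rangle.
\]
The algorithm accepts exactly when this bit is $0$.

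Under $\ket{\sop[k]}$ the bit is always $0$ by the checksum, so the acceptance probability is $1$. Under $\ket\rf$, $\sum_x\rf(x)$ is uniform on $G$ and $\lambda\neq0$, so $\langle\lambda,\sum_x\rf(x)\rangle$ is a uniform bit and the acceptance probability is $1/2$. The advantage is therefore $1/2$.

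The only step needing care is the correctness of the two-point subroutine. It must return the bit deterministically, with no residual entanglement that could disturb later rounds. This is provided by Appendix~\ref{sec:sop-two-point-attack}, and since the subroutines act on fresh registers, their outcomes compose classically. The hypothesis $n\geq2$ is exactly what makes the checksum vanish. For $n=1$, $\sum_y y=1$ and the checksum becomes $k\bmod2$, which is still a fixed value. So the argument would adapt there too, but the statement does not need that case.
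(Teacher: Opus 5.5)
Your proposal is correct and follows the paper's proof essentially verbatim. You fix a nonzero mask $\lambda$, pair up $G$ into $N/2$ pairs, and run the two-point phase-query subroutine on each pair; XORing the bits gives $\langle\lambda,\sum_{x}f(x)\rangle$, which is identically $0$ for $\sop[k]$ (as $\sum_{y\in G}y=0$ when $n\geq2$) and a uniform bit under $\rf$. Your added coordinate-count justification of the vanishing checksum and your remark on the $n=1$ case are both correct.
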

\begin{proof}
    Fix nonzero $\lambda\in G$ and partition $G$ into $N/2$ disjoint pairs. Applying the one-query algorithm to each pair and XORing the outputs computes $\bigoplus_{x\in G}\langle\lambda,f(x)\rangle$ using $N/2$ queries. For $f=\sop[k]$, this bit is zero because every permutation ranges over $G$ and $\bigoplus_{y\in G}y=0$ for $n\geq2$. Under $\rf$, it is uniform. Accepting when it is zero gives advantage $1/2$.\qed
\end{proof}
For $G=\mathbb F_2^n$ with $n\geq2$, the one-query attack matches the asymptotic order of our quantum upper bound at $q=1$. The matching half-domain threshold is discussed in Section~\ref{sec:simulation}.

\section{Quantum Security by Completion}\label{sec:simulation}
We next use the checksum invariance property to extend quantum security to every query count below half the domain size. The argument applies more generally when any prescribed values on a fixed number of inputs have a unique completion within a given class of functions. Completing independent uniform values gives a reference function with limited independence, to which the quantum support cutoff applies.

\begin{theorem}\label{thm:completion-lift}
    Let $\mcx$ be a finite set of size $D$, let $G$ be a finite abelian group, and fix $1\leq d\leq D$. Let $\mathcal C\subseteq G^{\mcx}$ be such that, for every $S\subseteq\mcx$ of size $d$, the restriction map
    \[
        \mathcal C\longrightarrow G^S,\qquad f\longmapsto f|_S,
    \]
    is a bijection. Let $F\colon\mcx\to G$ be a random function taking values in $\mathcal C$, and let $\rf\colon\mcx\to G$ be uniform. For every $q$ with $2q\leq d$,
    \[
        \compdist{(q)}{\ket F}{\ket\rf}
        \leq\compdist{(d)}{F}{\rf}.
    \]
\end{theorem}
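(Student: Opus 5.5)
The plan is a triangle inequality through an intermediate reference function. Fix a set $S\subseteq\mcx$ with $|S|=d$ and let $H\colon\mcx\to G$ be the random function obtained by sampling $H|_S$ uniformly from $G^S$ and completing it to the unique element of $\mathcal C$ with that restriction. Then
\[
    \compdist{(q)}{\ket F}{\ket\rf}
    \leq
    \compdist{(q)}{\ket F}{\ket H}
    +\compdist{(q)}{\ket H}{\ket\rf},
\]
and the proof consists of showing that the second term vanishes and the first is at most $\compdist{(d)}{F}{\rf}$.

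\textbf{The term involving $H$ and $\rf$ vanishes.} I first check that $H$ is $d$-wise independent and uniform. For any $T\subseteq\mcx$ with $|T|=d$, the map $\mathcal C\to G^T$ is a bijection. The map $\mathcal C\to G^S$ is also a bijection, and $H$ is uniform on $\mathcal C$ because it is the image of a uniform element of $G^S$. Hence $H|_T$ is uniform on $G^T$. Subsets of size at most $d$ follow by marginalising.

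Consequently, for every $T$ with $|T|\leq d$, the density $\psi$ of $H$ relative to the uniform measure has the same marginal on $T$ as $\indicate[]$. Thus $\Pi_T\psi=0$ for all $1\leq|T|\leq d$, since $\Pi_T\psi$ depends only on the marginal on $T$: it is an alternating sum of conditional expectations on subsets of $T$. By Theorem~\ref{thm:quantum-fourier} with $2q\leq d$, the advantage is bounded by $\frac12\|\Pi_{\leq 2q}(\psi-1)\|_2=0$. Alternatively, Zhandry's $2q$-wise independence fact gives the same conclusion directly.

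\textbf{The term involving $F$ and $H$.} Both $F$ and $H$ take values in $\mathcal C$, and restriction to $S$ is a bijection from $\mathcal C$ to $G^S$. Let $c\colon G^S\to\mathcal C$ be the completion map. Then $F=c(F|_S)$ and $H=c(H|_S)$ in distribution. Any quantum distinguisher, indeed any function of the full table, is therefore a post-processing of the restriction to $S$. This gives
\[
    \compdist{(q)}{\ket F}{\ket H}
    \leq
    \mathrm{TVD}(F,H)
    =\mathrm{TVD}(F|_S,H|_S)
    =\mathrm{TVD}(F|_S,\rf|_S),
\]
where the final equality holds because $H|_S$ and $\rf|_S$ are both uniform on $G^S$. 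The first inequality is the bound~\eqref{eq:oracle-to-density}, applied to the full-table densities.

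It remains to identify $\mathrm{TVD}(F|_S,\rf|_S)$ with at most $\compdist{(d)}{F}{\rf}$. A classical algorithm that queries all $d$ points of $S$ and accepts according to the optimal set for the two restricted distributions achieves advantage exactly this TVD.

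I expect this last identification to be the main subtle step, since it must be stated carefully. The restriction is to the \emph{specific} set $S$, and the classical $d$-query distance is a supremum that includes this nonadaptive algorithm, so no input symmetry is needed. The remaining care is to note that~\eqref{eq:oracle-to-density} applies to quantum acceptance functions as well, because $a(f)\in[0,1]$ for each fixed $f$.
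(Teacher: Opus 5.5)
Your proposal is correct and follows essentially the same route as the paper. The paper also completes a uniform restriction on a fixed $d$-set $S$ to a function uniform on $\mathcal C$, uses $d$-wise independence with the support cutoff to show that this function is perfectly indistinguishable from $\rf$ when $2q\leq d$, and bounds what remains by a classical distinguisher that queries all of $S$. The only cosmetic difference is that the paper's classical distinguisher directly simulates the given quantum distinguisher's acceptance function $a_S$, whereas you go through the total variation distance of the restrictions and an optimal set-based test, which gives the same bound.
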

\begin{proof}
    Fix a set $S$ of size $d$, and let $E_S\colon G^S\to\mathcal C$ be the inverse restriction map. Then $E_S(F|_S)=F$, whereas $F_{\mathcal C}\coloneq E_S(\rf|_S)$ is uniform on $\mathcal C$. The restriction hypothesis for every $d$-point set implies that $F_{\mathcal C}$ is $d$-wise independent. Lemma~\ref{lem:fourier-zhandry-method} therefore gives
    \[
        \compdist{(q)}{\ket{F_{\mathcal C}}}{\ket\rf}=0
        \qquad(2q\leq d),
    \]
    equivalently by Zhandry's theorem~\cite[Theorem~3.1]{Zhandry2012:2q-wise:CRYPTO}.
    
    For a fixed $q$-query distinguisher $\mca$, let $a_S(T)$ be its acceptance probability when given the complete oracle $E_S(T)$. A computationally unbounded classical distinguisher can query all points of $S$ and accept with probability $a_S(T)$. Its acceptance probabilities against $F$ and $\rf$ are exactly those of $\mca$ against $F$ and $F_{\mathcal C}$. Thus
    \[
        \compdist{\mca}{\ket F}{\ket\rf}
        =\compdist{\mca}{\ket F}{\ket{F_{\mathcal C}}}
        \leq\compdist{(d)}{F}{\rf}.
    \]
    Taking the supremum over $\mca$ proves the claim.\qed
\end{proof}
The sum of permutations construction falls in the $d=D-1$ case. Indeed, we can use its checksum to reconstruct the final output.
\begin{corollary} \label{cor:sop-quantum-completion}
    Let $G$ be a finite abelian group of order $N$. Fix some $k\geq2$, and $q\leq(N-1)/2$. Let $\rf\colon G\to G$ denote a uniform random function. Then,
    \[
        \compdist{(q)}{\ket{\sop[k]}}{\ket\rf} \leq O_k\left(N^{-(k-3/2)}\right).
    \]
\end{corollary}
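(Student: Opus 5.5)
We apply Theorem~\ref{thm:completion-lift} with $\mcx=G$, $D=N$, $d=N-1$, and
\[
    \mathcal C=\Bigl\{f\in G^G:\sum_{x\in G}f(x)=k\sum_{y\in G}y\Bigr\}.
\]
Every realisation of $\sop[k]$ lies in $\mathcal C$. Each $\rp_i$ is a bijection of $G$, so $\sum_x\rp_i(x)=\sum_{y}y$, and summing over $i$ gives the checksum.

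We next check the restriction hypothesis. Fix $S\subseteq G$ with $|S|=N-1$, and let $x_0$ be the unique point outside $S$. Given any $T\in G^S$, there is exactly one completion in $\mathcal C$, namely the one with
\[
    f(x_0)=k\sum_{y}y-\sum_{x\in S}T(x).
\]
So the restriction map $\mathcal C\to G^S$ is a bijection. Since $q\leq(N-1)/2$ means $2q\leq N-1=d$, Theorem~\ref{thm:completion-lift} gives
\[
    \compdist{(q)}{\ket{\sop[k]}}{\ket\rf}\leq\compdist{(N-1)}{\sop[k]}{\rf}.
\]

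It remains to bound the classical term. Apply Theorem~\ref{thm:sop-k-classical} with $q=N-1<N$. This gives $O_k((N-1)/N^{k-1/2})=O_k(N^{-(k-3/2)})$, which is the claim.

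The only point needing care is that the classical bound must genuinely cover $N-1$ queries, not just $q<N/2$. That is why we use Theorem~\ref{thm:sop-k-classical}, which is stated for every $q<N$. It rests on Eberhard's Proposition~\ref{prop:eberhard-fourth-moment}, valid for all $d<N$, together with the smoothing bound of Proposition~\ref{prop:sop-fourier-smoothing}, which needs $d<N$ so that the padded character is non-diagonal. Both hold at $d=N-1$, so no additional obstacle arises.
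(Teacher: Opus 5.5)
Your proposal is correct and follows the paper's proof essentially verbatim. Both use the checksum class $\mathcal C_k$ with its unique completion at the missing point, then Theorem~\ref{thm:completion-lift} with $d=N-1$, then Theorem~\ref{thm:sop-k-classical} at $N-1$ classical queries. You also correctly identify why the last step is legitimate: the classical bound holds for every $q<N$, because Eberhard's fourth-moment bound and the smoothing step both need only $d<N$.
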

\begin{proof}
    Let $c_k\coloneq k\sum_{y\in G}y$ and define the set
    \[
        \mathcal C_k
        \coloneq
        \left\{f\in G^G:\sum_{x\in G}f(x)=c_k\right\}.
    \]
    Then, for any $r \in G$ and $S = G \setminus \{r\}$, the restriction map
    \[
        \mathcal C_k \longrightarrow G^S,\qquad f\longmapsto f|_S,
    \]
    is a bijection. Indeed, given any $f' \in G^S$, we get a unique $f \in \mathcal C_k$ defined as follows:
    \[
        f(x)
        \coloneq
        \begin{cases}
            f'(x), & x \neq r,\\[3pt]
            c_k-\displaystyle\sum_{x'\neq r}f'(x'), & x = r.
        \end{cases}
    \]
    Every realisation of $\sop[k]$ satisfies
    \[
        \sum_{x\in G}\sop[k](x)
        =k\sum_{y\in G}y
        =c_k,
    \]
    and hence $\sop[k] \in \mathcal C_k$. Theorem~\ref{thm:completion-lift} with $d=N-1$, followed by Theorem~\ref{thm:sop-k-classical} at $N-1$ classical queries, gives
    \[
        \compdist{(q)}{\ket{\sop[k]}}{\ket\rf}
        \leq\compdist{(N-1)}{\sop[k]}{\rf}
        \leq O_k\left(\frac{N-1}{N^{k-1/2}}\right).
    \]
    This proves the claimed bound for $2q\leq N-1$.\qed
\end{proof}

\paragraph{Sharp query threshold for the binary setting.}
When $G=\binsp$ with $n\geq2$, Corollary~\ref{cor:sop-quantum-completion} and Proposition~\ref{prop:sop-half-domain-attack} meet at consecutive query counts. For every fixed $k\geq2$, the advantage is $O_k(N^{-(k-3/2)})$ for every $q\leq N/2-1$, while an $N/2$-query algorithm has advantage at least $1/2$. Thus $N/2$ is the asymptotic constant-advantage quantum query threshold for the binary construction.

\section{Linear-Output-Postprocessing Variants}\label{sec:linear-output}
Throughout this section, $G=\mathbb F_2^n$, $N=2^n$, and we identify characters with Walsh masks. For a mask tuple $\chi\in G^{\mcx}$, write
\[
    r(\chi)\coloneq\dim\operatorname{Span}\{\chi_x:x\in\mcx\}.
\]

For $L\in\glbinsp$ and $x=(x_1,\ldots,x_d)$, write $Lx=(Lx_1,\ldots,Lx_d)$. The injection density is invariant under this diagonal action, since $L$ preserves pairwise distinctness. Averaging over the general linear group therefore controls the Fourier mass inside a fixed subspace.

\begin{proposition}\label{prop:subspace-moment-bound}
Suppose $f\colon G^d\to\reals$ is invariant under the diagonal action of $\glbinsp$, and let $\ell>0$. For every $m$-dimensional subspace $W\leq G$,
\[
    \sum_{\substack{\chi\in W^d\\\chi\neq0}}|\widehat f(\chi)|^\ell
    \leq
    \frac{|W|-1}{N-1}
    \sum_{\substack{\chi\in G^d\\\chi\neq0}}|\widehat f(\chi)|^\ell.
\]
\end{proposition}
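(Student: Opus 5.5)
Since $f$ is invariant under $x\mapsto Lx$ for every $L\in\glbinsp$, the Fourier coefficients satisfy $\widehat f(L^\top\chi)=\widehat f(\chi)$ for every mask tuple $\chi\in G^d$. This follows from the substitution $x\mapsto L^{-1}x$ in the defining expectation, together with $\chi_\eta(Lx)=\chi_{L^\top\eta}(x)$; here $L^\top\chi=(L^\top\chi_1,\ldots,L^\top\chi_d)$. Hence $|\widehat f|^\ell$ is constant on orbits of $\glbinsp$ acting diagonally on nonzero mask tuples. The plan is an averaging argument. We bound the left-hand side for a uniformly random $m$-dimensional subspace $W$, and transitivity of $\glbinsp$ on $m$-dimensional subspaces shows that the left-hand side is the same for every $W$.

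Concretely, let $W$ be fixed and let $L$ be uniform in $\glbinsp$. Then
\[
\sum_{0\neq\chi\in W^d}|\widehat f(\chi)|^\ell=\mathbf E_L\sum_{0\neq\chi\in W^d}|\widehat f(L^\top\chi)|^\ell=\sum_{0\neq\chi'\in G^d}|\widehat f(\chi')|^\ell\,\Pr_L\bigl[\chi'\in (L^\top W)^d\bigr].
\]
Since $L^\top$ ranges uniformly over $\glbinsp$, the space $L^\top W$ is a uniform $m$-dimensional subspace $W'$. So it suffices to show that for every nonzero $\chi'$,
\[
\Pr_{W'}\bigl[\operatorname{Span}\{\chi'_i\}\subseteq W'\bigr]\leq\frac{|W|-1}{N-1}.
\]

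For this, let $r=r(\chi')\geq1$ and pick a nonzero vector $v\in\operatorname{Span}\{\chi'_i\}$. Containment of the span implies $v\in W'$. Since $\glbinsp$ is transitive on nonzero vectors, a uniform $W'$ contains a fixed nonzero $v$ with probability $(2^m-1)/(2^n-1)$. This holds because each of the $2^m-1$ nonzero elements of $W'$ is equally likely to be any of the $N-1$ nonzero vectors. This gives exactly the factor $(|W|-1)/(N-1)$. Because all summands are nonnegative, the inequality follows. (One could sharpen this to $\prod_{i<r}(2^m-2^i)/(2^n-2^i)$, which is presumably what the later refined quantum estimate uses, but it is not needed here.)

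The only delicate step is the first one: confirming that diagonal invariance of $f$ really transfers to invariance of $\widehat f$ under $L^\top$, with the transpose appearing correctly, and that $L^\top$ is again uniform on $\glbinsp$. Both are routine, since transposition is a bijection of $\glbinsp$ and the substitution preserves the uniform measure on $G^d$. The rest is counting. The hypothesis $\ell>0$ only ensures that the summands are nonnegative.
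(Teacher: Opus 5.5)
Your proposal is correct and follows the paper's proof. You transfer the diagonal invariance to $\widehat f(L^\top\chi)=\widehat f(\chi)$, average over a uniform $L\in\glbinsp$, and bound the probability that the span of the masks lies in $L^\top W$ by $(|W|-1)/(N-1)$. The only difference is cosmetic: you get that last bound from containment of a single nonzero vector, whereas the paper computes the exact rank-dependent probability $p_{n,m}(r)$ and uses $p_{n,m}(r)\leq p_{n,m}(1)$, a refinement it later reuses for the truncated quantum bound.
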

\begin{proof}
    For $S\subseteq G^d$, write
    \[
        M_\ell(S)\coloneq\sum_{\substack{\chi\in S\\\chi\neq0}}|\widehat f(\chi)|^\ell.
    \]
    Using the invariance of $f$, we have $\widehat f(L^\top\chi)=\widehat f(\chi)$ for all $L\in\glbinsp$. It follows that $M_\ell(W^d)=M_\ell((L^\top W)^d)$. Averaging over $L\sim\unimes[\glbinsp]$ gives
    \begin{equation}\label{eq:subspace-main-bound}
        M_\ell(W^d)
        =\sum_{\substack{\chi\in G^d\\\chi\neq0}}
        |\widehat f(\chi)|^\ell
        \Pr_L\left[\operatorname{Span}\{\chi_1,\ldots,\chi_d\}\subseteq L^\top W\right].
    \end{equation}
    We claim that
    \begin{equation} \label{eq:span-containment}
        \pr{\mathrm{Span}\{\chi_1,\ldots,\chi_d\} \subseteq L^{\top}W} \leq \frac{|W|-1}{|G|-1}.
    \end{equation}
    Let $r=r(\chi)$. The subspace $L^{\top}W$ is uniform among the $m$-dimensional subspaces of $G$. For $r \leq m$, its probability of containing a fixed $r$-dimensional subspace is
    \begin{equation} \label{eq:subspace-membership-probability}
        p_{n,m}(r) = \frac{\binom{n-r}{m-r}_2}{\binom{n}{m}_2} = \prod_{i=0}^{r-1} \frac{2^m - 2^i}{2^n-2^i},
    \end{equation}
    and $p_{n,m}(r)=0$ for $r>m$. Here $\binom{\,}{\,}_2$ denotes the Gaussian binomial coefficient. Since $p_{n,m}(r)\leq p_{n,m}(1) = (|W|-1)/(N-1)$ for $r \geq 1$, claim~\eqref{eq:span-containment} follows. Substituting it in~\eqref{eq:subspace-main-bound} proves the result.\qed
\end{proof}
The same argument applies after restricting to any set of masks invariant under the diagonal action; in particular, support size and rank may be fixed simultaneously.

\subsection{Generalised and Truncated Sums of Permutations}\label{subsec:trunc-sop}
Fix $0\leq m\leq n$, let $H=\mathbb F_2^m$, $M=2^m$, and let $\tau\colon G\to H$ be a surjective linear map. Define
\[
    \tau\sop[k](x)\coloneq\tau(\sop[k](x))
    =\tau(\rp_1(x))+\cdots+\tau(\rp_k(x)).
\]
Projection onto $m$ fixed coordinates recovers the truncated sum of permutations studied in~\cite{ChoiKLL2022:TSOP:ASIACRYPT}. Proposition~\ref{prop:subspace-moment-bound} and~\eqref{eq:sop-global-moment} give the following improved classical indistinguishability bound for $\tau\sop[k]$.
\begin{theorem}\label{thm:generalised-sop-classical}
Fix $k\geq2$ and $q<N$, and let $\rf\colon G\to H$ be uniform. Then
\[
    \compdist{(q)}{\tau\sop[k]}{\rf}
    \leq O_k\left(\frac{q\sqrt M}{N^k}\right).
\]
\end{theorem}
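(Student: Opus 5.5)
By Remark~\ref{rem:classical-localisation}, which applies because $\tau\sop[k]$ is invariant in distribution under input permutations, I would restrict the distinguisher to a fixed set $Q$ of $q$ inputs. By~\eqref{eq:density-norm-tower} it then suffices to bound $\|\nu_{q,k}-1\|_2$, where $\nu_{q,k}$ is the density of $(\tau\sop[k](x))_{x\in Q}$ on $H^q$ relative to the uniform measure.

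\textbf{Expressing the Fourier coefficients.} For a mask tuple $\eta\in H^q$,
\[
    \widehat{\nu_{q,k}}(\eta)
    =\mathbf E\Bigl[\prod_x\chi_{\eta_x}(\tau\sop[k](x))\Bigr]
    =\mathbf E\Bigl[\prod_x\chi_{\tau^\top\eta_x}(\sop[k](x))\Bigr]
    =\funidens[\mci_q](\tau^\top\eta)^k .
\]
The second equality uses the Walsh pull-back rule. The last equality uses independence of the $k$ permutations together with the fact that the coefficients are real. Since $\tau$ is surjective, $\tau^\top$ is injective. Hence $\eta\mapsto\tau^\top\eta$ is a bijection from $H^q\setminus\{0\}$ onto $W^q\setminus\{0\}$, where $W=\operatorname{Im}(\tau^\top)$ has dimension $m$ and $|W|=M$. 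By Parseval,
\[
    \|\nu_{q,k}-1\|_2^2=\sum_{\chi\in W^q,\ \chi\neq0}|\funidens[\mci_q](\chi)|^{2k}.
\]

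\textbf{Restricting to the subspace.} The injection density $\unidens[\mci_q]$ is invariant under the diagonal action of $\glbinsp$, because an invertible $L$ preserves pairwise distinctness. Proposition~\ref{prop:subspace-moment-bound} with $\ell=2k$ therefore bounds the sum above by
\[
    \frac{M-1}{N-1}\sum_{\chi\neq0}|\funidens[\mci_q](\chi)|^{2k}
    =\frac{M-1}{N-1}\sum_{t=1}^q\binom qt\Theta_{t,2k}.
\]
By~\eqref{eq:sop-global-moment}, which is valid since $q<N$, this is at most $\frac{M}{N}\cdot O_k(q^2/N^{2k-1})=O_k(q^2M/N^{2k})$. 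Taking square roots and halving gives $\mathrm{TVD}\le O_k(q\sqrt M/N^k)$, which is the claimed bound.

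\textbf{Main obstacle.} The only delicate point is the identification of the pushforward density's Fourier coefficients with the injection coefficients restricted to $W^q$. This needs the reference measure on $H$ to be uniform, which holds because $\tau$ maps uniform measure on $G$ to uniform measure on $H$. It also needs $\tau^\top$ to be injective, so that no nonzero $\eta$ maps to the zero mask. Once that is in place, the estimate is a direct combination of Proposition~\ref{prop:subspace-moment-bound} and~\eqref{eq:sop-global-moment}. The edge case $m=0$ is trivial, since then $H$ is a single point and the advantage is zero.
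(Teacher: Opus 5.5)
Your proposal is correct and follows the paper's proof essentially step for step. Like the paper, you localise to $q$ fixed inputs, identify the coefficients as $\funidens[\mci_q](\tau^\top\eta)^k$ by character pullback, apply Parseval over $W^q\setminus\{0\}$, and combine Proposition~\ref{prop:subspace-moment-bound} with~\eqref{eq:sop-global-moment}.
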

\begin{proof}
The cases $q=0$ and $M=1$ are immediate. By Remark~\ref{rem:classical-localisation}, fix a set of $q$ distinct inputs, and let $\nu_k$ be the density of the construction on $H^q$. Character pullback gives
\begin{equation}\label{eq:generalised-sop-fourier}
    \widehat\nu_k(\eta)=\funidens[\mci_q](\tau^\top\eta)^k.
\end{equation}
Since $\tau$ is surjective, $\tau^\top$ identifies $H$ with the subspace $W=\tau^\top(H)\leq G$, of size $M$. Parseval's identity, Proposition~\ref{prop:subspace-moment-bound}, and~\eqref{eq:sop-global-moment} give
\[
    \|\nu_k-1\|_2^2
    =\sum_{\chi\in W^q\setminus\{0\}}|\funidens[\mci_q](\chi)|^{2k}
    \leq\frac{M-1}{N-1}O_k\left(\frac{q^2}{N^{2k-1}}\right)
    =O_k\left(\frac{q^2M}{N^{2k}}\right).
\]
The statistical-distance bound proves the claim.\qed
\end{proof}
For $k=2$, this improves the $O(q\sqrt{qM}/N^2)$ single-user bound of~\cite{ChoiKLL2022:TSOP:ASIACRYPT} by a factor $\sqrt q$ in their common parameter range.

\subsection{Quantum Security}
The quantum bounds retain the dependence on the output size $M$. As in the unprojected case, we combine a bound from completion with a more precise treatment of level two. Keeping the rank of the pulled-back masks improves the bound for the remaining levels.

\begin{theorem}\label{thm:generalised-sop-quantum}
    Let $\rf\colon G\to H$ be uniform. For $1\leq q\leq(N-1)/2$,
    \begin{equation}\label{eq:truncated-sop-query-dependent}
        \compdist{(q)}{\ket{\tau\sop[2]}}{\ket\rf}
        \leq O\left(\min\left\{\frac{\sqrt M}{N},\frac{q^3+M}{N^2}\right\}\right).
    \end{equation}
    For every fixed $k\geq3$,
    \[
        \compdist{(q)}{\ket{\tau\sop[k]}}{\ket\rf}
        \leq O_k\left(\min\left\{\frac{q^3}{N^k},\frac{\sqrt M}{N^{k-1}}\right\}\right).
    \]
    For $q=1$, the $k=2$ bound sharpens to $O(N^{-2})$. For $M=1$, the advantage is zero.
\end{theorem}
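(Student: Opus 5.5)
The $q$-independent terms should come from completion, and the $q$-dependent terms from a layered Fourier analysis that keeps track of where the pulled-back masks lie. The trivial case $M=1$ is immediate, since the construction and $\rf$ are then the same constant function. For $q=1$, only supports of size at most two matter. Level one vanishes by~\eqref{eq:vanishing-first-level}, and level two is the pushforward two-point law~\eqref{eq:sop-two-point-pushforward} of Lemma~\ref{lem:sop-two-point-law}. The one-query refinement of Appendix~\ref{sec:sop-query-dependent} should then carry over verbatim with $H$ in place of $G$, giving $O(N^{-2})$.

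\emph{Completion bound.} Every realisation satisfies $\sum_{x}\tau\sop[k](x)=\tau(c_k)$. The checksum class $\{f\in H^G:\sum_x f(x)=\tau(c_k)\}$ therefore satisfies the restriction-bijection hypothesis of Theorem~\ref{thm:completion-lift} with $d=N-1$, by the same completion formula as in Corollary~\ref{cor:sop-quantum-completion}. Combining Theorem~\ref{thm:completion-lift} with Theorem~\ref{thm:generalised-sop-classical} at $N-1$ classical queries gives
\[
\compdist{(q)}{\ket{\tau\sop[k]}}{\ket\rf}\leq O_k\bigl(N\sqrt M/N^k\bigr)=O_k\bigl(\sqrt M/N^{k-1}\bigr)
\]
for all $2q\leq N-1$. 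This yields $\sqrt M/N$ for $k=2$ and $\sqrt M/N^{k-1}$ for $k\geq3$.

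\emph{Query-dependent bound.} If $q>4N/15$, then $q^3/N^k$ is already at least a constant times the completion bound, so I may assume $2q\leq 8N/15$. I will apply Lemma~\ref{lem:layered-fourier-bound} with $r=2$ and $d=2q$, using Lemma~\ref{lem:fourier-zhandry-method}.

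\begin{itemize}
\item Level one vanishes.
\item Level two: by~\eqref{eq:sop-two-point-pushforward}, $(\nu-1)_{=2}$ equals $\eta_k$ times the centred $H$-collision count $\sum_{x<x'}(M\indicate[\{f(x)\}](f(x'))-1)$. I will rerun the small-range argument of Appendix~\ref{sec:sop-query-dependent} with output group $H$. I expect it to bound the correlation of $a$ with this statistic by $O(q^3+M)$, the additive $M$ replacing the $N$ from the unprojected case, so that $|\Delta_2|\leq O(|\eta_k|(q^3+M))$. For $k\geq3$ the $M/N^k$ part is dominated by the $\sqrt M/N^{k-1}$ term in the minimum, so only $q^3/N^k$ needs to be kept.
\item Levels $3\leq t\leq 2q$: by~\eqref{eq:generalised-sop-fourier} applied to the complete table, the weight is $\sum_{t=3}^{2q}\sum|\funidens[\mcs](\chi)|^{2k}$ over mask tuples $\chi\in W^G$ of support size $t$, where $W=\tau^\top H$.
\end{itemize}

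I split these level-$t$ tuples by the rank $r(\chi)$, using the refined form of Proposition~\ref{prop:subspace-moment-bound} with rank and support fixed simultaneously.

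\emph{Rank at least two.} Here the containment probability is $p_{n,m}(2)\leq M^2/N^2$. Together with~\eqref{eq:common-injection-tail} at $m=N$ and $T=2q$, the contribution is $O\bigl(\tfrac{M^2}{N^2}\cdot N^{-2}\binom N4^{-(k-2)}\bigr)$. After the square root this is $O(M/N^2)$ for $k=2$ and $O_k(M/N^{2k-2})$ in general, which fits both claimed bounds.

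\emph{Rank one.} Since nonzero coefficients need $\prod_x\chi_x=\trivchar$, a rank-one tuple has all nonprincipal entries equal to a single mask $w$ on an even support. For fixed $w$, the relevant coefficients are those of the density of the balanced bit string $(\langle w,\pi(x)\rangle)_x$, which is uniform on strings of weight $N/2$. Its level-$t$ coefficients are explicit hypergeometric moments of size about $(t-1)!!\,N^{-t/2}$. Summing $\binom Nt$ times their $2k$-th powers over even $t\geq4$ gives $O(N^{-4})$ per mask for $k=2$, since the $t=4$ term is $\binom N4\,c_4^4=O(N^4\cdot N^{-8})$ and the larger $t$ decay. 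Restricting to the $M-1$ masks in $W$ gives $O(M/N^4)$, whose square root is at most $O(M/N^2)$.

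The main obstacle is the level-two correlation bound in the projected setting: checking that the small-range simulation produces exactly the $M$ term and nothing worse. A secondary difficulty is controlling the rank-one diagonal-type coefficients uniformly in $t$ up to $8N/15$. For the latter I would first try to dominate the large-$t$ terms by~\eqref{eq:extended-injection-maximum}, and keep the explicit hypergeometric computation only for small $t$.
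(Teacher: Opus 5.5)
Your overall plan matches the paper's. The $q$-independent terms come from completion (Theorem~\ref{thm:completion-lift} with Theorem~\ref{thm:generalised-sop-classical}). For small $q$, you split into the level-two centred collision count plus levels $3$ to $2q$ handled by rank, with $p_{n,m}(2)$ times~\eqref{eq:common-injection-tail} for rank at least two. Your rank-$\geq2$ computation is exactly the paper's. Counting rank-one tuples directly over the $M-1$ masks of $W$, with the exact values $\binom{N/2}{t/2}/\binom Nt$, is a harmless variant of the paper's use of Proposition~\ref{prop:eberhard-sparse-maximum} with the factor $p_{n,m}(1)$.

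The genuine problem is your level-two step. You expect $|\mathbf E[aK_H]|\leq O(q^3+M)$, on the belief that the $1/N$ in Theorem~\ref{thm:sop-quantum-combined-local} came from level two. For $k\geq3$ you then discard the resulting $M/N^k$ because it is dominated by $\sqrt M/N^{k-1}$. That deduction is invalid: $B\leq C$ does not make $\min\{A+B,C\}$ comparable to $\min\{A,C\}$. With $q=1$ and $M=N$, your bound would be of order $N^{1-k}$ rather than the claimed $N^{-k}$. The same additive $M$ would also contradict your own claim that the $q=1$, $k=2$ bound is $O(N^{-2})$, since level two is then the only term.

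The fix is that no $M$ appears at level two at all. Lemma~\ref{lem:quantum-centred-collisions} is stated for an arbitrary finite abelian output group and gives $|\mathbf E[aK_H]|\leq\frac{\pi^2}{6}(2q-1)^3$, with a constant independent of $H$. Hence $|\Delta_2|\leq|\eta_k|\,O(q^3)$ directly, and your ``main obstacle'' disappears. Both the $1/N$ in the unprojected bound and the $M/N^2$ here come from the remainder. You already bound its rank-$\geq2$ part by $O(M/N^2)$ for $k=2$ and by $O_k(M/N^{2k-2})\leq O_k(N^{-k})$ for $k\geq3$. For $k\geq3$, keep the $2k$-th power (equivalently the factor $\binom N4^{-(k-2)}$ from~\eqref{eq:extended-injection-maximum}) in the rank-one part as well. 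The $k=2$ rank-one mass $O(M/N^4)$ alone gives only $O(\sqrt M/N^2)$ after the square root, which is not $O(N^{-k})$.

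Your secondary difficulty is avoidable. For $q>\sqrt N$ one has $q^3/N^k>N^{-(k-3/2)}\geq\sqrt M/N^{k-1}$, so completion already attains the minimum. The local analysis is then needed only for $2q\leq2\sqrt N<N/2$, where Proposition~\ref{prop:eberhard-sparse-maximum} applies at every level; this is what the paper does. Alternatively, Vandermonde's identity gives $\binom{N/2}{t/2}^2\leq\binom Nt$, so your exact rank-one coefficients satisfy the sparse bound at every even $t$.
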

The proof is given in Appendix~\ref{subsec:truncated-sop-query-dependent}. We record the simulation bound and the concrete Fourier bound below.

The projected checksum determines the final output from the other $N-1$ outputs, so Theorem~\ref{thm:completion-lift} applies.
\begin{corollary} \label{cor:truncated-sop-quantum-completion}
    Let $G=\binsp$, $H=\binsp[m]$, $N=2^n$, $M=2^m$, and let $\tau\colon G\to H$ be surjective and linear. Fix $k\geq2$, and let $\rf\colon G\to H$ be uniform. For every $q\leq(N-1)/2$,
    \[
        \compdist{(q)}{\ket{\tau\sop[k]}}{\ket\rf}
        \leq O_k\left(\frac{\sqrt M}{N^{k-1}}\right).
    \]
\end{corollary}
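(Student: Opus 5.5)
The plan mirrors the proof of Corollary~\ref{cor:sop-quantum-completion}: Theorem~\ref{thm:completion-lift} with $d=N-1$ reduces the quantum problem to a classical one, and Theorem~\ref{thm:generalised-sop-classical} then bounds that classical problem at $N-1$ queries.

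\textbf{The completion class.} Put $c\coloneq\tau\bigl(k\sum_{y\in G}y\bigr)\in H$ and
\[
    \mathcal C\coloneq\Bigl\{f\in H^G:\sum_{x\in G}f(x)=c\Bigr\}.
\]
Since $\tau$ is a homomorphism, every realisation of $\tau\sop[k]$ satisfies
\[
    \sum_x\tau(\sop[k](x))=\tau\Bigl(\sum_{i}\sum_x\rp_i(x)\Bigr)=c,
\]
so $\tau\sop[k]\in\mathcal C$.

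\textbf{The restriction hypothesis.} For each $r\in G$ and $S=G\setminus\{r\}$, the restriction map $\mathcal C\to H^S$ is a bijection. Its inverse sets $f(r)=c-\sum_{x\neq r}f'(x)$, exactly as in the proof of Corollary~\ref{cor:sop-quantum-completion}, with $G$ replaced by the output group $H$. Theorem~\ref{thm:completion-lift} is stated for a general finite abelian output group, so it applies with output group $H$ and $d=N-1$. For every $q$ with $2q\leq N-1$ this gives
\[
    \compdist{(q)}{\ket{\tau\sop[k]}}{\ket\rf}\leq\compdist{(N-1)}{\tau\sop[k]}{\rf}.
\]

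\textbf{The classical bound.} Theorem~\ref{thm:generalised-sop-classical} holds for every $q<N$, so it applies at $N-1$ queries and bounds the right-hand side by
\[
    O_k\Bigl(\frac{(N-1)\sqrt M}{N^k}\Bigr)=O_k\Bigl(\frac{\sqrt M}{N^{k-1}}\Bigr),
\]
which is the claimed bound.

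\textbf{Main obstacle.} The only step needing care is that Theorem~\ref{thm:completion-lift} uses the uniform random function with codomain $H$, not $G$. Its proof is agnostic to the codomain, since $d$-wise independence of the uniform element of $\mathcal C$ and Lemma~\ref{lem:fourier-zhandry-method} work for any finite abelian group. We must also check that the classical bound is the one for the projected oracle against a uniform $\rf\colon G\to H$, which is exactly how Theorem~\ref{thm:generalised-sop-classical} is stated. The degenerate case $M=1$ is trivial: the advantage is zero.
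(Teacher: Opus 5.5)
Your proposal is correct and follows essentially the same route as the paper. The projected checksum $\tau\bigl(k\sum_{y}y\bigr)$ makes any $N-1$ outputs determine the last, so Theorem~\ref{thm:completion-lift} applies with output group $H$ and $d=N-1$, and Theorem~\ref{thm:generalised-sop-classical} at $N-1<N$ classical queries then gives $O_k\bigl((N-1)\sqrt M/N^k\bigr)=O_k\bigl(\sqrt M/N^{k-1}\bigr)$.
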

\begin{proof}
    Every realisation $f=\tau\sop[k]$ satisfies $\sum_{x\in G}f(x)=\tau(k\sum_{y\in G}y)$. Any $N-1$ values determine the remaining value uniquely. Theorem~\ref{thm:completion-lift}, with output group $H$ and $d=N-1$, followed by Theorem~\ref{thm:generalised-sop-classical}, gives
    \[
        \compdist{(q)}{\ket{\tau\sop[k]}}{\ket\rf}
        \leq\compdist{(N-1)}{\tau\sop[k]}{\rf}
        \leq O_k\left(\frac{(N-1)\sqrt M}{N^k}\right).\tag*{\qed}
    \]
\end{proof}

For explicit constants, we restrict the Fourier mass in Corollary~\ref{cor:sop-quantum-concrete} to masks in the image of $\tau^\top$.
\begin{corollary}\label{cor:generalised-sop-quantum-concrete}
    Assume $N\geq1024$, fix $k\geq2$, and let $\rf\colon G\to H$ be uniform. For $0\leq q\leq4N/15$, we have
    \[
        \compdist{(q)}{\ket{\tau\sop[k]}}{\ket\rf}
        \leq
        \sqrt{\frac{M-1}{N-1}\left(
            \frac{N}{8(N-1)^{2k-2}}
            +\frac{13}{16N^2}\binom N4^{-(k-2)}
        \right)}.
    \]
\end{corollary}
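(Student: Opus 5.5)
The plan is to apply Theorem~\ref{thm:quantum-fourier} to the complete-table density $\nu$ of $\tau\sop[k]$ on $H^G$. I will bound its Fourier mass on supports of size at most $2q$ by the corresponding mass for $\sop[k]$, restricted to masks lying in $W=\tau^\top(H)$, and then invoke Proposition~\ref{prop:subspace-moment-bound}. The case $q=0$ is immediate, so assume $q\geq1$.

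First, I identify the coefficients. Pulling back characters through $\tau$ gives
\[
    \widehat\nu(\eta)=\funidens[\mcs](\tau^\top\eta)^k,
\]
where $\tau^\top$ is applied coordinatewise. Since $\tau^\top$ is injective, it preserves support, so the $1\leq|\eta|\leq2q$ part of $\|\nu-1\|_2^2$ equals
\[
    \sum_{\substack{\chi\in W^G\\1\leq|\chi|\leq2q}}|\funidens[\mcs](\chi)|^{2k}.
\]
Here $2q\leq8N/15<N$, so Remark~\ref{rem:complete-permutation-density-applicable} lets me replace each coefficient by an injection coefficient on its support. Grouping by support $S$ with $|S|=t$, each group is the level-$t$ sum $\sum_{\alpha\in(W\setminus\{0\})^t}|\funidens[\mci_t](\alpha)|^{2k}$.

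Second, I apply the subspace bound. The injection density $\unidens[\mci_t]$ on $G^t$ is invariant under the diagonal $\glbinsp$ action. By the remark following Proposition~\ref{prop:subspace-moment-bound}, the argument survives restriction to the diagonal-invariant set of full-support mask tuples. Hence each level-$t$ sum restricted to $W$ is at most $\frac{M-1}{N-1}\Theta_{t,2k}$. Summing over $t$ with the multiplicities $\binom Nt$ gives
\[
    \sum_{1\leq|\eta|\leq2q}|\widehat\nu(\eta)|^2\leq\frac{M-1}{N-1}\sum_{t=2}^{2q}\binom Nt\Theta_{t,2k},
\]
where I use $\Theta_1=0$.

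Third, I reuse the estimate from the proof of Corollary~\ref{cor:sop-quantum-concrete}. There the right-hand sum is bounded by
\[
    \frac{N}{2(N-1)^{2k-2}}+\frac{13}{4N^2}\binom N4^{-(k-2)}.
\]
The first term is the exact level-two value from Lemma~\ref{lem:sop-two-point-law}. The second comes from \eqref{eq:common-injection-tail} with $m=N$ and $T=2q\leq8N/15$, which is valid since $N\geq1024$. Multiplying by $\frac{M-1}{N-1}$, taking the square root, and applying the factor $\frac12$ from Theorem~\ref{thm:quantum-fourier} turns $\frac12$ and $\frac{13}{4}$ into $\frac18$ and $\frac{13}{16}$ under the root. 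This is exactly the stated bound.

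The only delicate point is the second step: Proposition~\ref{prop:subspace-moment-bound} must be applied levelwise, to full-support tuples within each support, rather than to all nonzero tuples of $G^d$ at once. Otherwise the factor $\frac{M-1}{N-1}$ would be attached to the wrong sum. The remark after that proposition, together with the invariance of support size under the action of $L$, justifies this. Alternatively, one can pad to $d=2q$ and use \eqref{eq:injection-restriction}.
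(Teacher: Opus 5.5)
Your proposal is correct and matches the paper's proof. Both pull back through $\tau^\top$ (which preserves support), shrink the level-$\leq 2q$ Fourier mass from Corollary~\ref{cor:sop-quantum-concrete} by the factor $(M-1)/(N-1)$ via Proposition~\ref{prop:subspace-moment-bound}, and conclude with Theorem~\ref{thm:quantum-fourier}. The only cosmetic difference is that the paper applies the subspace bound directly to $\unidens[\mcs]$ on the diagonal-invariant set of masks with $1\leq|\chi|\leq2q$, whereas you apply it support by support to full-support masks of $\unidens[\mci_t]$; this gives the same inequality.
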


\begin{proof}
    For the complete function table, the pullback identity~\eqref{eq:generalised-sop-fourier} holds with $\mci_q$ replaced by $\mcs$. The map $\tau^\top$ preserves support. Proposition~\ref{prop:subspace-moment-bound}, restricted to levels at most $2q$, therefore multiplies the Fourier mass used in Corollary~\ref{cor:sop-quantum-concrete} by at most $(M-1)/(N-1)$. Theorem~\ref{thm:quantum-fourier} proves the claim.\qed
\end{proof}

\section{The Variable-Output $\mathsf{LXoP}$ Family}\label{sec:lxop}
Throughout this section, $G=\mathbb F_2^n$ and $N=2^n$. We identify characters with Walsh masks. Iwata's $\mathsf{XORP}[w]$ construction~\cite{Iwata2006:CENC:FSE} produces $w$ output blocks from $w+1$ domain-separated permutation calls. Dinur defined~\cite{Dinur2025:More-SOP:EUROCRYPT} the following $\mathsf{LXoP}$ family and proposed the admissibility condition below, leaving its general-$w$ security analysis to future work. We use our notation for this family and give classical and quantum bounds for every fixed $w$.

Fix $w\geq1$, put $s=\lceil\log_2(w+1)\rceil$, and assume $s<n$. The construction has input domain $\binsp[n-s]$, of size $D=N/2^s$, and produces $w$ elements of $G$ on each input. For $j\in\{0,\ldots,w\}$, let
\[
    \iota_j\colon\binsp[n-s]\longrightarrow G,
    \qquad
    \iota_j(x)=\langle j\rangle_s\parallel x,
\]
where $\langle j\rangle_s$ denotes the $s$-bit binary representation of $j$. We call $\sigma\in\glbinsp$ \emph{$w$-admissible} if
\begin{equation}\label{eq:w-admissible}
    I+\sigma^r\in\glbinsp
    \qquad\text{for every }1\leq r\leq w.
\end{equation}
For $w=1$, this is the linear orthomorphism condition on $\sigma$. The same condition suffices for $w=2$, since $I+\sigma^2=(I+\sigma)^2$ over $\field$.

Given a uniform random permutation $\rp$ of $G$, define $\sop[\sigma,w] \colon \binsp[n-s] \to G^w$ by
\begin{equation}\label{eq:lxop-w-definition}
    \sop[\sigma,w](x)
    \coloneq
    \bigl(
        \rp(\iota_0(x))+\sigma(\rp(\iota_1(x))),\ldots,
        \rp(\iota_{w-1}(x))+\sigma(\rp(\iota_w(x)))
    \bigr).
\end{equation}
In particular, $\sop[\sigma,1]$ and $\sop[\sigma,2]$ are respectively the constructions $\mathsf{LXoP}[\sigma,n]$ and $\mathsf{LXoP}[\sigma,2,n]$ of Dinur.

The condition~\eqref{eq:w-admissible} is non-vacuous for every fixed $w$. For example, after identifying $G$ with $\field[2^n]$, multiplication by a primitive field element gives a $w$-admissible linear map whenever $w<2^n-1$.

The domain separators ensure that all $(w+1)q$ permutation inputs used by $q$ distinct construction inputs are distinct. Thus their outputs form a uniform injection, even though the $w$ output blocks at one input share permutation calls. We must account for these shared calls when bounding the Fourier moments. The resulting classical bound is as follows.

\begin{theorem}\label{thm:lxop-w-classical}
    Fix $w\geq1$. There exists a constant $c_w>0$, depending only on $w$, such that the following holds. Let $\sigma$ be $w$-admissible, let $\rf_w\colon\binsp[n-s]\to G^w$ be a uniform random function, and let $1\leq q\leq c_wD$. Then
    \[
        \compdist{(q)}{\sop[\sigma,w]}{\rf_w}
        \leq O_w\left(\frac{q}{N^{3/2}}\right).
    \]
\end{theorem}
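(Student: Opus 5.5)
By Remark~\ref{rem:classical-localisation}, which applies because the construction is invariant under relabelling its inputs, it suffices to fix a set $Q$ of $q$ distinct construction inputs. We then bound $\frac12\|\nu-1\|_2$, where $\nu$ is the density of $\sop[\sigma,w]$ on $(G^w)^Q$. The $(w+1)q$ permutation inputs $\iota_j(x)$ are distinct, so their outputs form a uniform injection $y\in\mci_{(w+1)q}$. For $x\in Q$ and a character mask $\eta_x=(\eta_{x,0},\ldots,\eta_{x,w-1})\in G^w$, pulling back through the linear map $(y_0,\ldots,y_w)\mapsto(y_{j-1}+\sigma y_j)_j$ gives permutation-level masks
\[
    \alpha_{x,0}=\eta_{x,0},\qquad
    \alpha_{x,j}=\eta_{x,j}+\sigma^\top\eta_{x,j-1}\ (1\leq j\leq w-1),\qquad
    \alpha_{x,w}=\sigma^\top\eta_{x,w-1}.
\]
Hence $\widehat\nu(\eta)=\funidens[\mci_{(w+1)q}](\alpha(\eta))$, and Parseval gives $\|\nu-1\|_2^2=\sum_{\eta\neq0}\funidens[\mci_{(w+1)q}](\alpha(\eta))^2$. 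The map $\eta\mapsto\alpha$ is injective. Its image is the set of $\alpha$ whose $x$-blocks satisfy one linear relation, namely the alternating $\sigma^\top$-weighted combination $\sum_j(\sigma^\top)^{w-j}\alpha_{x,j}=0$ (up to the conventions for the signs). The target is $\|\nu-1\|_2^2=O_w(q^2/N^3)$, which is the analogue of Eberhard's second-moment bound with the square in place of the fourth power. This is possible because the restricted index set is thinner by a factor of $N$ per input.

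The plan is to bound this constrained second moment level by level, grouping by the number $t$ of construction inputs $x$ with $\eta_x\neq0$. Level one vanishes because the one-point marginal is uniform. Uniformity of a single output $\sop[\sigma,w](x)$ needs checking here: it follows from admissibility, since a nonzero $\eta_x$ forces $\alpha_x$ to have a structure whose injection coefficient on $w+1$ coordinates is zero, and this is exactly where $I+\sigma^r$ being invertible enters. The Dinur-style reduction proceeds as follows. Take a coefficient $\funidens[\mci_d](\alpha)$ and pick a coordinate $y_{x,j}$ with nonzero mask. Apply the merging identity~\eqref{eq:injection-coefficient-merging} to move its mask onto another coordinate, gaining a factor $1/(N-d+1)$ while summing over $d-1$ choices of target. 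Repeat until the coefficient reaches a small base case, which is bounded via $\Lambda$ from Proposition~\ref{prop:sop-fourier-smoothing}. Squaring, applying Cauchy--Schwarz over the $d-1$ terms, and summing over $\alpha$ in the constrained set gives $\sum_\alpha|\funidens(\alpha)|^2\lesssim \frac{d}{N^2}\cdot(\text{multiplicity})\cdot\sum_{\alpha'}|\funidens(\alpha')|^2$. Taking $d\leq c_w D$ keeps the denominators $N-d+1$ comparable to $N$. The recursion then produces the geometric series of levels $\binom qt (C_w q/N)^{t}/N$-type terms, which is summable for small $c_w$ and sums to $O_w(q^2/N^3)$.

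The main obstacle is controlling the multiplicity: we need to know how many constrained $\alpha$ map to the same reduced $\alpha'$. Without the linear relation, a merge $\chi_i\chi_t$ loses the removed mask entirely, which costs a factor $N$ and destroys the bound. The first thing I would try is to order the reductions within each input block $x$ from $j=w$ downward, eliminating $\alpha_{x,w}$ last. When the mask of $y_{x,j}$ is merged elsewhere, the remaining masks of block $x$ together with the relation $\sum_j(\sigma^\top)^{w-j}\alpha_{x,j}=0$ determine the removed mask once the target index is known. If the target lies in the same block, recovering the mask requires solving $(I+(\sigma^\top)^r)\beta=\gamma$ with $1\leq r\leq w$, and $w$-admissibility~\eqref{eq:w-admissible} guarantees a unique solution. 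Consequently each reduced tuple is hit $O_w(1)$ times per step. This restores the factor $N$, so the constrained weight behaves like Eberhard's fourth moment. Once the second moment is $O_w(q^2/N^3)$, the bound~\eqref{eq:density-norm-tower} gives $O_w(q/N^{3/2})$. Any remaining small-$N$ cases are absorbed into the constant.
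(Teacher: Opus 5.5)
Your outline follows the paper's route: localise to $q$ inputs, pull back through the injective map $T_{\sigma,w}$, merge with Eberhard's identity plus Cauchy--Schwarz, and recover the removed label from the block relation and $w$-admissibility. However, the recursion as you describe it fails.

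Each block carries exactly one relation $\sum_j(\sigma^\top)^{w-j}\alpha_{x,j}=0$. Once an entry of the block takes part in a merge, as primary or as secondary, the current mask no longer satisfies it. So that relation can pay for recovering only one removed label. The recursion sums over all secondary choices, so your within-block ordering cannot avoid branches that reuse a consumed block. For instance, for $w\geq2$, take two successive merges with primary and secondary both in block $x$. After the first, the current entries of block $x$ are unconstrained, so the second merge is roughly $N$-to-one. A multiplicity of $N$ more than cancels the $O(k/N)$ gain of that step.

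A dimension count shows this is unavoidable for every $w$. A fixed injection support of size $k_0\geq2t$ carries about $N^{k_0-t}$ constrained masks. After $d$ merges the leaf mask has at most $k_0-d$ nonzero coordinates. Hence at most about $t$ merges can be injective, far fewer than the $k_0-O(1)$ needed to reach a ``small base case''. Bounding such a base case by the maximum coefficient $\lambda_N$ would also be too weak; you need a second-moment bound over distinct leaf masks. Consistently with this, your heuristic level terms $\binom qt(C_wq/N)^t/N$ are already of order $q^4/N^3$ at $t=2$, not $O(q^2/N^3)$.

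The missing idea is to stop early and let a strong fixed-support estimate do most of the work. The paper partitions by exact injection support (at most $2^{(w+1)t}$ patterns). It splits each merge into zero and nonzero branches so every node has a common support. It always takes the primary coordinate in an \emph{untouched} block, for which recovery holds for every secondary choice. Since each merge touches at most two blocks, this permits $\lceil t/2\rceil$ merges. The leaves are then bounded by Dinur's estimate $\Theta_k\leq(k/(N-k))^{k/2}$.

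No nonzero construction block pulls back to a single nonzero entry, so $k_0\geq2t$. Thus the fixed-support estimate alone gives $(C_wt/N)^{t}$, and the $\lceil t/2\rceil$ merges gain a further $(C_wt/N)^{t/2}$. This yields $L_t\leq(C_wt^{3/2}/N^{3/2})^t$. Then $\binom qtL_t\leq(C'_wq\sqrt t/N^{3/2})^t$ sums geometrically to $O_w(q^2/N^3)$ for $q\leq c_wD$.

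Separately, your claim that level one vanishes is false for $w\geq2$. For $w=2$, $\eta_x=(\gamma,\gamma)$ pulls back to $(\gamma,(I+\sigma^\top)\gamma,\sigma^\top\gamma)$, whose entries are nonzero and sum to zero, so its coefficient is $2/((N-1)(N-2))\neq0$. What holds, and suffices, is $L_1=O_w(N^{-3})$, contributing $O_w(q/N^3)$:
\begin{itemize}
\item support-two pullbacks vanish by translation invariance and admissibility;
\item support three has at most $N-1$ masks per pattern;
\item support at least four needs one merge.
\end{itemize}
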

We prove the theorem using a fixed-support moment bound that will also be used for quantum queries. The full recursive calculation is given in Appendix~\ref{sec:lxop-recursion-details}; here we describe the masks to which it applies.

Pulling an output character back through the linear combining map gives a character on the underlying permutation outputs. Write $A=\sigma^\top$ and define
\begin{equation}\label{eq:lxop-w-pullback-map}
    T_{\sigma,w}(\alpha_1,\ldots,\alpha_w)
    \coloneq
    \bigl(\alpha_1,A\alpha_1+\alpha_2,\ldots,A\alpha_{w-1}+\alpha_w,A\alpha_w\bigr)
    \in G^{w+1}.
\end{equation}
For example, for $w=2$ the output character $(\alpha_1,\alpha_2)$ pulls back to $(\alpha_1,A\alpha_1+\alpha_2,A\alpha_2)$. Extending the notation, given a mask $\alpha =(\alpha^{(1)},\ldots,\alpha^{(q)})\in(G^w)^q$, where $\alpha^{(i)} = (\alpha_{i,1},\ldots,\alpha_{i,w})\in G^w$, we apply $T_{\sigma,w}$ separately to each tuple $\alpha^{(i)}$ and use the same notation for the resulting map:
\[
    T_{\sigma,w}(\alpha)
    \coloneq
    \bigl(
        T_{\sigma,w}(\alpha^{(1)}),\ldots,
        T_{\sigma,w}(\alpha^{(q)})
    \bigr)
    \in (G^{w+1})^q.
\]
We distinguish between construction masks and injection masks. A construction mask $\alpha=(\alpha^{(1)},\ldots,\alpha^{(q)})\in(G^w)^q$ assigns one tuple of labels to each construction input. Its
\textit{construction level} is the number of inputs for which $\alpha^{(i)}\neq0$. Here a nonzero tuple may still have some zero entries. Its pullback $\beta=T_{\sigma,w}(\alpha)\in(G^{w+1})^q$ is an injection mask, with one label for each underlying permutation-output position. Its \emph{injection level} is the number of nonzero entries of $\beta$. Writing
\[
    \beta=(\beta^{(1)},\ldots,\beta^{(q)}),
    \qquad
    \beta^{(i)}
    =(\beta_{i,0},\ldots,\beta_{i,w})\in G^{w+1},
\]
we call $\beta^{(i)}$ the $i$-th \textit{block} of the injection mask.

Let $\xi^{(\sigma,w)}_{n,q}$ denote the density on any fixed set of $q$ distinct construction inputs. Expanding each output character through the linear combining map gives
\begin{equation}\label{eq:lxop-w-fourier-pullback}
    \widehat{\xi^{(\sigma,w)}_{n,q}}(\alpha)
    =\funidens[\mci_{(w+1)q}]\bigl(T_{\sigma,w}(\alpha)\bigr).
\end{equation}
Indeed, $\langle\alpha,\sigma(y)\rangle=\langle A\alpha,y\rangle$, so the character on one block has labels $(\alpha_1,A\alpha_1+\alpha_2,\ldots,A\alpha_w)$ on the corresponding permutation outputs. Those outputs form a uniform injection.

For $1\leq t\leq D$, write
\begin{equation}\label{eq:lxop-w-fixed-support-second-moment}
    L_t\coloneq
    \sum_{\substack{\alpha\in(G^w)^t\\\alpha^{(i)}\neq0\text{ for all }i}}
        |\widehat{\xi^{(\sigma,w)}_{n,t}}(\alpha)|^2
        = \sum_{\substack{\alpha\in(G^w)^t\\\alpha^{(i)}\neq0\text{ for all }i}}|\funidens[\mci_{(w+1)t}]\bigl(T_{\sigma,w}(\alpha)\bigr)|^2.
\end{equation}
Thus $L_t$ is the second Fourier moment on one fixed support of $t$ construction inputs. The corresponding injection coefficients need not all lie at the same level, because some coordinates can vanish after applying $T_{\sigma,w}$. We record the structure of the pullback before stating the moment bound. For one block, write
\[
    \beta=T_{\sigma,w}(\alpha)=(\beta_0,\ldots,\beta_w),
    \qquad C_j=A^{w-j}.
\]
By~\eqref{eq:lxop-w-pullback-map}
\begin{equation}\label{eq:lxop-w-parity-check}
    \sum_{j=0}^w C_j\beta_j
    =\sum_{j=0}^w A^{w-j}\beta_j=0.
\end{equation}
Indeed, after substitution, each $\alpha_j$ occurs twice and cancels. Conversely, the original labels are recovered by $\alpha_1=\beta_0$ and $\alpha_{j+1}=\beta_j+A\alpha_j$ for $1\leq j<w$. Hence $T_{\sigma,w}$ is injective. Since every $C_j$ is invertible, a nonzero construction block $\alpha^{(i)}$ cannot have a pullback $\beta^{(i)}$ with exactly one nonzero entry: otherwise~\eqref{eq:lxop-w-parity-check} would force that entry to be zero. Hence each nonzero construction block contributes between $2$ and $w+1$ nonzero entries to the injection mask. Consequently, construction level $t$ pulls back to injection levels between $2t$ and $(w+1)t$.

To bound $L_t$, we partition the pullbacks by their exact injection support and iterate~\eqref{eq:injection-coefficient-merging}. At each step we choose a coordinate in an untouched block. The block relation and $w$-admissibility let us recover the removed label, so distinct masks remain distinct within each branch. Each merge touches at most two blocks, allowing $\lceil t/2\rceil$ steps. Appendix~\ref{sec:lxop-recursion-details} gives the recovery argument and the resulting bounds.

\begin{lemma}\label{lem:lxop-moments}
Fix $w\geq1$ and let $\sigma$ be $w$-admissible. The fixed-support moments satisfy
\begin{equation}\label{eq:lxop-w-level-one}
    L_1=O_w(N^{-3}).
\end{equation}
There is a constant $C_w$, depending only on $w$, such that for $t\geq2$ with $(w+1)t\leq N/8$,
\begin{equation}\label{eq:lxop-w-fixed-support-simplified}
    L_t\leq\left(C_w\frac{t^{3/2}}{N^{3/2}}\right)^t.
\end{equation}
\end{lemma}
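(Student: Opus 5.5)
We prove the two estimates separately. Both rest on the pullback identity~\eqref{eq:lxop-w-fourier-pullback}, the injectivity of $T_{\sigma,w}$, and the parity relation~\eqref{eq:lxop-w-parity-check}.

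\emph{The level-one bound~\eqref{eq:lxop-w-level-one}.} For $t=1$, a nonzero construction block pulls back to an injection mask $\beta\in G^{w+1}$ with $2\leq j\leq w+1$ nonzero entries. By~\eqref{eq:injection-restriction}, its coefficient equals $\funidens[\mci_j]$ evaluated at those entries.
\begin{itemize}
\item If $j=2$, the coefficient is the two-point value $-1/(N-1)$ when the two labels are equal, and zero otherwise, by translation invariance.
\item If $j\geq 3$, we bound the sum of squares by the fixed-support moment $\Theta_j$. Using~\eqref{eq:injection-coefficient-merging} once, together with the known $\Theta_j=O_j(N^{-j})$-type estimates (or simply $\Theta_j\leq\Lambda_j\Theta_{j,1}$, controlled via Eberhard's maximum bound), each such contribution is $O_w(N^{-3})$.
\end{itemize}
For $j=2$ we must count the masks $\alpha$ whose pullback has exactly two nonzero entries, and these equal. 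With two nonzero positions $i<i'$, the parity relation gives $A^{w-i}\beta_i=A^{w-i'}\beta_{i'}$. Equality $\beta_i=\beta_{i'}$ then means $(I+A^{i'-i})\beta_i=0$ up to an invertible factor. By $w$-admissibility (note $(I+\sigma^r)^\top=I+A^r$), this forces $\beta_i=0$. So $j=2$ contributes nothing, and $L_1$ is a sum over $O_w(1)$ support patterns, each of size $O(N^{-3})$.

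\emph{The higher levels~\eqref{eq:lxop-w-fixed-support-simplified}.}
\begin{enumerate}
\item Partition the masks $\alpha$ with all blocks nonzero according to the exact injection support pattern. There are at most $(2^{w+1})^t$ patterns, which is absorbed into $C_w^t$.
\item Within a pattern, apply~\eqref{eq:injection-coefficient-merging} repeatedly. At each step, pick a nonzero coordinate in a block not yet touched and merge it into one of the other coordinates. Each merge costs a factor $1/(N-(w+1)t)\leq 2/N$, using $(w+1)t\leq N/8$, and produces at most $(w+1)t$ terms.
\item Square using Cauchy--Schwarz, paying the number of terms. Since each merge touches at most two blocks, we can perform $\lceil t/2\rceil$ merges on pairwise untouched blocks. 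After $r=\lceil t/2\rceil$ merges, each branch is a sum of $((w+1)t)^r$ reduced coefficients. The squared bound then carries a factor $((w+1)t)^{r}\cdot(2/N)^{2r}$ times a sum over reduced masks.
\item Sum over $\alpha$, exchange the order of summation, and bound the reduced sum by the total second moment of an injection density on fewer points. That moment is at most $\|\unidens[\mci_d]-1\|_2^2+1$, but this is crude. Better is to use Parseval on the reduced masks, which still satisfy the parity structure in the untouched blocks, and keep a per-block factor.
\item Counting shows the total of order $(t^{3/2}/N^{3/2})^{t}$, up to $C_w^t$: roughly $t^{t/2}$ from the term counts and $N^{-t}$ from the merges, with the remaining $N^{-t/2}$ coming from the reduced second moment.
\end{enumerate}

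\emph{Main obstacle: bounding the multiplicity.} The key difficulty is showing that distinct original masks in one branch give distinct reduced masks, so that exchanging sums does not overcount. When $\beta_{i,j}$ is merged into a coordinate $\beta_{i',j'}$ of another block, the reduced mask records $\beta_{i',j'}+\beta_{i,j}$ and deletes position $(i,j)$. Block $i$ was untouched, so its remaining entries together with the parity relation~\eqref{eq:lxop-w-parity-check} determine $C_j\beta_{i,j}$. Since $C_j$ is invertible, this recovers $\beta_{i,j}$ uniquely.

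The delicate case is a merge inside the same block. Then one must check that $\beta_{i,j}+\beta_{i,j'}$, together with the parity relation, still determines the removed label. This holds because $C_j+C_{j'}=A^{w-j'}(I+A^{j'-j})$ is invertible by admissibility. I would handle this by always choosing merges that respect these invertibility constraints. I expect organising the order of merges so that recovery holds in every branch to be the main technical work.
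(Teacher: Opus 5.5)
Your plan for $t\geq2$ is essentially the paper's route: partition by exact injection support, Dinur-style merging with the primary coordinate always in an untouched block, and recovery of the removed label from the block relation. One small correction there: you cannot choose the secondary index, since the merging identity sums over all of them. You do not need to, because admissibility makes every same-block pair $i\neq j$ recoverable.

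\textbf{Level one.} Here a step fails. For pullback supports of size $j\geq3$ you bound the contribution by $\Theta_j$ and appeal to $\Theta_j=O_j(N^{-j})$. In fact $\Theta_j$ is only of order $N^{-\lceil j/2\rceil}$: $\Theta_3=4/((N-1)(N-2))$ and $\Theta_4=3(N+6)/((N-1)(N-2)(N-3))$. So for $j\in\{3,4\}$, injectivity of $T_{\sigma,w}$ plus $\Theta_j$ gives only $L_1=O(N^{-2})$. The alternative $\Theta_j\leq\Lambda_j\Theta_{j,1}$ does not help. At $j=3$ every nonzero coefficient has the common modulus $2/((N-1)(N-2))$, so $\Lambda_3\Theta_{3,1}=\Theta_3$.

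What is missing is that the block relation must be used at level one too. Your hint of merging once is the right move, but it only works combined with recovery. One merge inside the single untouched block is injective by your recovery argument. It bounds the support-$j$ contribution by $\frac{(j-1)^2}{(N-j+1)^2}(\Theta_{j-1}+\Theta_{j-2})=O(N^{-3})$ for $j\in\{3,4\}$. At $j=3$ this equals $4/((N-1)(N-2)^2)$, matching the direct count. There, translation invariance forces $\beta_\ell=\beta_i+\beta_{i'}$. The block relation then reads $(C_i+C_\ell)\beta_i+(C_{i'}+C_\ell)\beta_{i'}=0$ with both matrices invertible, so only $N-1$ masks survive. Only for $j\geq5$ does $\Theta_j=O(N^{-3})$ suffice on its own.

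\textbf{Levels $t\geq2$.} You have not identified what bounds the leaves, and the options you name are too weak. Parseval, or the total second moment $N^r/(N)_r$ of an injection density, gives no decay in $N$. The merges alone then give only $L_t\leq(Ct/N)^t$. That yields classical advantage $O(q/N)$ rather than $O(q/N^{3/2})$, and makes $\binom Dt L_t$ non-decaying in the quantum corollary.

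The needed ingredient is the fixed-support bound $\Theta_k\leq(k/(N-k))^{k/2}$ for $k\leq N/2$ (Dinur's Lemma~2). Apply it at each leaf to the reduced masks. These are distinct by recovery, and they share one support once each child is split according to whether the merged entry vanishes. Since each nonzero block pulls back to at least two nonzero entries, $k_0\geq2t$. After $d=\lceil t/2\rceil$ merges every leaf therefore has support between $k_0-2d$ and $k_0-d$. The bookkeeping is $(2k_0)^d\,(k_0/(N-k_0)^2)^d\,(k_0/(N-k_0))^{k_0/2-d}=2^d(k_0/(N-k_0))^{k_0/2+d}$. This is at most $2^d\bigl((w+1)t/(N-(w+1)t)\bigr)^{3t/2}$.

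Your tally of $t^{t/2}$ from term counts is also off. Cauchy--Schwarz and the exchanged sum over paths each contribute a factor $((w+1)t)^d$, giving $t^t$, and the leaf contributes another $t^{t/2}$. The target $(C_w t^{3/2}/N^{3/2})^t$ absorbs exactly this.
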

The proof is given in Appendix~\ref{sec:lxop-recursion-details}.

\paragraph{Proof of Theorem~\ref{thm:lxop-w-classical}.}
Let $\xi_q=\xi^{(\sigma,w)}_{n,q}$, and recall $L_t$ from~\eqref{eq:lxop-w-fixed-support-second-moment}. Each $L_t$ is the contribution of one fixed set of $t$ construction inputs. There are $\binom qt$ such supports among the $q$ queried inputs. Thus input symmetry and Parseval give
\begin{equation}\label{eq:lxop-w-classical-variance}
    \|\xi_q-\indicate[]\|_2^2
    =\sum_{t=1}^q\binom qt L_t.
\end{equation}
Put $a=w+1$. Lemma~\ref{lem:lxop-moments} bounds level one and, whenever $aq\leq N/8$, every level up to $q$. Using $\binom qt\leq(eq/t)^t$, we obtain
\[
    \binom qtL_t
    \leq\left(C'_w\frac{q\sqrt t}{N^{3/2}}\right)^t.
\]
The ratio of consecutive terms on the right is at most $C''_wq\sqrt{q+1}/N^{3/2}$, where $C''_w = \sqrt{e}C'_w$ absorbs the factor $((t+1)/t)^{t/2} \leq \sqrt{e}$. Choose $c_w>0$ sufficiently small so that $c_w \leq (8a)^{-1}$ and this ratio is at most $1/2$ whenever $q\leq c_wD\leq c_wN$. Such a choice is possible since $q\sqrt{q+1}/N^{3/2}\leq\sqrt2(q/N)^{3/2}$ for $q\geq1$. The same choice ensures $at\leq aq\leq N/8$ for every $t\leq q$. The contribution of all $t\geq2$ in~\eqref{eq:lxop-w-classical-variance} is then $O_w(q^2/N^3)$. Together with~\eqref{eq:lxop-w-level-one},
\[
    \|\xi_q-\indicate[]\|_2^2
    \leq O_w\left(\frac{q}{N^3}+\frac{q^2}{N^3}\right)
    =O_w\left(\frac{q^2}{N^3}\right),
\]
since $q\geq1$. Taking square roots and using statistical distance at most one half of the $L_2$ distance proves the result.\qed

\subsection{Quantum Security}\label{subsec:lxop-quantum}
The same fixed-support moments control quantum distinguishing advantage as well. We now count supports in the complete construction domain and sum only the levels up to $2q$, as required by Theorem~\ref{thm:quantum-fourier}.
\begin{corollary}\label{cor:lxop-w-quantum-fourier}
    Fix $w\geq1$. There exists a sufficiently small constant $c_w>0$, depending only on $w$, such that the following holds. Let $s=\lceil\log_2(w+1)\rceil$, let $\sigma$ be $w$-admissible, and let $\rf_w\colon\binsp[n-s]\to G^w$ be uniform. For every $0\leq q\leq c_wD$,
    \[
        \compdist{(q)}{\ket{\sop[\sigma,w]}}{\ket{\rf_w}}
        \leq O_w\left(\frac{1}{\sqrt N}\right).
    \]
\end{corollary}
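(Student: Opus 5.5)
The plan is to apply Theorem~\ref{thm:quantum-fourier} to the complete construction table. The domain is $\mcx=\binsp[n-s]$ of size $D$, the output group is $G^w$, and the reference measure is uniform on $(G^w)^{\mcx}$. The complete-table density of $\sop[\sigma,w]$ has Fourier coefficients given by the analogue of~\eqref{eq:lxop-w-fourier-pullback} with $q$ replaced by $D$. Since $(w+1)D\leq N$, all permutation inputs are distinct and their outputs form a uniform injection, possibly with $(w+1)D=N$, in which case it is the full permutation. By~\eqref{eq:injection-restriction} (and Remark~\ref{rem:complete-permutation-density-applicable} in the full-permutation case), a coefficient whose construction support $S$ has size $t$ equals the corresponding coefficient of $\xi^{(\sigma,w)}_{n,t}$. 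We only need supports with $t\leq 2q<D$, so the injection support size is at most $(w+1)t<N$. Input symmetry and Parseval therefore give
\[
    \|\Pi_{\leq2q}(\varphi-1)\|_2^2=\sum_{t=1}^{2q}\binom Dt L_t,
\]
with $L_t$ as in~\eqref{eq:lxop-w-fixed-support-second-moment}.

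Next, we bound the terms using Lemma~\ref{lem:lxop-moments}. Level one contributes $D\cdot O_w(N^{-3})=O_w(N^{-2})$. For $t\geq2$ we need $(w+1)t\leq N/8$. This holds for all $t\leq 2q$ once $c_w\leq (16(w+1))^{-1}$, since $D\leq N$. Using $\binom Dt\leq(eD/t)^t$, each term is at most
\[
    \left(eC_w\frac{D\sqrt t}{N^{3/2}}\right)^t\leq\left(C'_w\frac{\sqrt{t}}{\sqrt N}\right)^t .
\]
The $t=2$ term is then $O_w(1/N)$. This is the dominant contribution, and its square root gives the claimed $O_w(N^{-1/2})$.

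The main obstacle is summing the remaining levels $3\leq t\leq 2q$, where $t$ may be as large as $2c_wD$. There the base $C'_w\sqrt{t/N}$ is no longer small in $N$; it is only bounded by the constant $C'_w\sqrt{2c_w}$. So I would split the range. For $t\leq \sqrt N$, say, the terms are at most $(C'_w N^{-1/4})^t$, and the series starting at $t=3$ is $O_w(N^{-3/4})$. For larger $t$, I would choose $c_w$ small enough that $C'_w\sqrt{t/N}\leq C'_w\sqrt{2c_w}\leq 1/2$ for all $t\leq 2c_wD$. This tail is then bounded by
\[
    \sum_{t>\sqrt N}(C'_w\sqrt{t/N})^t\leq\sum_{t>\sqrt N}2^{-t},
\]
which is exponentially small in $\sqrt N$ and hence $O(N^{-1})$.

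Summing the pieces gives $\|\Pi_{\leq 2q}(\varphi-1)\|_2^2=O_w(1/N)$. Theorem~\ref{thm:quantum-fourier} then yields the bound. The case $q=0$ is trivial, and finitely many small $N$ are absorbed into the constant.
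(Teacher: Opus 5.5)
Your setup is the paper's: Theorem~\ref{thm:quantum-fourier} on the complete table, input symmetry giving $\sum_{t=1}^{2q}\binom Dt L_t$, Lemma~\ref{lem:lxop-moments} under $c_w\le(16(w+1))^{-1}$, and the per-level bound $(C'_w\sqrt{t/N})^t$ with $t=2$ contributing $O_w(1/N)$.

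\textbf{The gap is in the middle levels.} For $3\le t\le\sqrt N$ you replace $\sqrt{t/N}$ by its maximum $N^{-1/4}$. This gives $O_w(N^{-3/4})$ for that range, as you state yourself. But $N^{-3/4}$ is not $O(N^{-1})$. So summing your pieces gives only $\|\Pi_{\le2q}(\varphi-1)\|_2^2=O_w(N^{-3/4})$, which yields an advantage bound of $O_w(N^{-3/8})$, not the claimed $O_w(N^{-1/2})$. The final sentence ``summing the pieces gives $O_w(1/N)$'' therefore does not follow. The loss comes from the crude uniform bound at small $t$. The actual $t=3$ term is $(C'_w\sqrt{3/N})^3=O_w(N^{-3/2})$. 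Replacing $t$ by $\sqrt N$ inside $(C'_w\sqrt{t/N})^3$ costs a factor of $N^{3/4}$.

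\textbf{Two easy repairs.}
\begin{enumerate}
\item Move the split point to $N^{1/3}$. For $3\le t\le N^{1/3}$ the terms are at most $(C'_wN^{-1/3})^t$, so this range contributes $O_w(N^{-1})$ once $N$ is large. The tail beyond $N^{1/3}$ is at most $\sum_{t>N^{1/3}}2^{-t}$, which is exponentially small.
\item Drop the split and use the ratio test, as the paper does. With $A_t=(C'_w\sqrt{t/N})^t$, you have $A_{t+1}/A_t\le\sqrt e\,C'_w\sqrt{(t+1)/N}$. For $t+1\le 2q\le 2c_wD\le 2c_wN$, this ratio is at most $1/2$ once $c_w\le 1/(8e\,C_w'^2)$. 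Then $\sum_{t=2}^{2q}A_t\le 2A_2=O_w(1/N)$.
\end{enumerate}
The second route handles the small-$t$ and large-$t$ regimes in one step. It also makes explicit that the $t=2$ level is the dominant contribution, which is the point of the corollary.
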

\begin{proof}
    The case $q=0$ is immediate; assume $q\geq1$. Let $D=N/2^s$ be the size of the actual domain, let $\xi$ denote the density of the complete $\sop[\sigma,w]$ function, and let $L_t$ denote the fixed-support second Fourier moment from~\eqref{eq:lxop-w-fixed-support-second-moment}. By input symmetry,
    \begin{equation*}
        \sum_{|\alpha|=t}|\widehat\xi(\alpha)|^2
        =\binom Dt L_t.
    \end{equation*}
    By Lemma~\ref{lem:lxop-moments}, $L_1=O_w(N^{-3})$ and $L_t\leq(C_wt^{3/2}/N^{3/2})^t$ when $(w+1)t\leq N/8$.
    Hence, for $t\geq2$,
    \[
        \binom DtL_t
        \leq
        \left(C'_w\sqrt{\frac tN}\right)^t
        \eqqcolon A_t,
        \qquad
        \frac{A_{t+1}}{A_t}
        \leq C''_w\sqrt{\frac{t+1}{N}}.
    \]
    Shrink $c_w$ from Theorem~\ref{thm:lxop-w-classical}, if necessary, so that
    \[
        c_w\leq\min\left\{2^{-s-1},\frac{1}{16(w+1)},
                          \frac{1}{8(C''_w)^2}\right\}.
    \]
    Since $q\leq c_wD \leq c_wN$, we have $2q\leq\min\{D,N/(8(w+1))\}$, so Lemma~\ref{lem:lxop-moments} applies at every level $t \leq 2q$, and the ratio is at most $1/2$ for $2\leq t<2q$. Hence
    \[
        \sum_{t=2}^{2q}\binom DtL_t
        \leq O_w(A_2)
        =O_w(1/N),
    \]
    while the level-one contribution is $DL_1=O_w(N^{-2})$. Therefore
    \[
        \sum_{1\leq|\alpha|\leq2q}|\widehat\xi(\alpha)|^2
        \leq O_w(1/N).
    \]
    The result follows from Theorem~\ref{thm:quantum-fourier}.\qed
\end{proof}

\subsubsection{Concrete bounds for one- and two-block output.}
For $w \in \{1,2\}$, Dinur gives sharper fixed-support moments. Applying Theorem~\ref{thm:quantum-fourier} to these estimates gives the following concrete bounds. The proof is given in Appendix~\ref{sec:proof-concrete-bound-lxop}.

\begin{corollary}\label{cor:lxop-quantum-concrete}
    Let $N=2^n\geq2^{10}$, and let $\sigma$ be $1$-admissible (equivalently, $2$-admissible). For a uniform function $\rf\colon\binsp[n-1]\to G$,
    \begin{equation*}
        \compdist{(q)}{\ket{\sop[\sigma,1]}}{\ket\rf}
        \leq\sqrt{\frac{1887}{N}}
        \qquad(0\leq q\leq N/64).
    \end{equation*}
    For a uniform function $\rf_2\colon\binsp[n-2]\to G^2$,
    \begin{equation*}
        \compdist{(q)}{\ket{\sop[\sigma,2]}}{\ket{\rf_2}}
        <\sqrt{\frac{37}{2N}}
        \qquad(0\leq q\leq N/384).
    \end{equation*}
\end{corollary}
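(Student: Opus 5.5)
The plan is to apply Theorem~\ref{thm:quantum-fourier} to the complete-table density $\xi$ of $\sop[\sigma,w]$ for $w\in\{1,2\}$. As in Corollary~\ref{cor:lxop-w-quantum-fourier}, input symmetry gives
\[
    \sum_{1\leq|\alpha|\leq2q}|\widehat\xi(\alpha)|^2=\sum_{t=1}^{2q}\binom Dt L_t ,
\]
with $D=N/2$ for $w=1$ and $D=N/4$ for $w=2$. Here $L_t$ is defined by~\eqref{eq:lxop-w-fixed-support-second-moment} and is expressed via~\eqref{eq:lxop-w-fourier-pullback} as a second moment of injection coefficients on $(w+1)t$ positions. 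The advantage is then at most half the square root of this sum. So the whole task is to bound $\sum_{t\leq 2q}\binom Dt L_t$ by $4\cdot1887/N$ for $w=1$, and by a quantity strictly below $4\cdot 37/(2N)=74/N$ for $w=2$.

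For the individual $L_t$ I will not use the generic constant $C_w$ of Lemma~\ref{lem:lxop-moments}. Instead I will import Dinur's explicit fixed-support Fourier-weight estimates for $\mathsf{LXoP}[\sigma,n]$ and $\mathsf{LXoP}[\sigma,2,n]$, which are valid for $N\geq1024$ and support size up to a constant fraction of $N$. Their shape is $L_1\leq c_1/N^3$ and $L_t\leq (c\,t^{3/2}/N^{3/2})^t$ with explicit $c$, or an equivalent explicit recursive form. Level one needs separate treatment: by~\eqref{eq:lxop-w-parity-check}, a single block pulls back to an injection mask with at least two nonzero entries, so it is controlled by $\Theta_2$ and $\Theta_3$. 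These are exact: $\Theta_2=1/(N-1)$ since $\funidens[\mci_2](\chi,\bar\chi)=-1/(N-1)$. Counting the admissible masks therefore gives $DL_1=O(N^{-2})$ with an explicit constant, which is negligible.

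For $t\geq2$, set $A_t=\binom Dt L_t\leq\bigl(e\,c\sqrt{t}\,D/N^{3/2}\bigr)^t$, which is of order $(\sqrt{t/N})^t$. The dominant term is $t=2$, of order $1/N$, and its constant determines the final numbers $1887$ and $37/2$. I will compute the $t=2$ term (and perhaps $t=3$) exactly or with Dinur's sharp constant. For larger $t$, I will bound the ratio $A_{t+1}/A_t$ by a constant $\rho<1$ uniformly in $t\leq 2q$, using $2q\leq N/32$ for $w=1$ and $2q\leq N/192$ for $w=2$. These thresholds are chosen so that $(w+1)\cdot 2q$ stays inside the range where Dinur's estimates hold and so that $\sqrt{(t+1)/N}$ times the constant stays below $1/2$. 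The tail is then at most $A_2\cdot\rho/(1-\rho)$.

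The main obstacle is the bookkeeping of explicit constants. One issue is making sure that the ratio bound holds all the way up to $t=2q$, not merely for small $t$. This is where the query limits $N/64$ and $N/384$ enter, and I would check the worst case $t=2q$ directly. Another issue is tracking the constant in $A_2$ precisely enough that, after adding the level-one term and the geometric tail and using $N\geq2^{10}$ to absorb lower-order terms, the total stays below $4\cdot1887/N$ and $74/N$ respectively. If Dinur's level-$t$ bound is too lossy at small $t$, I would instead split the sum at a small cutoff such as $t\leq4$. Below the cutoff I would use exact injection-coefficient moments via~\eqref{eq:injection-coefficient-merging}; above it I would use the geometric tail.
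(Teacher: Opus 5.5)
Your skeleton is the paper's. You apply Theorem~\ref{thm:quantum-fourier} to the complete table, use input symmetry to get $\sum_{t\leq2q}\binom Dt L_t$, take Dinur's explicit fixed-support estimates (his Propositions~16 and~19) for $t\geq2$, treat level one separately, and aim at $7548/N$ and $74/N$. The step that fails as written is the summation. Write $A_t$ for $\binom Dt$ times Dinur's bound on $L_t$.

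\emph{Parity alternation.} Dinur's bounds carry the parity-dependent exponent $3t/2+c_t$, with $c_t=1/2$ for odd $t$. So odd levels are smaller than even ones by a factor of order $\sqrt{t/N}$, and $A_{t+1}/A_t$ exceeds $1$ whenever $t$ is odd. For $w=2$ and $N=2048$, for instance, $NA_3\approx1.28$ while $NA_4\approx5.40$.

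\emph{No uniform ratio for $w=1$.} Even a parity-free envelope has consecutive ratio of order $7\sqrt{t/N}$. This exceeds $1$ as $t$ approaches $2q=N/32$, so no uniform $\rho<1$ exists on the range you need. The paper therefore avoids ratios for $w=1$. Using $c_t\leq1/2$, it absorbs the parity factors into $2^{3/2}t$ and writes the rest as $(c\sqrt{t/N})^t$ with $c=e\sqrt2(16/15)^{3/2}$. It then bounds $t-2$ of these factors by $3/4$ (this is where $t/N\leq1/32$ is used). This gives $\binom DtL_t\leq\frac{51}{N}t^2(3/4)^{t-2}$, summing to $51\cdot148/N=7548/N$. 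Thus $1887=7548/4$ is an artifact of this crude envelope, not of the $t=2$ term: Dinur's bound alone gives $A_2\approx8/N$.

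\emph{Tightness for $w=2$.} Here the constant is tight: the paper's total is about $73.4/N$ against $74/N$. Any ratio valid up to $t=2q$ must be at least about $0.85$ even for same-parity levels. A geometric tail started at $A_2\approx32/N$ therefore already exceeds $200/N$. The paper instead proves $A_{t+2}/A_t<0.855$ for $t\geq4$. It does so by monotonicity in $N$ and evaluation at the worst case $N=192(t+2)$; this, not the validity range of Dinur's estimate (which is $t\leq N/32$), is what dictates $q\leq N/384$. It then evaluates $NA_2,\dots,NA_5$ at $N=2048$ and runs separate even and odd geometric tails from $A_4$ and $A_5$. The case $N=1024$, where $q\leq2$, is checked directly. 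Your fallback of splitting at $t\leq4$ is the right instinct, but the tails must run over levels of equal parity.

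\emph{Level one.} Level one also needs more than $\Theta_2$ and $\Theta_3$, since $D\Theta_2$ is of constant order. The $O(N^{-2})$ comes from admissibility forcing $\beta_i\neq\beta_j$ on any support-two pullback, so those coefficients vanish by translation invariance. For $w=1$ this gives $L_1=0$ outright. For $w=2$ only the $N-1$ masks $(\gamma,\gamma)$ survive, each pulling back to support three, which gives exactly $DL_1=N/((N-1)(N-2)^2)$.
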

The two constants are not comparable, as they come from different estimates: for $w = 1$ we bound the level sums by a geometric series with a crude ratio, whereas for $w=2$ we evaluate the first few levels numerically and bound only the tail by a geometric series. The same numerical treatment would also reduce the constant $w=1$.

\ifanon\else
\vspace*{1cm}
{\large \noindent\textbf{Acknowledgements}}\\[0.25em]
The authors would like to thank Beno\^{i}t Cogliati and the GAPS 2025 discussion group --- Wonseok Choi, Chandranan Dhar, Ravindra Jejurikar, Jannis Leuther, Jordan Naccache, Abishanka Saha, Andr\'{e} Schrottenloher, and Rentaro Shiba --- for valuable discussions on the topic. A part of this result was conceived while A.J. was supported (in parts) by the German Research Foundation (DFG) within the framework of the Excellence Strategy of the Federal Government and the States -- EXC 2092 \textsc{CaSa} -- 390781972, and the Chair of Symmetric Cryptography at Ruhr University Bochum. This work was funded (in parts) by the German Federal Ministry of Education and Research (BMBF) under grant number 16KIS1802. The authors are responsible for the content of this publication.
\fi

\paragraph{Responsible Disclosure of AI Usage.}
We used ChatGPT 6, GPT-5.6, and Google Search prompts (powered by Gemini). AI tools were primarily used as a writing aid: to polish some passages, check the language, and check consistency throughout the paper. A secondary application was in certain numerical computations appearing in some proofs, where the human authors gave precise prompts to optimise certain functions and compute the resulting bounds, much as a numerical computing tool would be used in this context. All text and all technical claims, AI-assisted or not, were ultimately reviewed and validated by the authors, and they take full responsibility for the content of this paper.

\iflncs
\bibliographystyle{splncs04}
\else
\bibliographystyle{alphaurl}
\fi
\bibliography{references}

\ifappendix
\appendix
\renewcommand*{\theHsection}{appendix.\Alph{section}}
\renewcommand*{\theHsubsection}{appendix.\Alph{section}.\arabic{subsection}}
\section{Injection Coefficient Estimates}
\label{sec:explicit-max-moment}
The following coefficient bound is due to~\cite[Lemma~4.1]{Eberhard2017:SOP:arXiv}, written in our density normalisation.
\begin{proposition} \label{prop:eberhard-sparse-maximum}
    Let $G$ be a finite abelian group of order $N\geq2$. Then, for all $1\leq t\leq\floor{N/2}$,
    \[
        \Lambda_t\leq\binom Nt^{-1/2}.
    \]
\end{proposition}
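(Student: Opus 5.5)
The plan is to bound one Fourier coefficient of the injection density directly, as a normalised character sum over injections, and to use the hypothesis $t\le\floor{N/2}$ to avoid degeneracy. Fix nonprincipal $\chi_1,\ldots,\chi_t$. By the normalisation $\unidens[\mci_t]=\frac{N^t}{(N)_t}\indicate[\mci_t]$,
\[
    \funidens[\mci_t](\chi_1,\ldots,\chi_t)
    =\frac{1}{(N)_t}\sum_{(y_1,\ldots,y_t)\in\mci_t}\prod_{i=1}^t\overline{\chi_i(y_i)}.
\]
Let $M$ be the $t\times N$ matrix with entries $M_{i,y}=\overline{\chi_i(y)}$. Grouping injections by their image $B$, a $t$-subset of $G$, gives
\[
    \funidens[\mci_t](\chi)=\frac{1}{(N)_t}\sum_{|B|=t}\operatorname{perm}(M_B),
\]
where $M_B$ is the $t\times t$ submatrix of $M$ with columns indexed by $B$. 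The target inequality $\Lambda_t\le\binom Nt^{-1/2}$ is then equivalent to
\[
    \Bigl|\sum_{|B|=t}\operatorname{perm}(M_B)\Bigr|\le t!\,\binom Nt^{1/2}.
\]
This has a natural reading: an average over the $\binom Nt$ subsets of quantities of size at most $t!$ should enjoy square-root cancellation. The only structure available for that cancellation is that every row of $M$ is orthogonal to the all-ones vector, since each $\chi_i$ is nonprincipal.

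To realise the cancellation, I would first pass to the complete permutation. By \eqref{eq:injection-restriction}, the coefficient equals $\funidens[\mcs](\chi')$, where $\chi'$ is $\chi$ padded with principal characters. I would then use the input-permutation symmetry of $\mcs$ to average this coefficient over all placements of the $t$ nonprincipal characters among the $N$ coordinates. The resulting quantity is the inner product of $\unidens[\mcs]$ with a symmetrised character $\Psi$ of level $t$. Now apply Cauchy–Schwarz on the $S_N$-orbit, that is, over the $\binom Nt$ supports together with their orderings. Parseval on this orbit sum gives $\|\Psi\|_2^2$ of order $\binom Nt^{-1}$ in the appropriate normalisation. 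Meanwhile, the projection of $\unidens[\mcs]$ onto the symmetric level-$t$ isotypic piece should have norm $O(1)$. Here the condition $t\le N/2$ enters: it guarantees that this Young-type component (shape $(N-t,t)$) actually appears, so that the projection is well behaved. Equivalently, the level-$t$ part of the full permutation density has weight at most $\binom Nt\cdot\binom Nt^{-1}$ per orbit direction.

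A more elementary fallback uses the merging identity \eqref{eq:injection-coefficient-merging}. Iterating it gives $\Lambda_t\le\frac{t-1}{N-t+1}\Lambda'_{t-1}$, where $\Lambda'_{t-1}$ must also cover merged tuples in which some $\chi_i\chi_t$ becomes principal. Such tuples drop to level $t-2$ by \eqref{eq:injection-restriction}. The induction hypothesis would then be $\Lambda_s\le\binom Ns^{-1/2}$ for all $s<t$. To close it, I would need
\[
    \frac{t-1}{N-t+1}\binom N{t-1}^{-1/2}\le\binom Nt^{-1/2},
    \qquad
    \frac{t-1}{N-t+1}\binom N{t-2}^{-1/2}\le\binom Nt^{-1/2}
\]
for $t\le N/2$, together with the base cases $t=1$, where the coefficient vanishes, and $t=2$, where it equals $-1/(N-1)\le\binom N2^{-1/2}$. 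The first inequality reduces to $(t-1)^2 t\le (N-t+1)^2\cdot\frac{N-t+1}{1}\cdots$ up to the binomial ratio $\binom Nt/\binom N{t-1}=\frac{N-t+1}{t}$, which holds comfortably. The real obstacle, and the step I expect to be hardest, is the second type of term. In the recursion these occur with multiplicity up to $t-1$, and the factor $\frac{t-1}{N-t+1}$ is too weak to absorb the binomial ratio $\binom Nt/\binom N{t-2}$ once $t$ is a constant fraction of $N$. For that regime I would therefore rely on the orbit and Cauchy–Schwarz argument above, or on Eberhard's original argument, rather than on the naive recursion.
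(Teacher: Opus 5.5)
As written, the proposal does not prove the statement. Your main route is circular, and the elementary route you abandon fails only because of an arithmetic error: it actually closes. By \eqref{eq:injection-maximum-recurrence}, for $t\ge3$,
\[
    \Lambda_t\le\frac{t-1}{N-t+1}\max\{\Lambda_{t-1},\Lambda_{t-2}\}.
\]
The multiplicity $t-1$ of the merged terms is already the numerator of this factor. Assume $\Lambda_s\le\binom Ns^{-1/2}$ for $1\le s<t$. The level-$(t-2)$ branch then needs
\[
    \Bigl(\frac{t-1}{N-t+1}\Bigr)^2\le\frac{\binom N{t-2}}{\binom Nt}=\frac{t(t-1)}{(N-t+2)(N-t+1)},
\]
that is, $(t-1)(N-t+2)\le t(N-t+1)$, which simplifies to $2t\le N+2$. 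The level-$(t-1)$ branch needs $(t-1)^2\le t(N-t+1)$, which holds since $N-t+1\ge t-1$. With your base cases $t=1,2$, induction gives $\Lambda_t\le\binom Nt^{-1/2}$ for every $t\le\floor{N/2}$. Your heuristic went wrong because at $t\approx cN$ the factor must absorb only the square root of the binomial ratio. Here $\frac{t-1}{N-t+1}\approx\frac{c}{1-c}$ and $\sqrt{\binom Nt/\binom N{t-2}}\approx\frac{1-c}{c}$ cancel to leading order, and the exact comparison holds precisely when $2t\le N+2$. This is where the hypothesis $t\le N/2$ enters, not through the appearance of the shape $(N-t,t)$.

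The orbit/Cauchy--Schwarz route, by contrast, contains no estimate. Because $\unidens[\mcs]$ is invariant under input permutations, its projection onto the span of the orbit $O$ of $\chi'$ has norm $\sqrt{|O|}\,|\funidens[\mcs](\chi')|$, with $|O|\ge\binom Nt$. Asserting that this norm is $O(1)$ is therefore just the claim itself, and with an unspecified constant it could not give the sharp bound anyway. If you instead mean the whole input-symmetric level-$t$ component, its squared norm is $\binom Nt\Theta_t$. This equals $N/2$ at $t=2$ and $N(N+6)/8$ at $t=4$, so it is not $O(1)$.

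Deferring to Eberhard's argument is exactly what the paper does: it imports \cite[Lemma~4.1]{Eberhard2017:SOP:arXiv}, rescales by $N^N/N!$, and applies \eqref{eq:injection-restriction}. Once the arithmetic is corrected, your recursion is a genuinely different, self-contained proof that uses only \eqref{eq:injection-coefficient-merging}. The paper iterates the same recurrence beyond $N/2$ in Appendix~\ref{sec:extended-injection-proof}, so your argument would also make that appendix independent of Eberhard's lemma.
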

\begin{proof}
    From~\cite[Lemma~4.1]{Eberhard2017:SOP:arXiv},
    \[
        |\findicate[\mcs](\chi)|
        \leq\binom Nt^{-1/2}\frac{N!}{N^N}
        \qquad
        (|\chi|=t\leq N/2).
    \]
    Multiplying by $N^N/N!$ gives the corresponding density coefficient, and~\eqref{eq:injection-restriction} then gives the stated bound on $\Lambda_t$.\qed
\end{proof}

\subsection{Small Supports}
At support size one, every nonprincipal character has zero coefficient, so $\Theta_1=\Lambda_1=0$. For nonprincipal $\chi_1,\chi_2\in\widehat G$,
\[
    N(N-1)\widehat{\unidens[\mci_2]}(\chi_1,\chi_2)
    =
    \sum_{y_1\neq y_2}
    \overline{\chi_1(y_1)\chi_2(y_2)}
    =
    -\sum_{y\in G}\overline{(\chi_1\chi_2)(y)}.
\]
The unrestricted sum over $G^2$ is zero, since each character is nonprincipal. Subtracting the equal-output assignments gives the last equality. By character orthogonality,
\[
    \widehat{\unidens[\mci_2]}(\chi_1,\chi_2)
    =
    \begin{cases}
        -1/(N-1), & \chi_1\chi_2=\trivchar,\\
        0, & \text{otherwise}.
    \end{cases}
\]
There are $N-1$ such ordered pairs, one for each choice of $\chi_1$, since $\chi_2=\chi_1^{-1}$ is then forced. Hence $\Lambda_2=1/(N-1)$ and $\Theta_2=1/(N-1)$.

For $N\geq3$ and three nonprincipal characters, inclusion--exclusion removes the three possible pairwise equalities. The unrestricted sum and each single-pair sum vanish, since an unrestricted nonprincipal character sums to zero. The all-equal term has coefficient two, so
\[
    (N)_3\funidens[\mci_3](\chi_1,\chi_2,\chi_3)
    =2\sum_{y\in G}\overline{(\chi_1\chi_2\chi_3)(y)}.
\]
Consequently,
\[
    \funidens[\mci_3](\chi_1,\chi_2,\chi_3)
    =\begin{cases}
        2/((N-1)(N-2)), & \chi_1\chi_2\chi_3=\trivchar,\\
        0, & \text{otherwise}.
    \end{cases}
\]
There are $(N-1)(N-2)$ such ordered triples: choose $\chi_1\neq\trivchar$, then $\chi_2\notin\{\trivchar,\chi_1^{-1}\}$, and finally $\chi_3=(\chi_1\chi_2)^{-1}$. Thus
\[
    \Lambda_3=\frac{2}{(N-1)(N-2)},
    \qquad
    \Theta_3=\frac{4}{(N-1)(N-2)}.
\]

\section{Proof of Lemma~\ref{lem:extended-injection-moments}}
\label{sec:extended-injection-proof}

We retain the notation $\Theta_t=\Theta_{t,2}$ and $\Lambda_t$ for the actual fixed-support moment and maximum. Throughout this appendix, $N\geq1024$, and $3\leq T\leq8N/15$. All logarithms in this appendix are natural.

\paragraph{Exact second moments.}
Parseval gives
\[
    \sum_{\alpha\in\dual^j}|\funidens[\mci_j](\alpha)|^2
    =\|\unidens[\mci_j]\|_2^2=\frac{N^j}{(N)_j}.
\]
Separating these characters by their support, and using~\eqref{eq:injection-restriction}, yields
\[
    \sum_{t=0}^j\binom jt\Theta_t=\frac{N^j}{(N)_j},
\]
where we take $\Theta_0=1$. Binomial inversion therefore gives the exact identity
\begin{equation}\label{eq:group-independent-weight}
    \Theta_t=\sum_{j=0}^t(-1)^{t-j}\binom tj\frac{N^j}{(N)_j}.
\end{equation}
This is the expression obtained in~\cite[Proposition~20]{Dinur2024:SOP:EUROCRYPT} in the binary setting. The preceding calculation also makes clear that it depends only on the cardinality $N$.

Let $V$ be an auxiliary real random variable with density, relative to Lebesgue measure,
\[
    p_V(v)=\frac{v^Ne^{-v}}{N!}\qquad(v>0),
    \qquad Z=\frac NV-1.
\]
For $0\leq j\leq N$, direct integration gives
\[
    \mathbf E\left[\left(\frac NV\right)^j\right]
    =\frac{N^j(N-j)!}{N!}=\frac{N^j}{(N)_j}.
\]
Consequently, the binomial-inversion identity~\eqref{eq:group-independent-weight} gives
\begin{equation}\label{eq:injection-auxiliary-moments}
    \Theta_t=\mathbf E[Z^t]\qquad(0\leq t\leq N).
\end{equation}
This variable only represents the numerical moment sequence; it is not part of the oracle distribution. Its even moments allow us to interpolate between support sizes using H\"older's inequality.

We also use the generating function
\begin{equation}\label{eq:injection-moment-generating}
    F_N(z)=(1-z)^N\exp\left(\frac{Nz}{1-z}\right),
    \qquad
    [z^t]F_N(z)=\binom Nt\Theta_t\quad(0\leq t\leq N).
\end{equation}
Indeed, substituting~\eqref{eq:group-independent-weight} and interchanging the finite sums yields
\[
    \sum_{t=0}^N\binom Nt\Theta_tz^t
    =(1-z)^N\sum_{j=0}^N\frac1{j!}
        \left(\frac{Nz}{1-z}\right)^j.
\]
The coefficients through degree $N$ agree with those of $F_N$. Moreover,
\[
    F_N(z)=\exp\left(N\sum_{j=2}^{\infty}\frac{j-1}{j}z^j\right)
\]
has non-negative coefficients. Hence, for every $0<\rho<1$, $[z^t]F_N(z)\,\rho^t\leq F_N(\rho)$, and therefore $[z^t]F_N(z)\leq \rho^{-t}F_N(\rho)$. Differentiating gives $(1-z)^2F_N'(z)=NzF_N(z)$, and comparison of coefficients gives
\[
    \Theta_{t+1}=\frac{t}{N-t}(2\Theta_t+\Theta_{t-1})
    \quad(1\leq t<N),
    \qquad \Theta_0=1,\quad\Theta_1=0.
\]
In particular,
\begin{equation}\label{eq:exact-injection-low-moments}
\begin{aligned}
    \Theta_4&=\frac{3(N+6)}{(N-1)(N-2)(N-3)},\\
    \Theta_5&=\frac{8(5N+12)}{(N-1)(N-2)(N-3)(N-4)},\\
    \Theta_6&=\frac{5(3N^2+86N+120)}{(N-1)(N-2)(N-3)(N-4)(N-5)}.
\end{aligned}
\end{equation}

\paragraph{Extending the maximum beyond $N/2$.}
To control the remaining levels, we use Eberhard's coefficient-merging identity. For $2\leq t\leq N$ and nonprincipal $\chi_1,\ldots,\chi_t\in\dual$,
write
\[
    \chi^{(i)}=(\chi_1,\ldots,\chi_i\chi_t,\ldots,\chi_{t-1}).
\]
Then using~\eqref{eq:injection-coefficient-merging}
\[
    \funidens[\mci_t](\chi_1,\ldots,\chi_t)
    =-\frac1{N-t+1}\sum_{i=1}^{t-1}
        \funidens[\mci_{t-1}](\chi^{(i)}).
\]
The merged coordinate $\chi_i\chi_t$ is either nonprincipal or principal, so the resulting character lies at level $t-1$ or $t-2$. Taking absolute values and using~\eqref{eq:injection-restriction} gives
\begin{equation}\label{eq:injection-maximum-recurrence}
    \Lambda_t\leq\frac{t-1}{N-t+1}
        \max\{\Lambda_{t-1},\Lambda_{t-2}\}
    \qquad(3\leq t\leq N).
\end{equation}

Put $m_0=\floor{N/2}$ and $C=\binom{N}{m_0}$. We apply this recurrence for $t\geq m_0+1$, starting from Proposition~\ref{prop:eberhard-sparse-maximum}, which gives
$\Lambda_t\leq\binom Nt^{-1/2}$ for $1\leq t\leq N/2$. We claim that
\begin{equation}\label{eq:extended-maximum-central}
    \Lambda_t\leq\frac{\sqrt C}{\binom Nt}
    \qquad(m_0\leq t\leq N).
\end{equation}
At $t=m_0$, this is exactly the sparse bound. At $t=m_0+1$, if $N=2m_0$, the recurrence factor is one and
\[
    \binom N{m_0-1}^{-1/2}
    =\frac1{\sqrt C}\sqrt{\frac{m_0+1}{m_0}}
    \leq\frac{\sqrt C}{\binom N{m_0+1}}.
\]
If $N=2m_0+1$, the factor is $m_0/(m_0+1)$ and
\[
    \frac{m_0}{m_0+1}\binom N{m_0-1}^{-1/2}
    =\frac1{\sqrt C}\frac{\sqrt{m_0(m_0+2)}}{m_0+1}
    \leq\frac1{\sqrt C}
    =\frac{\sqrt C}{\binom N{m_0+1}}.
\]
For $t\geq m_0+2$, the binomial coefficients are non-increasing, so the induction step follows from
\[
    \frac{t-1}{N-t+1}\frac1{\binom N{t-1}}
    =\frac{t-1}{t}\frac1{\binom Nt}
    \leq\frac1{\binom Nt}.
\]
This proves~\eqref{eq:extended-maximum-central} and extends the maximum bound beyond $N/2$.

\paragraph{The levels up to $N/2$.}
Put $L=2\floor{N/4}$, so $L$ is even and is either $m_0$ or $m_0-1$. Since the largest binomial coefficient is at least $2^N/(N+1)$,
\[
    \binom NL\geq\frac{2^N}{2(N+1)}.
\]
Using~\eqref{eq:injection-moment-generating} at $z=2/5$ therefore gives
\[
    \Theta_L\leq2(N+1)\beta^N,
    \qquad
    \beta=\frac3{10}e^{2/3}\sqrt{\frac52}<e^{-1/13}.
\]
Also, $\Theta_6\geq15/N^3$ by~\eqref{eq:exact-injection-low-moments}. For $N\geq1024$,
\begin{equation}\label{eq:injection-even-endpoint}
    \Theta_L\leq2(N+1)e^{-N/13}
    \leq\frac{15}{N^3}e^{-(L-6)/10}
    \leq\Theta_6e^{-(L-6)/10}.
\end{equation}
For the middle inequality, it suffices, using $L\leq N/2$, to check
\[
    \frac{7N}{260}-\log\left(\frac{2(N+1)N^3}{15}\right)+\frac35>0.
\]
The left-hand side is greater than $2.45727$ at $N=1024$, and its derivative is $7/260-1/(N+1)-3/N>0$ in the required range.

Both $6$ and $L$ are even. H\"older's inequality applied to~\eqref{eq:injection-auxiliary-moments}, followed by~\eqref{eq:injection-even-endpoint}, gives, for $6\leq t\leq L$,
\[
    \Theta_t\leq\mathbf E[|Z|^t]
    \leq\Theta_6^{(L-t)/(L-6)}\Theta_L^{(t-6)/(L-6)}
    \leq\Theta_6e^{-(t-6)/10}.
\]
Hence
\begin{equation}\label{eq:exact-moment-low-tail}
    \sum_{t=4}^{L}\Theta_t
    \leq\Theta_4+\Theta_5+\frac{\Theta_6}{1-e^{-1/10}}
    <\Theta_4+\Theta_5+11\Theta_6.
\end{equation}
For $4\leq t\leq L$, the sparse maximum gives
\[
    \binom Nt\Theta_{t,4}
    \leq\binom Nt\Lambda_t^2\Theta_t\leq\Theta_t.
\]
The support-three contribution is given exactly by Appendix~\ref{sec:explicit-max-moment}:
\[
    \binom N3\Theta_{3,4}
    =\frac{8N}{3(N-1)^2(N-2)^2}.
\]

\paragraph{The remaining levels.}
For $L<t\leq T$, we have $t\geq m_0$. Using~\eqref{eq:extended-maximum-central} and then~\eqref{eq:injection-moment-generating} at $z=2/5$,
\[
    \binom Nt\Theta_{t,4}
    \leq\frac{C}{\binom Nt}\Theta_t
    \leq\frac{2^N}{\binom Nt^2}
        \left(\frac35\right)^Ne^{2N/3}\left(\frac52\right)^t.
\]
For $x=t/N$, put $h(x)=-x\log x-(1-x)\log(1-x)$. The binomial probability with parameters $N,x$ is largest at $t$, and is at least $1/(N+1)$ there. Thus
\[
    \binom Nt\geq\frac{e^{Nh(x)}}{N+1}.
\]
It follows that
\[
    \binom Nt\Theta_{t,4}\leq(N+1)^2e^{NJ(x)},
    \qquad
    J(x)=\log\frac65+\frac23+x\log\frac52-2h(x).
\]
Here $x\geq m_0/N>0.49$, and $J'(x)=\log(5/2)-2\log((1-x)/x)>0$ throughout the required interval. A direct evaluation gives
\[
    J(8/15)<-0.04417000<-\frac1{23}.
\]
There are fewer than $N$ remaining levels, so
\begin{equation}\label{eq:extended-moment-high-tail}
    \sum_{L<t\leq T}\binom Nt\Theta_{t,4}
    \leq N(N+1)^2e^{-N/23}.
\end{equation}
This also covers the possible odd level just below $N/2$.

Combining~\eqref{eq:exact-moment-low-tail},~\eqref{eq:extended-moment-high-tail}, and the support-three contribution gives
\begin{equation}\label{eq:extended-moment-numerical-envelope}
    \sum_{t=3}^{T}\binom Nt\Theta_{t,4}
    <\frac{8N}{3(N-1)^2(N-2)^2}
        +\Theta_4+\Theta_5+11\Theta_6
        +N(N+1)^2e^{-N/23}.
\end{equation}
After multiplication by $N^2$, each rational term decreases with $N$: write it as a polynomial with non-negative coefficients in $1/N$, divided by products of $1-i/N$. The exponential term decreases as well, since
\[
    \frac{d}{dN}\log\left(N^3(N+1)^2e^{-N/23}\right)
    =\frac3N+\frac2{N+1}-\frac1{23}<0.
\]
At $N=1024$, the normalised right-hand side of~\eqref{eq:extended-moment-numerical-envelope} is less than $3.245641<13/4$. This proves~\eqref{eq:extended-injection-fourth-moment}.

\paragraph{A common maximum for all these levels.}
At support three,
\[
    \Lambda_3^2\binom N4
    =\frac{N(N-3)}{6(N-1)(N-2)}<1.
\]
For $4\leq t\leq m_0$, we have
\[
    \Lambda_t\leq\binom Nt^{-1/2}\leq\binom N4^{-1/2}.
\]
For $m_0<t\leq8N/15$, the entropy estimate and~\eqref{eq:extended-maximum-central} give
\[
    \Lambda_t\leq(N+1)\exp\left(N\left(\frac12\log2-h(t/N)\right)\right)
    \leq(N+1)e^{-N/3}<N^{-2}<\binom N4^{-1/2}.
\]
Indeed, $h(8/15)-\tfrac12\log2>1/3$, and $(N+1)e^{-N/3}<N^{-2}$ holds at $N=1024$ and persists since $2/N+1/(N+1)-1/3<0$. This proves~\eqref{eq:extended-injection-maximum}.

Now let $m$ be any integer with $T \leq m \leq N$. For $3\leq t\leq T\leq m\leq N$, the ratio $\binom mt/\binom Nt$ is at most $\binom m3/\binom N3$. Also, $\Theta_{t,2k}\leq\Lambda_t^{2k-4}\Theta_{t,4}$. Thus the two estimates just proved give
\[
    \sum_{t=3}^{T}\binom mt\Theta_{t,2k}
    \leq\frac{\binom m3}{\binom N3}\binom N4^{-(k-2)}
        \sum_{t=3}^{T}\binom Nt\Theta_{t,4}
    \leq\frac{\binom m3}{\binom N3}\frac{13}{4N^2}\binom N4^{-(k-2)}.
\]
This proves~\eqref{eq:common-injection-tail}.\qed

\section{Query-dependent Quantum Bounds for Sums of Permutations}
\label{sec:sop-query-dependent}
Let $F\colon\mcx\to H$ have density $\varphi$ and uniform one-point marginals, and let $\rf\colon\mcx\to H$ be uniform. For a $q$-query quantum distinguisher $\mca$ with acceptance function $a$, Lemmas~\ref{lem:layered-fourier-bound} and~\ref{lem:fourier-zhandry-method}, with $r=2$, give
\begin{equation}\label{eq:quantum-fourier-split}
    \compdist{\mca}{\ket F}{\ket\rf}
    \leq|\mathbf E[a\varphi_{=2}]|+\frac12\|R_q\|_2,
    \qquad R_q=\sum_{t=3}^{2q}\varphi_{=t},
\end{equation}
for $q\geq1$. We apply this bound to the unprojected and projected sums. Their level-two components follow from Lemma~\ref{lem:sop-two-point-law}; the common injection estimate~\eqref{eq:common-injection-tail} controls the higher levels.

\subsection{A Bound for the Centred Collision Count}
\label{subsec:collision-count}
We first bound the correlation of a quantum acceptance function with the centred collision count. We state it for an arbitrary finite input set and a finite abelian output group, so that it also applies to truncation.

\begin{lemma}\label{lem:quantum-centred-collisions}
    Let $\mcx$ be a finite set, let $H$ be a finite abelian group of order $M$, and define, for $f \colon \mcx \to H$,
    \[
        K_H(f) \coloneq \sum_{\{x,y\}\subseteq\mcx}
            \left(M\indicate[\{f(x)=f(y)\}]-1\right),
    \]
    where the sum is over unordered pairs of distinct inputs, and the expectation below is with respect to $\unimes[H^{\mcx}]$. If $a$ is the acceptance probability of a $q$-query quantum algorithm, then
    \[
        \left|\mathbf{E}_{\unimes}[aK_H]\right|
        \leq\frac{\pi^2}{6}(2q-1)^3=O(q^3)\qquad(q\geq1).
    \]
    For $q=0$, the expectation is zero. The implicit constant is independent of $\mcx$ and $H$.
\end{lemma}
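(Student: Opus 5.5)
The plan is to follow Zhandry's small-range-distribution method~\cite{Zhandry2012:SRD:FOCS}. I would realise the centred collision count $K_H$ as the first-order term, in $1/r$, of the density of a small-range function with range parameter $r$. The $O(q^3)$ bound would then come from Zhandry's $O(q^3/r)$ indistinguishability bound, multiplied by $r$, in the limit $r\to\infty$. The case $q=0$ is immediate: then $a$ is constant, and $\mathbf E_{\unimes}[K_H]=0$ because $\Pr[f(x)=f(y)]=1/M$ for distinct $x,y$.

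\emph{Step 1: the small-range density.} For $r\geq1$, let $\mathsf{SR}_r$ be the distribution of $f=h\circ g$, where $g\colon\mcx\to[r]$ and $h\colon[r]\to H$ are independent and uniform. Its density relative to $\unimes[H^{\mcx}]$ is
\[
    \varphi_r(f)=\mathbf E_g\bigl[M^{|\mcx|-|g(\mcx)|}\,\indicate[\{f\text{ constant on the fibres of }g\}]\bigr],
\]
since, given $g$, $f$ is uniform on the functions constant on the fibres of $g$. Because $\mcx$ is fixed and finite, I would expand in $1/r$. With probability $1-\binom{|\mcx|}2/r+O(r^{-2})$ the map $g$ is injective, which contributes the constant $1$. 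For each unordered pair $\{x,y\}$, $g$ has exactly the single collision $\{x,y\}$ with probability $1/r+O(r^{-2})$, which contributes $M\indicate[\{f(x)=f(y)\}]$. All other collision patterns have probability $O(r^{-2})$. Hence
\[
    \varphi_r=1+\frac{K_H}{r}+O_{|\mcx|,M}(r^{-2})
\]
uniformly in $f$, and therefore $r\bigl(\mathbf E_{\unimes}[a\varphi_r]-\mathbf E_{\unimes}[a]\bigr)\to\mathbf E_{\unimes}[aK_H]$ as $r\to\infty$, because $0\leq a\leq1$. The constants in this expansion depend on $|\mcx|$, but they disappear in the limit, so the final bound does not depend on $\mcx$ or $H$.

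\emph{Step 2: Zhandry's bound.} Note that $\mathbf E_{\unimes}[a\varphi_r]-\mathbf E_{\unimes}[a]$ is exactly the distinguishing advantage of $\mca$ between $\ket{\mathsf{SR}_r}$ and $\ket\rf$. Zhandry's small-range theorem bounds it by $\ell(q)/r$, with $\ell(q)=\tfrac{\pi^2}{6}(2q-1)^3$ in the form needed here. Multiplying by $r$ and letting $r\to\infty$ gives $|\mathbf E[aK_H]|\leq\tfrac{\pi^2}{6}(2q-1)^3$.

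\emph{Step 3, the main obstacle: adapting the small-range theorem.} First, Zhandry's statement is for a generic output set. I must check that it applies verbatim for an arbitrary finite abelian $H$ and the addition oracle. This should be fine, since the argument only needs $h$ uniform and the oracle to be a unitary indexed by $f(x)$, as in Lemma~\ref{lem:fourier-zhandry-method}.

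Second, I need the precise constant $(2q-1)^3$ rather than $(2q)^3$. Here I would redo Zhandry's polynomial argument. The acceptance probability under $\mathsf{SR}_r$, as a function of $1/r$, is governed by the distribution of the restriction of $f$ to at most $2q$ points, by Lemma~\ref{lem:fourier-zhandry-method}. Only the collision structure of $g$ on those points matters, which gives a polynomial in $1/r$ whose coefficients are controlled by Stirling numbers. I would then bound its derivative at $0$ by a Markov-brothers-type argument on the values at $r=1,2,\ldots$, following Zhandry; this yields the sum $\sum_{j\ge1}j^{-2}=\pi^2/6$.

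If reproducing the sharp constant proves delicate, I would settle for citing Zhandry's $\ell(q)=O(q^3)$. That weaker version suffices for every asymptotic use of the lemma in the paper.
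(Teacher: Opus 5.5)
Your proposal is correct and follows essentially the same route as the paper. Both realise $K_H$ as the $1/r$ term in the density $1+K_H/r+O(r^{-2})$ of the small-range function $c\circ b$, apply Zhandry's polynomial bound to $p(1/r)=\mathbf E[a(F_r)]$, then multiply by $r$ and let $r\to\infty$.

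For the exact constant, make the degree drop explicit. On a support of size $s\leq 2q$ (from Lemma~\ref{lem:fourier-zhandry-method}), a collision pattern with $j\geq1$ blocks has probability $(r)_j/r^s=z^{s-j}\prod_{i=0}^{j-1}(1-iz)$ with $z=1/r$. This is a polynomial of degree at most $s-1\leq 2q-1$, so Zhandry's Theorem~B.1 applies with degree $2q-1$ rather than $2q$. Your fallback of citing $\ell(q)=\tfrac{\pi^2}{6}(2q)^3$ would only give the weaker constant.
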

Notice that $K_H$ has mean zero.
\begin{proof}
    If $q=0$, $M=1$, or $|\mcx|<2$, the claim follows from $\mathbf E[K_H]=0$ or $K_H=0$. Assume otherwise. For a positive integer $r$, sample independent uniform random functions $b \colon \mcx \to [r]$ and $c \colon [r] \to H$, and set $F_r=c\circ b$. For $z=1/r$, let $p(z)=\mathbf E[a(F_r)]$. We first make the small-range bound concrete. By Lemma~\ref{lem:fourier-zhandry-method}, the Fourier expansion of $a$ uses supports of size at most $2q$. On a support of size $s\geq1$, the labels assigned by $b$ induce a partition. A fixed partition with $j\geq1$ blocks has probability
    \[
        \frac{(r)_j}{r^s}
        =z^{s-j}\prod_{i=0}^{j-1}(1-iz),
    \]
    which has degree at most $s-1$, since the factor at $i=0$ is one. The conditional character expectation depends on the partition but not on $r$. Thus $p$ is a polynomial of degree at most $2q-1$, with $p(0)=\mathbf E_{\unimes}[a]$ and $0\leq p(1/r)\leq1$ for every positive integer $r$. Its coefficients are real, since its values at these real points are real. Zhandry's polynomial bound~\cite[Theorem~B.1, with $\Delta=1$]{Zhandry2012:SRD:FOCS} yields
    \begin{equation}\label{eq:small-range-bound}
        \left|\mathbf{E}[a(F_r)]-\mathbf{E}_{\unimes}[a]\right|
        \leq\frac{\pi^2(2q-1)^3}{6r}
        \leq C_{\mathrm{SR}}\frac{q^3}{r},
        \qquad C_{\mathrm{SR}}=\frac{4\pi^2}{3}.
    \end{equation}
    Here the support cutoff applies directly to the addition oracle on $H$.

    Let $D=|\mcx|$, and let $\varrho_r$ be the density of $F_r$ on $H^{\mcx}$. As $r\to\infty$, the probability that $b$ is injective is
    \[
        \frac{(r)_D}{r^D}=1-\frac{\binom{D}{2}}{r}+O_D(r^{-2}).
    \]
    For each fixed pair $\{x,y\}$, the probability that this is the only collision of $b$ is
    \[
        \frac{(r)_{D-1}}{r^D}=\frac{1}{r}+O_D(r^{-2}).
    \]
    Conditioned on the first event, the values of $c$ at the distinct labels chosen by $b$ are independent and uniform, so $F_r$ is a uniform random function. Conditioned on the second event, the outputs at $x$ and $y$ are equal, and all other outputs remain independent and uniform. The density of this distribution is $M\indicate[\{f(x)=f(y)\}]$. All other collision patterns have total probability $O_D(r^{-2})$. Consequently,
    \[
        \varrho_r = \indicate[]+\frac{K_H}{r}+O_{D,M}(r^{-2}),
    \]
    where the error is uniform on the finite space $H^{\mcx}$, with $\mcx$ and $H$ fixed. Thus,
    \[
        \lim_{r\to\infty}r\left(\mathbf{E}[a(F_r)]-\mathbf{E}_{\unimes}[a]\right)
        =\mathbf{E}_{\unimes}[aK_H].
    \]
    Multiplying~\eqref{eq:small-range-bound} by $r$ and taking the limit proves the claim. The limit is taken with $\mcx,H$ and the algorithm fixed. Although the vanishing error can depend on $D,M$, the surviving constant $C_{\mathrm{SR}}$ does not.\qed
\end{proof}

\subsection{Proof of Theorem~\ref{thm:sop-quantum-combined-local}}
Fix $k\geq2$, and let $\varphi_k=\unidens[\mcs]^{*k}$ be the density of the complete table of $\sop[k]$. Its one-point marginals are uniform. Since the two-point marginals determine the level-two projection, Lemma~\ref{lem:sop-two-point-law} gives
\begin{equation}\label{eq:sop-k-collision-part}
    h_{2,k}\coloneq(\varphi_k)_{=2}
    =\eta_k K_G,\qquad\eta_k=\left(-\frac1{N-1}\right)^k.
\end{equation}
Thus Lemma~\ref{lem:quantum-centred-collisions} implies, for every $q$-query acceptance function $a$,
\[
    |\mathbf E[ah_{2,k}]|
    \leq\frac{\pi^2(2q-1)^3}{6(N-1)^k}.
\]
For $q=1$, there is no remainder, proving the one-query bounds.

Suppose $N\geq1024$ and $2\leq q\leq\sqrt N$. By Parseval's identity, convolution-multiplication duality, and~\eqref{eq:common-injection-tail} with $m=N$ and $T=2q$,
\[
    \|R_{q,k}\|_2^2
    =\sum_{t=3}^{2q}\binom Nt\Theta_{t,2k}
    \leq\frac{13}{4N^2}\binom N4^{-(k-2)},
    \qquad R_{q,k}=\sum_{t=3}^{2q}(\varphi_k)_{=t}.
\]
The remainder in~\eqref{eq:quantum-fourier-split} is therefore $O_k(N^{-2k+3})$. For $k=2$, this gives $O(q^3/N^2+1/N)$; for $k\geq3$, it gives $O_k(q^3/N^k)$. The finitely many orders $N<1024$ are absorbed into the implicit constants.

Corollary~\ref{cor:sop-quantum-completion} gives the uniform bound $O_k(N^{-(k-3/2)})$ throughout $2q<N$. For $q>\sqrt N$, this is no larger than $q^3/N^k$. Combining the local and uniform bounds proves the theorem.\qed

Keeping the constants in the preceding calculation gives the following bound for $q < 4N/15$.
\begin{corollary}\label{cor:sop-query-dependent-concrete}
    Let $G$ be a finite abelian group of order $N\geq1024$, fix $k\geq2$, and let $\rf\colon G\to G$ be uniform. For $2\leq q\leq4N/15$,
    \begin{equation}\label{eq:sop-query-dependent-concrete}
    \begin{split}
        \compdist{(q)}{\ket{\sop}}{\ket\rf}
        \leq\min\Bigg\{&
            \sqrt{\frac{N}{8(N-1)^{2k-2}}
                +\frac{13}{16N^2}\binom N4^{-(k-2)}},\\
            &\frac{\pi^2(2q-1)^3}{6(N-1)^k}
                +\frac{\sqrt{13}}{4N}\binom N4^{-(k-2)/2}
        \Bigg\}.
    \end{split}
    \end{equation}
    For $q=1$, the advantage is at most $\pi^2/(6(N-1)^k)$; for $q=0$, it is zero.
\end{corollary}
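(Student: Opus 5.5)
The first entry of the minimum is exactly Corollary~\ref{cor:sop-quantum-concrete}, which applies since $q\leq4N/15$. So the plan is to establish only the second entry, by keeping explicit constants in the proof of Theorem~\ref{thm:sop-quantum-combined-local}, and then to treat $q=1$ and $q=0$ separately.

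Fix a $q$-query acceptance function $a$. We apply the split~\eqref{eq:quantum-fourier-split} with $r=2$; this is legitimate because Lemma~\ref{lem:fourier-zhandry-method} gives $\Pi_Sa=0$ for $|S|>2q$, and level one vanishes by~\eqref{eq:vanishing-first-level}.

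\emph{Level two.} By~\eqref{eq:sop-k-collision-part}, $(\varphi_k)_{=2}=\eta_kK_G$ with $|\eta_k|=(N-1)^{-k}$. Lemma~\ref{lem:quantum-centred-collisions} then gives
\[
    |\mathbf E[a(\varphi_k)_{=2}]|\leq\frac{\pi^2(2q-1)^3}{6(N-1)^k}.
\]

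\emph{Higher levels.} The remainder $R_{q,k}=\sum_{t=3}^{2q}(\varphi_k)_{=t}$ has squared norm $\sum_{t=3}^{2q}\binom Nt\Theta_{t,2k}$. This uses Parseval, the convolution identity $\widehat{\varphi_k}=\funidens[\mcs]^k$, and Remark~\ref{rem:complete-permutation-density-applicable} to pass from $\mcs$ to injections (valid since $2q<N$). We bound it by~\eqref{eq:common-injection-tail} with $m=N$ and $T=2q$; the hypotheses $3\leq T\leq8N/15$ hold exactly when $2\leq q\leq4N/15$. With $m=N$ the prefactor $\binom m3/\binom N3$ equals $1$, so
\[
    \|R_{q,k}\|_2^2\leq\frac{13}{4N^2}\binom N4^{-(k-2)}.
\]
Taking square roots and halving gives the term $\frac{\sqrt{13}}{4N}\binom N4^{-(k-2)/2}$. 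Adding the level-two bound yields the second entry of~\eqref{eq:sop-query-dependent-concrete}.

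\emph{Small $q$.} For $q=1$, Lemma~\ref{lem:fourier-zhandry-method} restricts $a$ to supports of size at most $2$. The remainder is therefore empty, and only the level-two bound $\pi^2/(6(N-1)^k)$ survives. For $q=0$, $a$ is constant, so the advantage is zero.

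The main point needing care is checking that the range conditions match. The hypothesis $T=2q\leq8N/15$ of Lemma~\ref{lem:extended-injection-moments} is exactly what limits $q$ to $4N/15$. The remaining steps are direct bookkeeping of the constants already derived.
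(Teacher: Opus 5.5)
Your proposal is correct and follows the paper's proof step for step. The first entry is Corollary~\ref{cor:sop-quantum-concrete}. The second comes from the split~\eqref{eq:quantum-fourier-split}, with level two bounded by~\eqref{eq:sop-k-collision-part} and Lemma~\ref{lem:quantum-centred-collisions}, and the remainder by~\eqref{eq:common-injection-tail} with $m=N$ and $T=2q$. Your explicit check that $3\leq 2q\leq 8N/15$ corresponds exactly to $2\leq q\leq 4N/15$ is a detail the paper leaves implicit.
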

\begin{proof}
    The first bound is Corollary~\ref{cor:sop-quantum-concrete}. For the second, the two-input identity~\eqref{eq:sop-k-collision-part}, which also holds for $k=2$, and Lemma~\ref{lem:quantum-centred-collisions} give
    \[
        |\mathbf E_{\unimes}[ah_{2,k}]|
        \leq\frac{\pi^2(2q-1)^3}{6(N-1)^k}.
    \]
    For the remaining levels,~\eqref{eq:common-injection-tail} gives
    \[
        \|R_{q,k}\|_2^2
        =\sum_{t=3}^{2q}\binom Nt\Theta_{t,2k}
        \leq\frac{13}{4N^2}\binom N4^{-(k-2)}.
    \]
    Apply~\eqref{eq:quantum-fourier-split}, retaining its factor $1/2$ on the remainder. For $q=1$, the remainder is empty, and for $q=0$ the oracles are not queried.\qed
\end{proof}

\subsection{Proof of Theorem~\ref{thm:generalised-sop-quantum}}
\label{subsec:truncated-sop-query-dependent}
The same decomposition gives a bound that retains the output-size dependence for the construction from Section~\ref{subsec:trunc-sop}. The additional observation is that the proof of Proposition~\ref{prop:subspace-moment-bound} retains the rank of the span of the character coordinates. Keeping this rank information gives a finer estimate on the remainder than applying the proposition directly.

Throughout this subsection, take $G=\binsp$, $H=\binsp[m]$, $N=2^n$, and $M=2^m$, with $0\leq m\leq n$. Let $\tau\colon G\to H$ be surjective and linear.
\paragraph{Two permutations.}
    For $M=1$, both oracles are the unique function into the trivial group. Assume $M\geq2$. As before, the finitely many orders $N<1024$ can be absorbed into the asymptotic constant. Fix a $q$-query distinguisher $\mca$ with acceptance probability $a$. Let $\nu$ be the density of $\tau\sop[2]$ on $H^G$, and set $W=\tau^\top(H)$. As in the proof of Theorem~\ref{thm:generalised-sop-classical},
    \begin{equation}\label{eq:truncated-full-table-coefficients}
        \widehat{\nu}(\eta)=\funidens[\mcs](\tau^\top\eta)^2.
    \end{equation}
    Here $\tau^\top$ acts coordinatewise and is injective, so it preserves support size. Write
    \[
        \Pi_{\leq2q}(\nu-\indicate[])=h_2+R_q,
        \qquad h_2=\nu_{=2},
        \qquad R_q=\sum_{t=3}^{2q}\nu_{=t}.
    \]
    The pushforward identity~\eqref{eq:sop-two-point-pushforward} gives
    \[
        h_2(f)=\frac{K_H(f)}{(N-1)^2},
        \qquad
        \left|\mathbf{E}_{\unimes}[ah_2]\right|
        \leq O(q^3/N^2),
    \]
    where the second inequality follows from Lemma~\ref{lem:quantum-centred-collisions}.

    For $q=1$, $R_q=0$. For $2\leq q\leq\sqrt N$, Parseval and~\eqref{eq:truncated-full-table-coefficients} give
    \begin{equation}\label{eq:truncated-remainder-mass}
        \|R_q\|_2^2
        =\sum_{\substack{\chi\in W^G\\3\leq|\chi|\leq2q}}
            |\funidens[\mcs](\chi)|^4.
    \end{equation}
    The subspace-averaging identity~\eqref{eq:subspace-main-bound}, restricted to the support levels in~\eqref{eq:truncated-remainder-mass}, gives
    \begin{equation}\label{eq:truncated-rank-averaging}
        \|R_q\|_2^2
        =\sum_{3\leq|\chi|\leq2q}
            p_{n,m}(r(\chi))|\funidens[\mcs](\chi)|^4.
    \end{equation}
    Here $p_{n,m}$ is the same probability function as in~\eqref{eq:subspace-membership-probability}. The restriction to these levels is allowed because the diagonal action of $\glbinsp$ preserves support size. In particular,
    \[
        p_{n,m}(1)=\frac{M-1}{N-1},
        \qquad
        p_{n,m}(r)\leq p_{n,m}(2)=\frac{(M-1)(M-2)}{(N-1)(N-2)}
        \quad(r\geq2).
    \]

    A rank-one tuple over $\field$ has the same nonzero character at every position in its support. Its permutation coefficient vanishes when the support size is odd, by translating all outputs by a vector on which this character takes value $-1$. There are $(N-1)\binom{N}{t}$ rank-one tuples of support size $t$. For $4\leq t\leq2q\leq2\sqrt N<N/2$, Proposition~\ref{prop:eberhard-sparse-maximum} gives $\Lambda_t\leq\binom Nt^{-1/2}$. Thus,
    \begin{equation}\label{eq:truncated-rank-one-mass}
    \begin{aligned}
        \sum_{\substack{3\leq|\chi|\leq2q\\r(\chi)=1}}
            |\funidens[\mcs](\chi)|^4
        &\leq(N-1)\sum_{t=4}^{2q}\binom{N}{t}^{-1}\\
        &=O(N^{-3}).
    \end{aligned}
    \end{equation}
    For the last step, the successive ratio of $\binom{N}{t}^{-1}$ is $(t+1)/(N-t)=O(N^{-1/2})$ throughout $4\leq t<2q\leq2\sqrt{N}$. Here $N\geq1024$ suffices.

    Applying~\eqref{eq:truncated-rank-one-mass} to the rank-one part of~\eqref{eq:truncated-rank-averaging}, and~\eqref{eq:common-injection-tail}, with $k=2$ and $m=N$, to the remaining part, gives
    \begin{align*}
        \|R_q\|_2^2
        &\leq p_{n,m}(1)O(N^{-3})+p_{n,m}(2)O(N^{-2})\\
        &\leq O\left(\frac{M}{N^4}+\frac{M^2}{N^4}\right)
         =O(M^2/N^4).
    \end{align*}
    Hence $\|R_q\|_2=O(M/N^2)$. Now,~\eqref{eq:quantum-fourier-split} yields
    \begin{equation}\label{eq:truncated-local-query-bound}
        \compdist{(q)}{\ket{\tau\sop[2]}}{\ket{\rf}}
        \leq O\left(\frac{q^3+M}{N^2}\right)
        \qquad(q\leq\sqrt{N}).
    \end{equation}
    Again, $R_q=0$ for $q=1$.

    Finally, Corollary~\ref{cor:truncated-sop-quantum-completion} with $k=2$ gives
    \[
        \compdist{(q)}{\ket{\tau\sop[2]}}{\ket{\rf}}
        \leq O(\sqrt{M}/N)\qquad(2q<N).
    \]
    Combining this with~\eqref{eq:truncated-local-query-bound} proves~\eqref{eq:truncated-sop-query-dependent}. For $q>\sqrt{N}$, the minimum selects the uniform bound, since $q^3/N^2>N^{-1/2}\geq\sqrt{M}/N$.

For $M^{1/3}\leq q\leq\sqrt{N}$, the local bound~\eqref{eq:truncated-local-query-bound} is $O(q^3/N^2)$. In particular, for a fixed-size output group, this holds throughout $1\leq q\leq\sqrt{N}$, with the implicit constant allowed to depend on that fixed size. The improvement from $1/N$ to $M/N^2$ in the remainder is a consequence of separating rank-one characters from the characters of rank at least two.

\paragraph{Three or more permutations.}
    The case $M=1$ is immediate, and the finitely many orders $N<1024$ can be absorbed into the constant depending on $k$. Fix a $q$-query distinguisher $\mca$ with acceptance probability $a$, and let $\nu_k$ be the density of $\tau\sop[k]$ on $H^G$. For $q\leq\sqrt N$, split its low-support part into $h_{2,k}+R_{q,k}$. By~\eqref{eq:sop-two-point-pushforward},
    \[
        h_{2,k}(f)=\frac{(-1)^k}{(N-1)^k}K_H(f),
    \]
    and hence Lemma~\ref{lem:quantum-centred-collisions} gives a contribution $O_k(q^3/N^k)$.

    If $q=1$, the remainder is zero. Otherwise, the pullback formula~\eqref{eq:truncated-full-table-coefficients} and the convolution identity show that each coefficient of the $k$-sum is the corresponding coefficient of the two-sum multiplied by $\funidens[\mcs](\chi)^{k-2}$. As above, Lemma~\ref{lem:extended-injection-moments} gives
    \[
        \max_{3\leq|\chi|\leq2q}|\funidens[\mcs](\chi)|
        \leq\binom N4^{-1/2}=O(N^{-2})
        \qquad(q\leq\sqrt N).
    \]
    Together with the estimate $\|R_{q,2}\|_2=O(M/N^2)$ proved above, this gives
    \[
        \|R_{q,k}\|_2
        \leq O_k(N^{-2k+4})\frac{M}{N^2}
        =O_k\left(\frac{M}{N^{2k-2}}\right).
    \]
    Since $M\leq N$ and $k\geq3$, this is at most $O_k(N^{-k})$, and therefore at most $O_k(q^3/N^k)$. This proves the local bound $O_k(q^3/N^k)$ for $q\leq\sqrt N$. Corollary~\ref{cor:truncated-sop-quantum-completion} gives the uniform bound $O_k(\sqrt M/N^{k-1})$. For $q>\sqrt N$, the uniform bound is no larger than $N^{-(k-3/2)}<q^3/N^k$, and the result follows.\qed

\begin{remark}
    The query-dependent refinement uses the exact centred-collision form of the support-two component for sums of independent permutations. For $\mathsf{LXoP}$ this component has a different linear structure, so the same refinement does not follow from the preceding argument.
\end{remark}

\section{A Deutsch-Type One-Query Algorithm}
\label{sec:sop-two-point-attack}

We give the standard two-point phase-query algorithm with an arbitrary
nonzero output mask; see also Bonnetain et al.~\cite{BonnetainLNS2021:Linearization:ASIACRYPT}.
Let $G=\mathbb F_2^n$ and let
\[
    U_f\ket{u}_X\ket{v}_Y
    =\ket{u}_X\ket{v\oplus f(u)}_Y
\]
be the standard quantum oracle.

For distinct $x,x'\in G$ and nonzero $\lambda\in G$, there exists a
one-query quantum algorithm that returns
$b_\lambda(x,x')=\langle\lambda,f(x)\oplus f(x')\rangle$ with
certainty.

The algorithm uses a one-qubit register $B$ and two $n$-qubit
registers $X$ and $Y$.

\begin{framed}
\small
\noindent\textbf{Algorithm.}
\begin{enumerate}[label=\arabic*.,leftmargin=*,nosep]
    \item Prepare
    $\ket{0}_B\ket{0^n}_X\ket{\lambda}_Y$.
    \item Apply $H$ to $B$ and $H^{\otimes n}$ to $Y$.
    \item XOR $x$ into $X$ if $B=0$, and XOR $x'$ into $X$ if $B=1$.
    \item Apply $U_f$ once to $X$ and $Y$.
    \item Undo Step~3.
    \item Apply $H$ to $B$, measure it, and return the outcome.
\end{enumerate}
\end{framed}
Adding $z\in G$ to $H^{\otimes n}\ket{\lambda}$ contributes the phase
$(-1)^{\langle\lambda,z\rangle}$. Hence, after the oracle query and
Step~5, register $B$ is in the state
\[
\begin{aligned}
&\frac{
(-1)^{\langle\lambda,f(x)\rangle}\ket{0}
+
(-1)^{\langle\lambda,f(x')\rangle}\ket{1}
}{\sqrt2}                                                    \\[2mm]
&\qquad =
(-1)^{\langle\lambda,f(x)\rangle}
\frac{
\ket{0}+(-1)^{b_\lambda(x,x')}\ket{1}
}{\sqrt2}
=
(-1)^{\langle\lambda,f(x)\rangle}
H\ket{b_\lambda(x,x')}.
\end{aligned}
\]
Since $H^2=I$, the final Hadamard transform leaves $B$ in
$\ket{b_\lambda(x,x')}$ up to a global phase. Measuring $B$ returns
the required value with certainty. Only Step~4 queries $U_f$.\qed

\section{The Recursive Moment Argument for $\mathsf{LXoP}$}
\label{sec:lxop-recursion-details}
Throughout this appendix, $G=\binsp$ and $N=2^n$. We prove Lemma~\ref{lem:lxop-moments}. The argument follows Dinur's recursive moment bound~\cite{Dinur2025:More-SOP:EUROCRYPT}, beginning with~\eqref{eq:injection-coefficient-merging} and keeping track of a family of masks when bounding its second Fourier moment. We first prove the following more precise estimate for $t\geq2$.

\begin{proposition}\label{prop:lxop-w-fixed-support-moment}
    Fix $w\geq1$, let $\sigma$ be $w$-admissible, and let $t\geq2$
    satisfy $(w+1)t\leq N/8$. Put
    $d_t=\lceil t/2\rceil$. Then
    \begin{equation}\label{eq:lxop-w-fixed-support-moment}
        L_t
        \leq
        2^{d_t+(w+1)t}
        \left(
            \frac{(w+1)t}{N-(w+1)t}
        \right)^{t+d_t}.
    \end{equation}
\end{proposition}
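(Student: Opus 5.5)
We prove \eqref{eq:lxop-w-fixed-support-moment} by iterating the coefficient-merging identity~\eqref{eq:injection-coefficient-merging} exactly $d_t=\lceil t/2\rceil$ times. At each step we track how much the sum of squared coefficients can grow. Set $a=w+1$ and $n_t=at$.

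\emph{Partition.} Partition the construction masks $\alpha$ with all $\alpha^{(i)}\neq0$ by the exact injection support $U$ of $\beta=T_{\sigma,w}(\alpha)$. There are at most $2^{at}$ choices of $U$, which accounts for the factor $2^{(w+1)t}$. Fix $U$, with $|U|=u$, where $2t\leq u\leq at$. By~\eqref{eq:injection-restriction}, every coefficient in this class equals $\funidens[\mci_u](\beta|_U)$.

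\emph{One merging step.} Choose a block $i$ that no previous step has touched, and a coordinate $j\in U$ in that block. Apply~\eqref{eq:injection-coefficient-merging} to eliminate coordinate $j$. This writes the coefficient as $-1/(N-u+1)$ times a sum of at most $u-1\leq n_t$ coefficients. In each term, $\beta_j$ has been added to some other coordinate $j'$; over $\mathbb F_2$, characters multiply by XOR of masks. By Cauchy--Schwarz, the square of the coefficient is at most $n_t/(N-n_t)^2$ times the sum of the squares of the $\le n_t$ branch coefficients.

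Summing over masks in the class then requires an injectivity statement. For each fixed branch, that is, each fixed choice of $j'$, the map from the old mask to the merged mask must be injective, or at most boundedly many-to-one. This is the key recovery step. After merging, we know the new label at $j'$, $\beta_{j'}+\beta_j$, and all other labels. The label $\beta_j$ is determined by the parity check~\eqref{eq:lxop-w-parity-check} of block $i$, namely $C_j\beta_j=\sum_{l\neq j}C_l\beta_l$, together with invertibility of $C_j$, provided every other label of block $i$ is known.

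If $j'$ lies outside block $i$, the other labels of block $i$ are untouched, and $\beta_j$ is recovered directly. If $j'$ lies inside block $i$, the check reads
\[
C_j\beta_j+C_{j'}\beta_{j'}=C_{j'}(\beta_{j'}+\beta_j)+(C_j+C_{j'})\beta_j .
\]
Here $C_j+C_{j'}=A^{w-j'}(I+A^{j'-j})$ is invertible because $\sigma$ is $w$-admissible. So $\beta_j$ is again recovered, since the merged sum is known.

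Thus each branch sum of squares is at most the corresponding sum for a mask family on the reduced injection, which has size $u-1$ or $u-2$. Hence one step costs a factor $n_t^2/(N-n_t)^2$, up to a factor $2$ for whether the merged coordinate became principal. The merge alters block $i$ and possibly one other block, so we mark both as touched. Untouched blocks are therefore always available for $d_t$ steps.

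\emph{Termination and accounting.} After $d_t$ steps, bound the remaining sum of squared coefficients crudely by Parseval. The total over all masks of squares of injection coefficients on a support of size $u'\leq n_t$ is at most $N^{u'}/(N)_{u'}-1$. This is $O(n_t^2/N)$, or we bound it using Proposition~\ref{prop:eberhard-sparse-maximum}. The bookkeeping must produce the exponent $t+d_t$ rather than $2d_t$. We expect to get it by using the $d_t$ merges, each contributing $(n_t/(N-n_t))$ from the denominator and branch count. The remaining $t$ powers would come from noting that each untouched nonzero block has at least two nonzero entries. This lets us bound the residual second moment by $(n_t/(N-n_t))^{t}$, via the exact moments $\Theta$ and the condition $n_t\leq N/8$ to sum geometric series. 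This accounting, matching the exponent exactly, is the main obstacle. If the residual bound falls short, I would instead perform merges on all blocks and extract one factor $n_t/(N-n_t)$ per block from the blocks of size at least two. Collecting the $2^{d_t}$ branch-type factors and $2^{(w+1)t}$ supports gives~\eqref{eq:lxop-w-fixed-support-moment}.
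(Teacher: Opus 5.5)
Your overall structure matches the paper's proof. You partition by exact injection support (the factor $2^{(w+1)t}$) and run a merging recursion of depth $d_t=\lceil t/2\rceil$ with the primary coordinate in an untouched block. You recover the removed label from that block's parity check (via $C_j$, or via $C_i+C_j$ and $w$-admissibility), apply Cauchy--Schwarz at each node, and pay a factor $2$ for the zero/nonzero outcome of the merged entry.

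The gap is the terminal estimate. You flag it as the main obstacle and leave it unresolved, and it is exactly where the exponent $t+d_t$ comes from. The missing idea is to bound each leaf by a \emph{fixed-support} moment and then apply Dinur's estimate $\Theta_k\le\bigl(k/(N-k)\bigr)^{k/2}$ for $1\le k\le N/2$.

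The zero/nonzero split keeps one common support inside each child, and recovery keeps every node map injective. So the trimmed leaf masks are distinct elements of $(G\setminus\{0\})^{k_v}$, and their squared coefficients sum to at most $\Theta_{k_v}$. With $k_0-2d_t\le k_v\le k_0\le N/8$, this is at most $\bigl(k_0/(N-k_0)\bigr)^{k_0/2-d_t}$.

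Each merge costs at most $2k_0\cdot k_0/(N-k_0)^2$, i.e.\ two powers of $k_0/(N-k_0)$, as you correctly computed in your merging step. Your final paragraph's count of one power per merge contradicts this. The tree therefore gives $2^{d_t}\bigl(k_0/(N-k_0)\bigr)^{k_0/2+d_t}$. Then $k_0\ge 2t$ gives the exponent $t+d_t$; this is the two-nonzero-entries fact, used at the root support and not at the leaf. So the merges supply $2d_t$ powers and the leaf needs only $t-d_t$, not $t$.

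None of your proposed leaf bounds works.

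The Parseval total $N^{u'}/(N)_{u'}-1$ sums over all supports. It is $O(u'^2/N)$ only for $u'$ up to about $\sqrt N$. It is exponentially large in $N$ when $u'$ is a constant fraction of $N$, which $(w+1)t\le N/8$ allows. Even in the good range it leaves you with about $2d_t+1$ powers of $1/N$, short of $t+d_t$ once $t\ge4$.

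The sparse maximum $\Lambda_k\le\binom Nk^{-1/2}$, combined with the trivial count of at most $N^{k_v-1}$ leaf masks with nonzero coefficient, gives at best one power of $1/N$ at the leaf.

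Extracting $\bigl(n_t/(N-n_t)\bigr)^t$ at the leaf from untouched blocks is not available. After $\lceil t/2\rceil$ merges, each of which may touch two fresh blocks, no untouched block need remain, and the leaf support may have dropped to $k_0-2d_t$.

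Your fallback of merging on every block is not justified either. Recovery requires an untouched block, so only $\lceil t/2\rceil$ merges are guaranteed along every branch.
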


We use Dinur's fixed-support second-moment estimate~\cite[Lemma~2]{Dinur2025:More-SOP:EUROCRYPT},
\begin{equation}\label{eq:dinur-fixed-support-second-moment}
    \Theta_k\leq\left(\frac{k}{N-k}\right)^{k/2},
    \qquad 1\leq k\leq N/2.
\end{equation}

\subsection{From Coefficient Merging to the Moment Bound}
\label{subsec:lxop-recursive-moment-proof}
Let $S$ be a finite set and $T\colon S\to G^r$, with $1\leq r\leq N$, be injective. Suppose the masks $T(\alpha)$ have the same support, of size $k_0$, and fix an integer $d\geq0$ with $2d<k_0\leq N/8$. At each nonempty internal node, the primary coordinate is chosen from the common support using only the initial data and the previous merge indices and zero/nonzero outcomes. We assume that, for every secondary choice and each nonempty outcome, the removed entry is uniquely determined by the child mask and these known indices and outcomes. We derive the bound under this assumption, and verify it for the LXoP family in Appendix~\ref{subsec:lxop-merge-recovery}.

A current mask is a tuple $\beta\in G^r$. For distinct indices $j,i$ in its support, define the merged mask by
\[
    \beta^{(j,i)}_\ell=
    \begin{cases}
        0,&\ell=j,\\
        \beta_i+\beta_j,&\ell=i,\\
        \beta_\ell,&\ell\notin\{i,j\}.
    \end{cases}
\]
We keep the zero coordinates rather than delete them, so the positions retain their original names. This does not affect the Fourier coefficient, by~\eqref{eq:injection-restriction}. If the support of $\beta$ is $J$, with $|J|=k\geq2$, Equation~\eqref{eq:injection-coefficient-merging}, applied to the nonzero entries and with the primary entry placed last, gives
\[
    \funidens[\mci_r](\beta)
    =-\frac1{N-k+1}\sum_{i\in J\setminus\{j\}}
        \funidens[\mci_r](\beta^{(j,i)}).
\]
This is exactly the identity used in~\cite[Proposition~8]{Dinur2025:More-SOP:EUROCRYPT}, attributed there to~\cite[Section~4]{Eberhard2017:SOP:arXiv}. Taking absolute values gives the maximum recurrence used in Appendix~\ref{sec:extended-injection-proof}. For the present purpose, we instead square the identity and apply Cauchy--Schwarz:
\begin{equation}\label{eq:lxop-moment-one-merge}
    |\funidens[\mci_r](\beta)|^2
    \leq\frac{k-1}{(N-k+1)^2}
        \sum_{i\in J\setminus\{j\}}
            |\funidens[\mci_r](\beta^{(j,i)})|^2.
\end{equation}
This is the one-step estimate of~\cite[Proposition~9]{Dinur2025:More-SOP:EUROCRYPT}. The distinction is that we will sum it over the structured family, rather than replace each term by a maximum coefficient.

\paragraph{A recursion node.}
At a node $v$, let $S_v\subseteq S$ be the set of original masks still under consideration, and let $T_v$ map them to their current masks. Write $J_v$ for their common support and $k_v=|J_v|$. At the root, these are $S,T$, and the initial support of size $k_0$. Choose one primary index $j\in J_v$ for this entire family, using the initial data and the merge history as above. Before depth $d$, we have $k_v\geq k_0-2(d-1)>2$, so a secondary index is available. For each secondary index $i\in J_v\setminus\{j\}$, put
\[
    T_{v,i}(\alpha)=T_v(\alpha)^{(j,i)},
\]
and split $S_v$ into
\[
\begin{aligned}
    S_{v,i,0}&=\{\alpha\in S_v:T_{v,i}(\alpha)_i=0\},\\
    S_{v,i,1}&=\{\alpha\in S_v:T_{v,i}(\alpha)_i\neq0\}.
\end{aligned}
\]
The zero branch has common support $J_v\setminus\{j,i\}$, while the nonzero branch has common support $J_v\setminus\{j\}$. Thus the support remains fixed within each child, even though the two children lie at different Fourier levels. Summing~\eqref{eq:lxop-moment-one-merge} gives
\begin{equation}\label{eq:lxop-moment-node-bound}
\begin{aligned}
    &\sum_{\alpha\in S_v}|\funidens[\mci_r](T_v(\alpha))|^2\\
    &\quad\leq\frac{k_v-1}{(N-k_v+1)^2}
        \sum_{i\in J_v\setminus\{j\}}\sum_{b\in\{0,1\}}
        \sum_{\alpha\in S_{v,i,b}}
            |\funidens[\mci_r](T_{v,i}(\alpha))|^2.
\end{aligned}
\end{equation}
For a fixed $i$, the two branches partition $S_v$. Different choices of $i$ need not give disjoint images; each such choice is already a separate term in~\eqref{eq:lxop-moment-node-bound}.

\paragraph{Why recovery is needed.}
By the recovery assumption, the removed primary entry $\beta_j$ is determined by the child mask and the known branch. The secondary value is then recovered from
\[
    \beta_i=(\beta_i+\beta_j)+\beta_j,
\]
and the other coordinates have not changed. Hence the full parent mask is determined. Starting from the injective map $T$, this proves inductively that the map at each node is injective. This is Dinur's applicability condition~\cite[Proposition~11]{Dinur2025:More-SOP:EUROCRYPT}. It cannot be dropped, since initial injectivity alone does not imply injectivity after a merge. For example, all masks of the form $(\gamma,\gamma)$ collapse to $(0,0)$ when their two coordinates are merged.

At a leaf $v$, trim the common zero coordinates. The resulting masks are distinct elements of $(G\setminus\{0\})^{k_v}$, so
\begin{equation}\label{eq:lxop-recursion-leaf-moment}
    \sum_{\alpha\in S_v}|\funidens[\mci_r](T_v(\alpha))|^2
    \leq\Theta_{k_v}.
\end{equation}
No factor $\binom r{k_v}$ is needed here, because the leaf has one fixed support. Without recovery, a trimmed mask could occur repeatedly on the left, and the comparison with $\Theta_{k_v}$ would not follow.

\paragraph{Summing the recursion.}
We stop at depth $d$ and apply the fixed-support estimate~\eqref{eq:dinur-fixed-support-second-moment} at each leaf. After $d$ merges, every nonempty leaf has
\[
    k_0-2d\leq k_v\leq k_0-d.
\]
In particular, $k_v>0$ by $2d<k_0$, and $k_v\leq k_0\leq N/8$. Since $k_0/(N-k_0)<1$, the leaf contribution is at most
\[
    \Theta_{k_v}
    \leq\left(\frac{k_v}{N-k_v}\right)^{k_v/2}
    \leq\left(\frac{k_0}{N-k_0}\right)^{k_0/2-d}.
\]
Each node has at most $2(k_v-1)\leq2k_0$ children, and the factor preceding the child sums in~\eqref{eq:lxop-moment-node-bound} is at most $k_0/(N-k_0)^2$. There are therefore at most $(2k_0)^d$ nonempty leaves, with exactly $d$ such factors along each root-to-leaf path. Iterating~\eqref{eq:lxop-moment-node-bound} and then applying~\eqref{eq:lxop-recursion-leaf-moment} gives
\begin{equation}\label{eq:dinur-recursive-moment}
\begin{aligned}
    \sum_{\alpha\in S}|\funidens[\mci_r](T(\alpha))|^2
    &\leq(2k_0)^d\left(\frac{k_0}{(N-k_0)^2}\right)^d
        \left(\frac{k_0}{N-k_0}\right)^{k_0/2-d}\\
    &=2^d\left(\frac{k_0}{N-k_0}\right)^{k_0/2+d}.
\end{aligned}
\end{equation}
For $d=0$, the same calculation is just the leaf estimate. This is the weaker of the two bounds in~\cite[Lemma~3]{Dinur2025:More-SOP:EUROCRYPT}, which is sufficient for the general-$w$ analysis. Relative to the direct bound $\Theta_{k_0}\leq(k_0/(N-k_0))^{k_0/2}$, the right-hand side gains a factor $(2k_0/(N-k_0))^d$. The purpose of the block relations is to make this recursion possible for sufficiently many steps.

\subsection{Recovering Labels from an Untouched Block}
\label{subsec:lxop-merge-recovery}
We retain $C_j=A^{w-j}$ from~\eqref{eq:lxop-w-parity-check}. At a fixed recursion node, choose a primary coordinate in an untouched block, meaning that none of its entries has been selected as primary or secondary in the previous merges. Suppressing the block index, write $\beta=(\beta_0,\ldots,\beta_w)$ for its entries before the merge, and $\beta'_\ell$ for its entries after the merge. The indices in the following two calculations are local to this block. The relation
\[
    \sum_{\ell=0}^w C_\ell\beta_\ell=0
\]
is available because the block is untouched. We do not assume that every block still satisfies this relation after a merge. In fact, the merged masks need not remain in the image of $T_{\sigma,w}$.

\paragraph{The secondary coordinate is in another block.}
Only the primary coordinate $j$ changes within the chosen block. Thus $\beta'_\ell=\beta_\ell$ for $\ell\neq j$, and
\[
    \beta_j=C_j^{-1}\sum_{\ell\neq j}C_\ell\beta'_\ell.
\]
The secondary value in the other block is then recovered by adding $\beta_j$ to its post-merge value. No relation in that other block is needed.

\paragraph{The secondary coordinate is in the same block.}
Let $i\neq j$ be its index. Since $\beta'_i=\beta_i+\beta_j$, substitution in the block relation gives
\[
    (C_i+C_j)\beta_j
    =C_i\beta'_i+\sum_{\ell\notin\{i,j\}}C_\ell\beta'_\ell.
\]
Now
\[
    C_i+C_j
    =A^{w-\max\{i,j\}}(I+A^{|i-j|})
\]
is invertible: $A=\sigma^\top$ is invertible, and $I+A^{|i-j|}=(I+\sigma^{|i-j|})^\top$ is invertible by $w$-admissibility. Consequently,
\[
    \beta_j=(C_i+C_j)^{-1}
       \left(C_i\beta'_i+\sum_{\ell\notin\{i,j\}}C_\ell\beta'_\ell\right),
    \qquad
    \beta_i=\beta'_i+\beta_j.
\]
This also covers $\beta'_i=0$. The zero value and its coordinate position are retained, and do not remove any information needed by the recovery formula. These formulas determine the parent mask within each branch, establishing the recovery assumption used to derive~\eqref{eq:dinur-recursive-moment}.

\paragraph{An example with two output blocks.}
For $w=2$, one block satisfies
\[
    A^2\beta_0+A\beta_1+\beta_2=0.
\]
If coordinate $0$ is primary and coordinate $1$ is secondary, the child block is $(0,\gamma,\delta)$, where $\gamma=\beta_0+\beta_1$ and $\delta=\beta_2$. The parent is recovered by
\[
    \beta_0=(A^2+A)^{-1}(A\gamma+\delta),
    \qquad \beta_1=\gamma+\beta_0,
    \qquad \beta_2=\delta.
\]
The formula applies equally when $\gamma=0$. Here $A^2+A=A(I+A)$ is invertible. This is precisely where the linear combining map is needed: it lets us reverse a merge that would otherwise lose one character label.

\paragraph{Proof of Proposition~\ref{prop:lxop-w-fixed-support-moment}.}
Fix $t\geq2$ with $(w+1)t\leq N/8$, and put $r=(w+1)t$ and $d_t=\ceil{t/2}$. Let
\[
    S_t=\{\alpha\in(G^w)^t:\alpha^{(i)}\neq0\text{ for all }i\in[t]\}.
\]
Partition $S_t$ according to the exact support of $T_{\sigma,w}(\alpha)$. There are at most $2^r$ support patterns. Fix a nonempty part $S$ with injection support $J_0\subseteq[r]$, and let $T$ be the restriction of $T_{\sigma,w}$ to $S$. The map $T$ is injective, and all its masks have the same support, of size $k_0=|J_0|$. Each nonzero construction block gives between $2$ and $w+1$ nonzero injection entries, so $2t\leq k_0\leq(w+1)t\leq N/8$. Also $2d_t\leq t+1<2t\leq k_0$, since $t\geq2$.

Each merge touches at most two original blocks. At depth $\ell<d_t$, at least $t-2\ell>0$ blocks have not been touched. Their original entries are still present, and each such block has a nonzero entry. Choose the first untouched block and its first nonzero coordinate as primary. This choice is the same for all masks at the node and is determined by the initial support and the previous merge indices and outcomes, not by the values of the entries. The preceding recovery formulas apply for every secondary choice, including when the merged entry is zero. Thus each child map remains injective, and the recursion can be continued to depth $d_t$. No primary selection is needed at a leaf.

Applying~\eqref{eq:dinur-recursive-moment} to this fixed part gives
\[
\begin{aligned}
    \sum_{\alpha\in S}
        \left|\funidens[\mci_r](T_{\sigma,w}(\alpha))\right|^2
    &\leq 2^{d_t}\left(\frac{k_0}{N-k_0}\right)^{k_0/2+d_t}\\
    &\leq 2^{d_t}\left(\frac{(w+1)t}{N-(w+1)t}\right)^{t+d_t}.
\end{aligned}
\]
For the last inequality, the ratio is at most $((w+1)t)/(N-(w+1)t)<1$, while $k_0/2\geq t$. Summing over at most $2^{(w+1)t}$ support patterns and using the pullback equality in~\eqref{eq:lxop-w-fixed-support-second-moment} proves~\eqref{eq:lxop-w-fixed-support-moment}. To obtain~\eqref{eq:lxop-w-fixed-support-simplified}, note that $d_t\leq3t/4$, $t+d_t\geq3t/2$, and $N-(w+1)t\geq7N/8$. Since $(w+1)t/(N-(w+1)t)<1$, the extra factors are absorbed into a constant $C_w$ raised to the power $t$.\qed

\subsection{The One-Input Moment}
\label{subsec:lxop-one-input-moment}
We give the remaining calculation for~\eqref{eq:lxop-w-level-one}. Put $a=w+1$ and first assume $N\geq8a$. For this single construction block, suppress the block index and partition the nonzero masks $\alpha\in G^w$ according to the support of their pullback $\beta=T_{\sigma,w}(\alpha)$. Each support has size between $2$ and $a$. Translation invariance of the injection distribution, as recorded in the preliminaries, gives
\[
    \funidens[\mci_a](\beta)=0
    \quad\text{unless}\quad
    \sum_{j=0}^w\beta_j=0.
\]
For a pullback of support size two, at positions $i,j$, this requires $\beta_i=\beta_j$. The block relation would then give
\[
    (C_i+C_j)\beta_i=0,
\]
contradicting $w$-admissibility and $\beta_i\neq0$. Thus these coefficients vanish.

Consider next a fixed support of size three, at positions $i,j,\ell$. A nonzero coefficient requires $\beta_\ell=\beta_i+\beta_j$. Substitution in~\eqref{eq:lxop-w-parity-check} gives
\[
    (C_i+C_\ell)\beta_i+(C_j+C_\ell)\beta_j=0.
\]
Both matrices are invertible. After choosing the nonzero value $\beta_i$, the values $\beta_j$ and $\beta_\ell$ are determined. Hence there are at most $N-1$ contributing masks for this support. By Appendix~\ref{sec:explicit-max-moment}, their nonzero coefficient has magnitude $2/((N-1)(N-2))$. Their total second moment is therefore at most
\[
    (N-1)\left(\frac{2}{(N-1)(N-2)}\right)^2
    =\frac{4}{(N-1)(N-2)^2}
    =O(N^{-3}).
\]
The injectivity of $T_{\sigma,w}$ ensures that each such injection mask corresponds to at most one construction mask.

For a fixed pullback support of size $k_0\geq4$, apply the recursive estimate~\eqref{eq:dinur-recursive-moment} with $r=a$ and $d=1$. Indeed, $2<k_0\leq a\leq N/8$, and the sole original block is untouched before that merge, so the recovery argument of Appendix~\ref{subsec:lxop-merge-recovery} applies. The contribution is at most
\[
    2\left(\frac{k_0}{N-k_0}\right)^{k_0/2+1}
    =O_w(N^{-3}).
\]
Since $w$ is fixed, there are only $O_w(1)$ support patterns. Summing their contributions gives $L_1=O_w(N^{-3})$. The finitely many $N<8a$ are absorbed into the implicit constant. All these estimates are uniform over the $w$-admissible maps under consideration.

\section{Proofs of Concrete Bounds for Dinur's $\mathsf{LXoP}$}
\label{sec:proof-concrete-bound-lxop}
We prove the two bounds in Corollary~\ref{cor:lxop-quantum-concrete}.
\subsection{One-Block Output}
\label{subsec:proof-concrete-quantum-lxop}
The case $q=0$ is immediate; assume $q\geq1$. Let $\xi$ denote the density of the complete $\sop[\sigma,1]$ function on its actual domain, of size $D=N/2$. By input symmetry and the definition of $\xi^{(\sigma,1)}_{n,t}$,
\begin{equation*}
    \sum_{|\alpha|=t}|\widehat\xi(\alpha)|^2
    =\binom Dt
    \sum_{\alpha\in(G\setminus\{0\})^t}
    |\widehat{\xi^{(\sigma,1)}_{n,t}}(\alpha)|^2.
\end{equation*}
The $t=1$ term is zero by the support-two calculation in Appendix~\ref{subsec:lxop-one-input-moment}, as also shown in~\cite[Proposition~15]{Dinur2025:More-SOP:EUROCRYPT}. For $t\geq2$,~\cite[Proposition~16]{Dinur2025:More-SOP:EUROCRYPT} gives
\begin{equation*}
    \sum_{\alpha\in(G\setminus\{0\})^t}
    |\widehat{\xi^{(\sigma,1)}_{n,t}}(\alpha)|^2
    \leq
    \frac{2^{3t/2+3c_t}t^{t+2c_t}(t-2c_t)^{t/2-c_t}}
    {(N-2t)^{3t/2+c_t}},
\end{equation*}
where $c_t=0$ for even $t$ and $c_t=1/2$ for odd $t$. Proposition~16 applies for $2\leq t\leq N/16$, which contains every level used below. Suppose $2\leq t\leq N/32$. Using $D=N/2$, $\binom Dt\leq(eD/t)^t$, $N-2t\geq15N/16$, and $c_t\leq1/2$, we obtain
\[
    \sum_{|\alpha|=t}|\widehat\xi(\alpha)|^2
    \leq
    2^{3/2}t
    \left(
        e\sqrt2\left(\frac{16}{15}\right)^{3/2}
        \sqrt{\frac tN}
    \right)^t.
\]
Since $e^2(16/15)^3<9$ and $t/N\leq1/32$,
\[
    \sum_{|\alpha|=t}|\widehat\xi(\alpha)|^2
    \leq\frac{51}{N}t^2\left(\frac34\right)^{t-2}.
\]
Therefore, for $q\leq N/64$,
\[
    \sum_{1\leq|\alpha|\leq2q}|\widehat\xi(\alpha)|^2
    \leq\frac{51}{N}\sum_{t=2}^\infty t^2\left(\frac34\right)^{t-2}
    =\frac{7548}{N}.
\]
Theorem~\ref{thm:quantum-fourier} then yields the first bound in Corollary~\ref{cor:lxop-quantum-concrete}.\qed

\subsection{Two-Block Output}
\label{subsec:proof-concrete-quantum-lxop-two}
The case $q=0$ is immediate; assume $q\geq1$. Let $\xi$ denote the density of the complete $\sop[\sigma,2]$ function, whose actual domain has size $D=N/4$. In the one-input calculation of Appendix~\ref{subsec:lxop-one-input-moment}, exactly the $N-1$ construction masks $(\gamma,\gamma)$ with $\gamma\neq0$ contribute. Each has coefficient $2/((N-1)(N-2))$. Thus, as in~\cite[Lemma~8]{Dinur2025:More-SOP:EUROCRYPT},
    \begin{equation}\label{eq:lxop-2-first-moment}
        D\sum_{\alpha\in G^2\setminus\{0\}}
        \left|\widehat{\xi^{(\sigma,2)}_{n,1}}(\alpha)\right|^2
        =\frac{N}{(N-1)(N-2)^2}.
    \end{equation}
    For $2\leq t\leq N/32$, Proposition~19 of the same work, together with input symmetry, gives
    \[
        \sum_{|\alpha|=t}|\widehat\xi(\alpha)|^2
        \leq\binom Dt\,2^{7t/2+3c_t}
            \left(\frac{t}{N-2t}\right)^{3t/2+c_t}
        \eqqcolon A_t,
    \]
    where $c_t=0$ for even $t$ and $c_t=1/2$ for odd $t$. Since $2q\leq N/192$, this estimate applies at every required level.

    For two consecutive levels of the same parity with $4\leq t$ and $t+2\leq2q$, a direct calculation gives
    \begin{equation}\label{eq:lxop-2-moment-interpolation}
    \begin{split}
        \frac{A_{t+2}}{A_t}
        ={}&2^7\frac{(D-t)(D-t-1)}{(t+1)(t+2)}
        \left(\frac{t+2}{N-2t-4}\right)^3\\
        &\qquad\cdot
        \left(\frac{(t+2)(N-2t)}{t(N-2t-4)}\right)^{3t/2+c_t}
        <0.855.
    \end{split}
    \end{equation}
    We give some details for the uniform numerical bound. For fixed $t$, the displayed expression decreases with $N$ when $N\geq192(t+2)$. Substituting $N=192(t+2)$ and using $c_t\leq1/2$ bounds it by
    \[
        \frac{128}{190^3}
        \frac{(47t+96)(47t+95)}{(t+1)(t+2)}
        \left(1+\frac{192}{95t}\right)^{3t/2+1/2}
        <\frac{128\cdot47^2}{190^3}e^{288/95}
        <0.855.
    \]
    The middle inequality holds for $t\geq4$, using $\log(1+x)\leq x-x^2/(2(1+x))$. This proves the bound uniformly in $N$ and $t$.

    We next evaluate the first few levels. For $N\geq2048$, their scaled upper bounds decrease with $N$, and their values at $N=2048$ give
    \[
        NA_2<32.126,\qquad NA_3<1.277,\qquad
        NA_4<5.397,\qquad NA_5<0.406.
    \]
    The right-hand side of~\eqref{eq:lxop-2-first-moment} is smaller than $0.001/N$. Using the ratio $0.855$ in~\eqref{eq:lxop-2-moment-interpolation}, separately for the even and odd tails, yields
    \begin{align*}
        \sum_{1\leq|\alpha|\leq2q}|\widehat\xi(\alpha)|^2
        &\leq\frac{N}{(N-1)(N-2)^2}+\sum_{t=2}^{2q}A_t\\
        &<\frac1N\left(0.001+32.126+1.277
                  +\frac{5.397+0.406}{1-0.855}\right)
        <\frac{74}{N}.
    \end{align*}
    Missing low levels when $q$ is small can be added as positive upper bounds. For $N=1024$, we have $q\leq2$, and direct evaluation of the levels $t\leq4$ gives total second Fourier moment less than $46/N<74/N$. Theorem~\ref{thm:quantum-fourier} now proves the second bound in Corollary~\ref{cor:lxop-quantum-concrete}.\qed

\fi

\end{document}